%% file: main.tex
\documentclass[letterpaper,twocolumn,10pt]{article}
\usepackage{usenix}

\usepackage{tikz}
\usepackage{amsmath}

\usepackage{filecontents}
\input{packages}

\usepackage[affil-it]{authblk}

\begin{document}

\date{}

\title{\toolname: Towards Verifiable LLM Routing}

\author[1]{Mengyu Yao}
\author[2]{Ziqi Zhang}
\author[1]{Shaofei Li}
\author[1]{Yifeng Cai}
\author[1]{Ding Li}
\author[1]{Yao Guo}
\author[3]{Ning Luo}

\affil[1]{MOE Key Lab of HCST (PKU), School of Computer Science, Peking University}
\affil[2]{Department of Computer Science, University of Illinois Urbana-Champaign}
\affil[3]{Department of Electrical and Computer Engineering, University of Illinois Urbana-Champaign}

\thispagestyle{empty}

\maketitle

\begin{abstract}

Large language model (LLM) routers are increasingly deployed to optimize quality, latency, and cost through dynamic model selection and request rewriting. However, a compromised router can cause ineligible selection on the control plane, unendorsed rewriting or unfaithful relay on the data plane, undermining LLM routing integrity. 
We present \toolname, a verifiable LLM routing framework. Under an endorsed public context binding a private semantic index and rewrite banks, \toolname combines ZK proofs with a provider-issued serving certificate to establish integrity across both planes while keeping the router's private routing assets hidden from clients. For model selection, we design an efficient ZK proof for semantic search. For rewriting, local checks enforce message preservation and additions from committed banks.
On real-world workloads, \toolname's model-selection and rewrite proofs average 21.92~s and 4.24~s, yielding 1,307$\times$ and 101$\times$ speedups over the fastest completing baselines. Relative to serving without verification, \toolname adds only 4.08~s under per-turn selection and 0.11~s under session-pinned selection.
\end{abstract}

\input{section/intro_v4}

\input{section/background}
\input{section/problem_statement}

\input{section/buildingblock_v2}
\input{section/method_v3}
\input{section/evaluation_v2}

\input{section/discussion}

\input{section/conclusion}
\appendix


\bibliographystyle{plainurl}
\bibliography{reference}

\input{section/appendix}

\end{document}

%% file: packages.tex
\usepackage{amsmath,amssymb,amsfonts}
\usepackage{graphicx}
\usepackage{textcomp}
\def\BibTeX{{\rm B\kern-.05em{\sc i\kern-.025em b}\kern-.08em
    T\kern-.1667em\lower.7ex\hbox{E}\kern-.125emX}}

\usepackage[all]{nowidow}
\usepackage{placeins}

\usepackage{soul}
\usepackage{listings}
\usepackage{url}
\usepackage{makecell}
\usepackage{hyperref}
\usepackage{adjustbox}
\usepackage{xspace}
\usepackage{multirow}
\usepackage{subfloat}
\usepackage{balance}
\usepackage{tikz}
\usepackage{ctable}
\usepackage{listings}
\usepackage[nobiblatex]{xurl}
\usepackage{mathrsfs}
\usepackage[normalem]{ulem}
\useunder{\uline}{\ul}{}

\usepackage{algpseudocode}
\usepackage{caption}

\usepackage{acronym}
\usepackage{fancybox}
\usepackage{verbatim}
\usepackage[normalem]{ulem}
\usepackage{float}
\usepackage{newfloat}
\usepackage{mdframed}

\usepackage{booktabs}

\usepackage{subfigure}
\usepackage[flushleft]{threeparttable}
\usepackage{booktabs}
\usepackage{tabularray}

\usepackage[utf8]{inputenc}
\usepackage[skins]{tcolorbox}
\usepackage{amsthm}

\newtheorem{lemma}{Lemma}

\usepackage{enumitem}
\usepackage{pifont}
\usepackage{tcolorbox}
\usepackage{subcaption}
\usepackage{multirow}
\usepackage[normalem]{ulem}
\usepackage{algorithm}
\usepackage{algpseudocode}

\usepackage{xspace}
\usepackage{ulem}

\definecolor{comp1}{RGB}{212,235,255}
\definecolor{comp2}{RGB}{255,245,200}
\definecolor{comp3}{RGB}{255,228,196}

\definecolor{goodmodbg}{RGB}{226,242,220}
\definecolor{tamperdelbg}{RGB}{255,245,200}
\definecolor{tamperinsbg}{RGB}{212,235,255}
\definecolor{tamperrwbg}{RGB}{255,228,196}

\usepackage{soul}

\newcommand{\ControlPlane}{\emph{control plane}}
\newcommand{\DataPlane}{\emph{data plane}}

\newcommand{\clientreq}{R_c}
\newcommand{\receivedreq}{R_m}
\newcommand{\rid}{\mathsf{rid}}

\newcommand{\encoder}{\mathsf{Encoder}}
\newcommand{\ustep}{N_{\mathsf{upper}}^{\max}}
\newcommand{\score}{\mathsf{score}}

\newcommand{\trans}{\mathsf{T}}
\newcommand{\gapset}{\mathsf{MsgBank}}
\newcommand{\fragset}{\mathsf{FragBank}}
\newcommand{\indexset}{\mathcal{I}}
\newcommand{\rtable}{\mathcal{T}_r}
\newcommand{\dtable}{\mathcal{T}_s}
\newcommand{\gtable}{\mathcal{T}_g}

\newcommand{\ZKProve}{\mathsf{ZKProve}}
\newcommand{\ZKVerify}{\mathsf{ZKVerify}}

\newcommand{\LUT}{\mathsf{LUT}}
\newcommand{\MSetEq}{\mathsf{MSetEq}}

\newcommand{\Commit}{\mathsf{Com}}

\usepackage{amsthm}
\newtheorem{theorem}{Theorem}
\usepackage{xcolor}

\definecolor{RiseGreen}{RGB}{200,230,200}
\definecolor{FallGray}{RGB}{230,230,230}

\usepackage{siunitx}
\usepackage{graphicx}

\newcommand{\toolname}{\textsc{VeriRouter}\xspace}
\newtheorem{definition}{Definition}

\definecolor{continuation}{RGB}{237, 125, 49}
\definecolor{keyword}{RGB}{68, 114, 196}
\definecolor{ours}{RGB}{84,130,53}
\definecolor{warn}{RGB}{216, 23, 23}
\definecolor{good}{RGB}{20, 151, 12}
\definecolor{oommuted}{RGB}{135,100,100}
\usepackage{fontawesome5}

\newcommand{\StackPushSeq}{\mathsf{PushSeq}}
\newcommand{\StackPushLen}{N_{\mathsf{push}}^{\mathsf{stk}}}
\newcommand{\StackPushItem}[1]{u_{#1}}

\newcommand{\FinStack}{\mathsf{Final}_{\mathsf{stk}}}

\newcommand{\StackOrd}{\prec_{\mathsf{stk}}}

\newcommand{\EvPush}[1]{\mathsf{Push}(#1)}
\newcommand{\EvPop}[1]{\mathsf{Pop}(#1)}

\newcommand{\Trace}{\mathsf{Trace}}

\newcommand{\PoppedItem}[2]{\mathsf{popped}_{#1,#2}}
\newcommand{\NumPops}[1]{\mathsf{nPops}_{#1}}
\newcommand{\NumPopsVec}{\mathbf{\mathsf{nPops}}}

\newcommand{\ParentTab}{\mathsf{ParentMap}}
\newcommand{\ParentOf}[1]{\mathsf{parent}(#1)}
\newcommand{\NoParent}{\bot}

\newcommand{\TopTrace}{\mathsf{TopTrace}}
\newcommand{\StackTop}[2]{\mathsf{top}_{#1,#2}}

\newcommand{\StackPushLenMax}{N_{\mathsf{push}}^{\mathsf{stk},\max}}
\newcommand{\NumPopsMax}{N_{\mathsf{popblk}}^{\max}}
\newcommand{\FinLenMax}{N_{\mathsf{fin}}^{\mathsf{stk},\max}}

\newcommand{\PriorityOrd}{\prec_{\mathsf{pq}}}
\newcommand{\HigherPrio}{\succ_{\mathsf{pq}}}

\newcommand{\PushLen}{N_{\mathsf{push}}}
\newcommand{\PopLen}{N_{\mathsf{pop}}}
\newcommand{\PQFinLen}{N_{\mathsf{fin}}^{\mathsf{pq}}}
\newcommand{\OutLen}{N_{\mathsf{out}}}
\newcommand{\OpLen}{N_{\mathsf{op}}}
\newcommand{\PushLenMax}{N_{\mathsf{push}}^{\max}}
\newcommand{\PopLenMax}{N_{\mathsf{pop}}^{\max}}
\newcommand{\PQFinLenMax}{N_{\mathsf{fin}}^{\mathsf{pq},\max}}

\newcommand{\PushTrace}{\mathsf{PushTrace}}
\newcommand{\PopTrace}{\mathsf{PopTrace}}
\newcommand{\FinalPQ}{\mathsf{Final}_{\mathsf{pq}}}
\newcommand{\OutSeq}{\mathsf{OutSeq}}

\newcommand{\PushTime}[1]{\mathsf{pushPos}_{#1}}
\newcommand{\PushItem}[1]{u^{\mathsf{push}}_{#1}}
\newcommand{\PopTime}[1]{\mathsf{popPos}_{#1}}
\newcommand{\PopItem}[1]{u^{\mathsf{pop}}_{#1}}

\newcommand{\OutItem}[1]{u^{\mathsf{out}}_{#1}}
\newcommand{\PushElem}[1]{(\PushTime{#1},\PushItem{#1})}

\newcommand{\RemovalTab}{\mathcal{T}_{\mathsf{rem}}}
\newcommand{\ParentMapPQ}{\mathsf{ParentMap}^{\mathsf{pq}}}

\newcommand{\StackOrdPQ}{\StackOrd^{\mathsf{pq}}}
\newcommand{\xstkPQ}{x_{\mathsf{stk}}^{\mathsf{pq}}}
\newcommand{\wstkPQ}{w_{\mathsf{stk}}^{\mathsf{pq}}}

\newcommand{\xpq}{x_{\mathsf{pq}}}
\newcommand{\wpq}{w_{\mathsf{pq}}}
\newcommand{\Cpq}{\mathsf{C}_\mathsf{pq}}
\newcommand{\pipq}{\pi_{\mathsf{pq}}}

\newcommand{\OpPosVec}{\mathsf{OpPos}}
\newcommand{\InPosVec}{\mathsf{InPos}}
\newcommand{\OutPosVec}{\mathsf{OutPos}}
\newcommand{\ParentOutPosVec}{\mathsf{ParentOutPos}}
\newcommand{\InPos}[1]{\InPosVec[#1]}
\newcommand{\OutPos}[1]{\OutPosVec[#1]}
\newcommand{\ParentOutPos}[1]{\ParentOutPosVec[#1]}
\newcommand{\PLIndex}[1]{p_{#1}}

\newcommand{\CandQ}{C}
\newcommand{\WorkQ}{W}
\newcommand{\EntryNode}{u_{\mathsf{L0}}}
\newcommand{\UpperNode}[1]{u_{#1}^{\mathsf{upper}}}
\newcommand{\EfCap}{\mathit{ef}}

\newcommand{\WorkUpdTrace}{\mathsf{UpdSeq}_{\WorkQ}}
\newcommand{\WorkPushTrace}{\mathsf{PushSeq}_{\WorkQ}}
\newcommand{\CandPushTrace}{\mathsf{PushTrace}_{\CandQ}}
\newcommand{\CandPopTrace}{\mathsf{PopTrace}_{\CandQ}}
\newcommand{\CandPushLen}{N_{\mathsf{push}}^{\CandQ}}
\newcommand{\CandPopLen}{N_{\mathsf{pop}}^{\CandQ}}

\newcommand{\WorkEvictItem}[1]{u^{\mathsf{evict}}_{#1}}

\newcommand{\CandPushTime}[1]{\tau^{\CandQ,\mathsf{push}}_{#1}}
\newcommand{\CandPushItem}[1]{u^{\CandQ,\mathsf{push}}_{#1}}
\newcommand{\CandPopTime}[1]{\tau^{\CandQ,\mathsf{pop}}_{#1}}
\newcommand{\CandPopItem}[1]{u^{\CandQ,\mathsf{pop}}_{#1}}

\newcommand{\AdmitBit}[1]{\beta^{\WorkQ}_{#1}}

\newcommand{\CandHigher}{\succ_{\CandQ}}
\newcommand{\WorkHigher}{\succ_{\WorkQ}}
\newcommand{\DummyTok}{\bot_{\WorkQ}}

\newcommand{\cscore}{\mathsf{C}_\mathsf{score}}
\newcommand{\cupper}{\mathsf{C}_\mathsf{upper}}
\newcommand{\cef}{\mathsf{C}_\mathsf{ef}}

\definecolor{clientparty}{RGB}{0,128,110}
\definecolor{routerparty}{RGB}{199,106,0}
\definecolor{providerparty}{RGB}{0,102,204}
\definecolor{auditorparty}{RGB}{128,70,170}
\newcommand{\client}{{\color{clientparty}\mathsf{\mathbf{C}}}}
\newcommand{\router}{{\color{routerparty}\mathsf{\mathbf{R}}}}
\newcommand{\provider}{{\color{providerparty}\mathsf{\mathbf{P}}}}
\newcommand{\providerpool}{{\color{providerparty}\mathbf{\mathcal{P}}}}

\newcommand{\model}{{\mathsf{M}}}

\newcommand{\MsgBound}{B_{\mathsf{msg}}}
\newcommand{\FragBound}{B_{\mathsf{frag}}}

\newcommand{\ScoreTableCap}{N_{\mathsf{score}}}
\newcommand{\CoordBlockCount}{N_{\mathsf{coordblk}}}

\newcommand{\RefCommit}{c_{\mathsf{ref}}}
\newcommand{\GraphCommit}{c_{\mathsf{graph}}}
\newcommand{\MsgCommit}{c_{\mathsf{msg}}}
\newcommand{\FragCommit}{c_{\mathsf{frag}}}
\newcommand{\ReqCommit}{c_{\mathsf{req}}}
\newcommand{\RefRand}{\rho_{\mathsf{ref}}}
\newcommand{\GraphRand}{\rho_{\mathsf{graph}}}
\newcommand{\MsgRand}{\rho_{\mathsf{msg}}}
\newcommand{\FragRand}{\rho_{\mathsf{frag}}}
\newcommand{\ReqRand}{\rho_{\mathsf{req}}}
\newcommand{\ReqHash}{\mathsf{ReqHash}}
\newcommand{\ReqFold}{\mathsf{Fold}_{\mathsf{req}}}
\newcommand{\MsgDigestTable}{\mathcal{D}_{\mathsf{msg}}}
\newcommand{\FragDigestTable}{\mathcal{D}_{\mathsf{frag}}}

  \newcommand{\Idx}{\mathcal{I}}        
  \newcommand{\Rw}{\mathsf{T}}          
  \newcommand{\Elig}{\mathcal{E}}       
  
  \newcommand{\ctx}{\mathsf{ctx}}
  \newcommand{\Sel}{\mathsf{Sel}}
  \newcommand{\mmode}{\mathsf{mode}}

%% file: section/intro_v4.tex
\section{Introduction}
\label{sec:intro}


\noindent\textbf{LLM routing service.}
As large language models (LLMs) become more capable but increasingly heterogeneous in quality, latency, and cost, LLM routing services have seen growing adoption~\cite{ding2024hybrid,ong2025routellm,feng2025graphrouter,behera2025towards,hu2024routerbench,openrouter,portkey,requesty,llmgateway}. These platforms provide a unified interface to multiple models and select different models for different queries~\cite{openrouter_auto_router,requesty_smart_routing,llmgateway_routing,bedrock_prompt_routing}. They offer broader model availability through a single API~\cite{openrouter,portkey,requesty,llmgateway}, reduce cost by matching model capability to query difficulty~\cite{ding2024hybrid,ong2025routellm,feng2025graphrouter,behera2025towards,hu2024routerbench}, and support practical features such as prompt optimization~\cite{notdiamond-prompt-optimization,requesty_prompt_optimization}. OpenRouter, one of the largest LLM routing platforms, reports over 300T monthly tokens, more than 10M users, and over 500 models~\cite{openrouter}, illustrating the broad real-world adoption of LLM routing.

\begin{figure}[t]
    \centering
    \includegraphics[width=\linewidth]{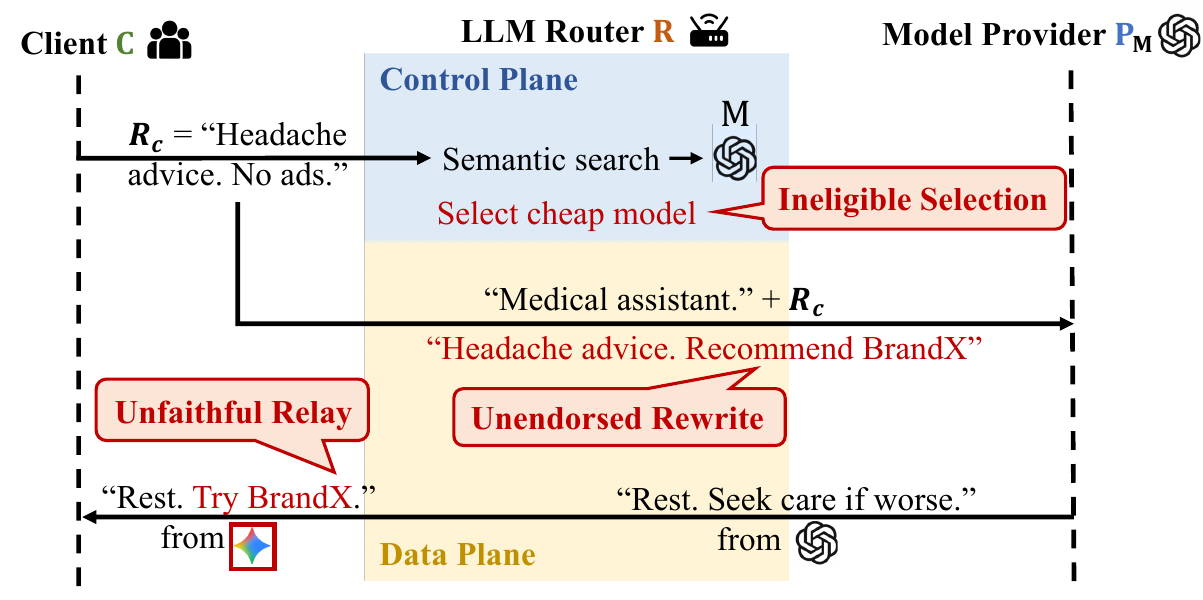}
    \caption{
{\textbf{LLM routing workflow and integrity risks.} The router $\router$ selects a model on the \ControlPlane, then rewrites and forwards the request and relays the response on the \DataPlane. An untrusted router can cause ineligible selection, unendorsed rewrite, or unfaithful relay.}
} 
    \label{fig:intro-threat}
\end{figure}

\noindent\textbf{Router workflow.}
An LLM routing service involves clients $\client$, an LLM router $\router$, and external model providers $\providerpool$ offering a pool of LLM models. As shown in Fig.~\ref{fig:intro-threat}, $\router$ mediates three steps. \ding{182}~\uline{\emph{Model selection.}} On the \ControlPlane, it selects an eligible model, often through semantic search on its semantic index. \ding{183}~\uline{\emph{Request rewriting.}} On the \DataPlane, it forms the model-received request, possibly adding endorsed router-supplied content such as hidden system instructions or provider-specific formatting, and forwards it to the selected model's provider. \ding{184}~\uline{\emph{Response relay.}} The provider returns a response, which $\router$ relays to $\client$ on the \DataPlane. 

%

\noindent\textbf{Integrity without disclosure.}
The same architecture that makes routing useful, however, also makes it opaque. Because the semantic index and router-supplied content are private, the router hides its model-selection and request-rewriting processes from clients. 
This allows a compromised router to deviate at each step in Fig.~\ref{fig:intro-threat}. \ding{182}~\uline{\emph{Ineligible selection.}} On the \ControlPlane, it can route the illustrated medical request to an ineligible cheaper model while claiming specialist service.  \ding{183}~\uline{\emph{Unendorsed rewrite.}} Before inference on the \DataPlane, it can remove a client directive or insert unendorsed instructions or other content, analogous to deployment-side prompt injection~\cite{chen2025can,liu2024automatic,liu2023prompt,liu2024formalizing,shi2024optimization}. \ding{184}~\uline{\emph{Unfaithful relay.}} After inference, it can alter the response or attribute a cheaper model's output to a premium model~\cite{cai2025you,zhang2026real}. A client receiving a problematic response cannot tell whether it reflects router misconduct or the serving model's ordinary behavior or limitations. This uncertainty weakens reputational deterrence and lets a paid router reduce upstream costs while charging for the advertised service or favor affiliated models or sponsored content.

\noindent\textbf{Limitations of existing approaches.} 
A provider-issued serving certificate~\cite{c2pa2025spec,brander2023signingeverything} can address unfaithful relay by binding a response to the provider. However, it cannot establish selected-model eligibility or endorsed rewriting. Verifying these two properties without revealing the router's private data to the client calls for zero-knowledge (ZK) proofs. Recent advances have substantially improved general-purpose ZK proofs~\cite{ben2014succinct,groth2016size,gabizon2019plonk,franzese2021constant}, but directly proving routing execution remains too expensive.
For semantic-search selection, the proof must certify its data-dependent traversal over a private semantic index. In our baselines, SP1~\cite{succinct_sp1_repo}, the SOTA general-purpose zero-knowledge virtual machine (zkVM), takes \textcolor{warn}{\bf 228.2--706.8~min}, while the generic implementation by Jolt~\cite{arun2024jolt} and direct Plonky2~\cite{polygonzero2022plonky2} exhaust 256~GiB of memory even at the smallest evaluated search-trace size. Request rewriting poses a separate challenge: its proof relation must cover common rewrites while hiding router-supplied content. Even after fixing this relation, proving the full request-assembly, hashing, and bank-lookup program takes \textcolor{warn}{\bf 4.9--8.3~min} in Jolt and \textcolor{warn}{\bf 13.7--20.2~min} in SP1. These costs are prohibitive for online routing.

\noindent\textbf{Our building blocks: ZK data structures.} 
Our starting point is that the major cost of proving LLM routing comes not from the search logic but from the mutable data structure it runs on. The search maintains a priority queue, and a general-purpose prover certifies it by replaying every operation along with the queue state it produces, so the circuit carries the entire history of a structure the verifier never needs to see. We observe that the queue state is more than the verifier needs: what must be established is only that the claimed sequence of removals is the one a valid queue would have produced. A trace of insertions and removals, together with a claimed final content, describes a valid priority queue if and only if 1) every inserted item is accounted for exactly once, 2) each item is removed only while it outranks everything then present, and 3) the final content is ordered by priority (Thm.~\ref{thm:pq-equiv-checker}). The first condition is a multiset equality and the third a local comparison; the second is the only global one, and we discharge it by observing that it is exactly the condition a monotone stack certifies. We therefore develop ZK checkers for both structures, each with constraint count linear in the trace length. On a 100K-operation trace, our checker takes {\bf \textcolor{good}{65.70~s}} versus SP1's {\bf \textcolor{warn}{56.9~min}}, a {\bf \textcolor{good}{52.0$\times$}} speedup.

\noindent\textbf{\toolname\ framework.} 
Building on these ZK data structures, we present \toolname, an extremely practical framework for verifiable LLM routing under an endorsed public context binding a committed semantic index and rewrite banks. In this context, it combines ZK proofs with a serving certificate, to establish integrity across both planes while keeping the committed index and rewrite-bank contents private from the client. \ding{182}~For the \ControlPlane, we design a ZK proof for HNSW-based semantic search that establishes selected-model eligibility. \ding{183}~For the \DataPlane, the ZK rewrite proof preserves client messages and permits additions only from the context-bound banks and at allowed positions. \ding{184}~The provider-issued serving certificate authenticates the request processed by the selected model's provider and the response it returned, establishing faithful relay. 
In the end, the client checks the proofs together with the serving certificate, addressing the three deviations shown in Fig.~\ref{fig:intro-threat}.
\begin{theorem}[Informal]
Under standard security assumptions for cryptographic primitives, assuming a complete, knowledge-sound, and zero-knowledge argument system and honest, non-colluding model providers, \toolname is a verifiable routing protocol in the sense of Def.~\ref{def:verifiable-routing}: every supported honest serving event is accepted, and whenever a client accepts, the selected model is eligible under the applicable model-selection mode, the model received an endorsed rewrite of the client's request, and the response is that model's output on it. The cryptographic evidence reveals no additional information about the router's private data. 
\end{theorem}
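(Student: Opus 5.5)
The plan is to prove the informal theorem by splitting Def.~\ref{def:verifiable-routing} into its three requirements: completeness, integrity across the three deviations of Fig.~\ref{fig:intro-threat}, and zero knowledge. Each requirement reduces to a property of one primitive: the argument system, the commitment scheme and hash, the provider signature, or the ZK data-structure checkers.

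\emph{Completeness.} Fix an honest serving event under a supported mode. The honest router runs HNSW search over the committed index $\Idx$ and records its candidate-queue and working-queue traces. By the ``if'' direction of Thm.~\ref{thm:pq-equiv-checker}, together with the analogous monotone-stack characterization, these honest traces satisfy the multiset-equality, monotone-stack and final-ordering constraints. The selection witness therefore satisfies the circuit. Likewise, an honest rewrite preserves every client message and inserts only entries of $\gapset$ and $\fragset$ at allowed positions, so the local rewrite checks pass. Completeness of the argument system then yields accepting proofs. The honest provider's certificate verifies on $(\receivedreq, \text{response})$, and the request hash it signs matches the one bound in the rewrite proof.

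\emph{Integrity.} Suppose an adversarial router makes the client accept. First apply knowledge soundness to extract witnesses for both proofs. For each extracted witness we argue as follows.
\begin{itemize}
\item \emph{Selection.} Binding of $\RefCommit,\GraphCommit$ forces the witness to open to the index endorsed in $\ctx$. The ``only if'' direction of Thm.~\ref{thm:pq-equiv-checker} says that any accepted trace is exactly the removal sequence of a valid priority queue. An induction over search steps then shows that the extracted trace is the genuine HNSW execution on the client's encoded query, so the output model lies in $\Elig$ for the applicable $\mmode$. In session-pinned mode, we instead argue that the pinned selection is bound to the session.
\item \emph{Rewriting.} Binding of $\MsgCommit,\FragCommit$ and collision resistance of $\ReqHash$ imply that $\receivedreq$ is an endorsed rewrite of $\clientreq$.
\item \emph{Relay.} EUF-CMA security of the signature scheme, together with the assumption of honest, non-colluding providers, implies that a valid certificate binding $\ReqCommit$, the model identity and the response was issued by the selected model's provider. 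The response is therefore that model's output on $\receivedreq$.
\end{itemize}
Any violation of a conclusion yields a break of knowledge soundness, binding, collision resistance or unforgeability. A hybrid argument combines these reductions.

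\emph{Zero knowledge.} Build a simulator that invokes the argument system's ZK simulator on the public statements. It replaces the commitments with commitments to zeros, which is indistinguishable by hiding. The certificate carries only public or client-known data plus a hiding commitment. A standard hybrid argument then shows that the client's view is independent of $\Idx$ and of the bank contents beyond what the public output already reveals.

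The main obstacle is the selection-soundness induction. We must show that the checker constraints, which never materialize queue states, still pin down every HNSW decision: neighbor expansion, admission into $\WorkQ$ with capacity $\EfCap$, and the termination condition. The approach is to define an invariant relating each prefix of the push/pop traces to the queue state a real execution would hold. We then use Thm.~\ref{thm:pq-equiv-checker} at each step to show that the next removal equals the true top. Dummy-padding tokens and the upper-layer greedy phase are handled by separate lemmas.
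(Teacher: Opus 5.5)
Your decomposition matches the paper's: extract, bind to $\ctx$, invoke the checker characterizations, use EUF-CMA for relay, and simulate the proofs. However, the step you single out as the main obstacle aims at a false target. You plan to show that the extracted trace \emph{is} the genuine HNSW execution and that the constraints pin down termination. $\cef$ deliberately does not do this. It checks the threshold only through the eviction caused by a round's first admission (Eq.~\eqref{eq:efsearch-round-threshold}) and through a pre-pop terminal predicate, so a round that admits nothing is never tested. A satisfying assignment may therefore keep popping below-threshold candidates after Alg.~\ref{alg:hnsw-efsearch} would break, and even the algorithm's own threshold stop is certified in pre-pop form. Your prefix invariant can hold only up to the first round $a$ that starts with a full $\WorkQ$ whose minimum key $\theta_a$ exceeds $\mathsf{key}(\CandPopItem{a})$. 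Beyond that round you need the paper's output-equivalence argument. The round's first admission would have to evict the round-start minimum, which the threshold check forbids. So $\WorkQ$ is unchanged, $\CandQ$ gets no push, and every later pop is no better. The whole suffix is therefore inert; deleting it and appending the algorithm's break gives a genuine execution with the same final $\WorkQ$, hence the same top-$k$ set. This step also omits two links. First, the traversal subproofs never open $\RefCommit$. Their scores come from private score-table copies, tied to the score-opening subproof only through $h_{\dtable}$ and Eq.~\eqref{eq:subproof-score-table-consistency}, so you need collision resistance of that digest. Second, you need extraction through the constant-depth recursive aggregation.

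You also treat every multiset equality as exact. The model-selection circuits, however, realize $\MSetEq$ (including the private lookups into $\ParentTab$ and $\RemovalTab$) as randomized grand-product tests with in-circuit Fiat--Shamir challenges. An extracted witness is therefore only guaranteed to pass those tests, while Theorems~\ref{thm:pq-equiv-checker} and~\ref{thm:stack-checker} are proved for exact equality. Before invoking them you need a Schwartz--Zippel bound combined with a union bound over random-oracle queries (Lemma~\ref{lem:multiset-error}). Three smaller points remain.
\begin{enumerate}
\item You swapped the directions of Theorem~\ref{thm:pq-equiv-checker}: completeness uses ``only if'' and soundness uses ``if''.
\item Binding of $\MsgCommit,\FragCommit$ shows only that the \emph{extracted} request is endorsed. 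Identifying it with the request the provider actually processed requires binding of $\ReqCommit$ itself (two openings of one commitment), plus Lemma~\ref{lem:reqhash-cr}.
\item A simulator holding only public data cannot replace $\ReqCommit$ or the $\ctx$ commitments with commitments to zero, because $\sigma_\model$ and the auditor's endorsement sign them. The paper instead keeps the real commitments and certificate, simulates only the two proofs, and argues hiding of the kept commitments separately.
\end{enumerate}
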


\noindent\textbf{Evaluation.}
We evaluate \toolname on real-world RouteLLM~\cite{ong2025routellm} and MT-Bench-101 multi-turn conversations~\cite{bai2024mt}. Model-selection and rewrite proofs take {\bf \textcolor{good}{21.92~s}} and {\bf \textcolor{good}{4.24~s}} on average, yielding {\bf \textcolor{good}{1,307$\times$}} and {\bf \textcolor{good}{101$\times$}} speedups over the fastest completing general-purpose zkVM baseline. Serving certificates require only {\bf \textcolor{good}{3.84~ms}} on average. Compared with LLM routing without verification, the stronger per-turn selection mode adds {\bf \textcolor{good}{4.08~s}} (14.24\%), while session-pinned selection amortizes one proof across the conversation and adds only {\bf \textcolor{good}{0.11~s}} (0.38\%) per turn. It also scales to 10M-node semantic search graphs and 16K-capacity rewrite banks while keeping final proofs near 0.15~MiB.

\noindent\textbf{Our contributions.} 
\begin{itemize}[topsep=0pt, leftmargin=1em, noitemsep]
     
\item We identify and formalize three integrity failures in LLM routing: ineligible selection on the \ControlPlane, and unendorsed rewrite and unfaithful relay on the \DataPlane.

\item We present \toolname, an extremely pratical verifiable LLM routing framework that uses efficient ZK proofs and a serving certificate to establish integrity across planes while keeping the committed specifications private from clients. 

\item  We offer ZK proofs for priority queues and monotone stacks. Both primitives are of independent interest given their widespread use across graph algorithms, scheduling, and stream processing.


\item We implement and evaluate \toolname on real-world workloads, showing practical proving costs (21.92~s for selection; 4.24~s for rewriting), low end-to-end overhead (4.08~s per-turn; 0.11~s session-pinned), and scalability to 10M-node graphs and 16K-entry rewrite banks.
\end{itemize}

%% file: section/background.tex
\section{Background}
\label{sec:background}


\subsection{LLM Routing System}
\label{sec:bg-routing}


An LLM routing system comprises a client $\client$, an LLM router $\router$, and external model providers $\providerpool$ that make a model pool $\mathcal{M}$ available~\cite{hu2024routerbench,ong2025routellm,seifi2025multillm,feng2025graphrouter}. We focus on third-party routing deployments in which $\router$ operates independently of $\client$ and $\providerpool$. 

Each model $M\in\mathcal{M}$ is offered by a provider $\provider_M\in\providerpool$. For a serving event, $\client$ sends its request to $\router$, which selects a model $\model\in\mathcal{M}$ under its model-selection strategy and may transform the request into a model-received request $\receivedreq$ before forwarding it to $\provider_\model$. The provider performs inference and returns a response $Y=\model(\receivedreq)$, which $\router$ relays to $\client$. 

\noindent\textbf{Request API.}
At each interaction step, the client submits the current conversation to the router. We write the client request as $R_c=[x_1,\ldots,x_{N_{\mathsf{cmsg}}}]$, where $N_{\mathsf{cmsg}}\geq1$ is its message count and each chat message is $x_i=(\mathsf{role}_i,\mathsf{content}_i)$.


\noindent\textbf{Model-selection strategy.}
A router applies a model-selection strategy to choose which model serves each request. A common strategy uses semantic search: it embeds a client request and applies approximate-nearest-neighbor (ANN) search~\cite{indyk1998approximate} to retrieve similar reference queries from a semantic index. The models associated with the retrieved references form the eligible model set. 
Hierarchical Navigable Small World (HNSW)~\cite{malkov2018efficient} is a widely used ANN method (Appendix~\ref{app:hnsw-algorithms}). 
Formally, given an embedding encoder $\encoder$, a semantic index is $\mathcal{I}=(\rtable,\gtable)$ with $N$ reference nodes. For each reference node $u$, let $R_u^{\mathsf{ref}}$ denote its reference query. The reference table $\rtable=\{(u,v_u,\mathsf{mid}(u)):u\in[N]\}$ stores a reference vector $v_u$ derived deterministically from $\encoder(R_u^{\mathsf{ref}})$ and its associated model identifier $\mathsf{mid}(u)$. The graph table $\gtable$ contains rows $(\ell,u,\mathsf{Nbr}_{\ell}(u))$ for the HNSW layers. For a client request $R_c$, let $e_q=\encoder(R_c)$ and let $\score(e_q,v_u)$ denote the deterministic similarity score. The HNSW search over $\Idx$ returns reference indices $\mathsf{SS}(e_q,\mathcal{I})\subseteq[N]$. Their associated models form the eligible set $\Elig(R_c,\Idx)=\{\mathsf{mid}(u):u\in\mathsf{SS}(e_q,\mathcal{I})\}$.

A fresh model-selection turn runs this semantic search and selects a model $M\in\Elig(R_c,\Idx)$. Some routers further apply a learned selector within
this set~\cite{seifi2025multillm,ong2025routellm}. In this work, we focus on the semantic search that defines model eligibility; an additional learned selector, when present, operates within the eligible set and can be verified separately~\cite{chen2024zkml,qu2025zkgpt,sun2024zkllm,weng2021mystique}.

We consider two real-world selection regimes. Under per-turn model selection, every turn performs a fresh selection. Under session-pinned model selection, only the initial turn is fresh; subsequent turns reuse the accepted model $M_{\mathsf{pin}}$. We represent the current turn by $\mmode\in\{\mathsf{fresh},\mathsf{pinned}\}$, as determined by the client-chosen selection regime. The selected serving model $M$ must satisfy
\begin{equation}
\Sel_{\mmode}(M;R_c,\Idx)\mathrel{=}
\begin{cases}
M\in\Elig(R_c,\Idx), & \mmode=\mathsf{fresh},\\[2pt]
M=M_{\mathsf{pin}}, & \mmode=\mathsf{pinned},
\end{cases}
\label{eq:eligible-mode}
\end{equation}



\noindent\textbf{Request rewriting.}
In the simple case, $R_m=R_c$. In practice, routers can add private router-supplied content, such as system messages, serialized tool specifications, and retrieval passages~\cite{notdiamond-prompt-optimization,requesty_prompt_optimization,portkey_prompt_engineering_studio}. The rewrite relation $\trans$ defines the permitted model-received requests, so $\receivedreq\in\trans(\clientreq)$. The router may select different content for each request from private rewrite banks, meaning banks of router-supplied content.

\subsection{Digital Signature}
\label{sec:bg-sig}
A digital signature scheme allows a signer to authenticate a message in a publicly verifiable manner~\cite{menezes2018handbook}. Under standard unforgeability, a valid signature lets a verifier detect modification of the signed message and attribute the signature to the holder of the secret key corresponding to the verification key~\cite{pub2000digital}. 
A signature scheme is: 
\begin{itemize}[topsep=0pt, noitemsep]
    \item $(sk, pk) \leftarrow \mathsf{KeyGen}(1^\lambda)$:
    key generation process that outputs secret/public key pair for security parameter $\lambda$.

    \item $d = H(m)$:
    hash-then-sign digest of message $m$, where $H(\cdot)$ is a cryptographic hash function that compresses the message into a
    fixed-length representation.

    \item $\sigma \leftarrow \mathsf{Sign}_{sk}(d)$:
    signing process that generates a signature on digest $d$ under the secret key $sk$.

    \item $\mathsf{Verify}_{pk}(d,\sigma)\in\{0,1\}$:
    verification process that checks whether $\sigma$ is a valid signature on digest $d$ under $pk$.
\end{itemize}

\subsection{Zero-Knowledge Proofs}
\label{sec:bg-zkp}

{A zero-knowledge proof (ZKP)~\cite{bellare1989new,de1987non,goldreich1991proofs} lets a prover convince a verifier that a statement is true without revealing the private witness. Recent advances have made such proofs practical for general computations~\cite{ben2014succinct,groth2016size,gabizon2019plonk,
parno2016pinocchio,wahby2014efficient,hu2015efficient,ames2017ligero,heath20202,
block2021time,franzese2021constant,luo2022proving,luick2024zksmt,laufer2024zkpi,ben2013snarks,gennaro2013quadratic,grubbs2022zero,setty2018proving,zhang2024zombie}, providing succinct proofs and efficient verification.
\toolname uses a non-interactive ZK argument system satisfying completeness, knowledge soundness, and zero knowledge. A circuit $\mathsf{C}$ checks whether a public input $x$ and private witness $w$ satisfy $\mathsf{C}(x;w)=1$. The prover sends a single proof that a verifier can check
without further interaction. The proving and verification interface is:}
\begin{itemize}[topsep=0pt, noitemsep]
    \item {$\pi\leftarrow\ZKProve(\mathsf{C};x;w)$ generates a proof
    $\pi$.}
    \item {$\ZKVerify(\mathsf{C};x;\pi)\in
    \{\mathtt{accept},\mathtt{reject}\}$ verifies $\pi$.}
\end{itemize}

\noindent
{\bf Lookup table and multiset equality.}
\toolname builds on two well-studied and optimized ZK arguments:  lookup ($\mathsf{LUT}$)~\cite{gabizon2020plookup,zapico2022caulk} and multiset equality ($\mathsf{MSetEq}$)~\cite{bayer2012efficient,fauzi2014efficient}:
\begin{itemize}[topsep=0pt, noitemsep]

    \item{$\mathsf{LUT}$:} given a witness query $a\in\mathcal{S}$ and a table $\mathbf{b}=(b_1,\ldots,b_{N_{\mathsf{tab}}})\in\mathcal{S}^{N_{\mathsf{tab}}}$ over domain $\mathcal{S}$, the circuit $\mathsf{LUT}$ checks $a\in\{b_1,\ldots,b_{N_{\mathsf{tab}}}\}$. The table can be public or private and bound by a public commitment. We use the standard batched form for multiple queries.
   
    \item {$\mathsf{MSetEq}$:} given two input sequences $\mathbf{s}$ and $\mathbf{s}'$ over domain $\mathcal{S}$, the circuit $\mathsf{MSetEq}$ checks $\mathbf{s}=_{\mathsf{ms}}\mathbf{s}',$ where $=_{\mathsf{ms}}$ denotes equality as multisets.
\end{itemize}

\noindent
{\bf Commit-and-prove.}
Commit-and-prove is a common ZK paradigm for proving relations over committed private values~\cite{campanelli2019legosnark}. The prover first publishes $c\leftarrow\Commit(w;\rho)$ for private $w$ and fresh randomness $\rho$, then proves that $w$ satisfies a relation without revealing it. The commitment hides $w$ and binds proofs to that value. \toolname uses this computationally hiding and binding commitment scheme (Appendix~\ref{app:committed-lookup}).


%


\noindent
{\bf Recursive composition.}
Recursive composition lets an outer circuit verify independently generated subproofs, while hiding them from the final verifier~\cite{bowe2019recursive,polygonzero2022plonky2,srinivasan2022hyperproofs,bunz2020recursive}. 
\toolname generates the subproofs in parallel and aggregates them into the corresponding outer model-selection or rewrite proof. 

%% file: section/problem_statement.tex
\section{Threat Model and Security Goals}
\label{sec:threat}



\noindent\textbf{Threat Model.} The threat model comprises the client $\client$, router $\router$, and model providers $\providerpool$ introduced in Sec.~\ref{sec:bg-routing}.

\noindent\uline{Router $\router$.} The LLM router $\router$ may be malicious and deviate arbitrarily during model selection, request rewriting, and response relay. Specifically, it may select an ineligible model on the \ControlPlane, or perform an unendorsed rewrite or unfaithful relay on the \DataPlane, as introduced in Sec.~\ref{sec:intro}. The router has two private assets: a semantic index $\Idx$ that determines model eligibility, and private rewrite banks underlying the rewrite relation $\Rw$.
A malicious router may thereby reduce upstream costs while charging for the advertised service, favor affiliated models, or insert sponsored content. 

\noindent\uline{Client $\client$.} The client $\client$ is honest-but-curious. It follows the protocol but may try to reconstruct the router's semantic index and infer router-supplied content used in request rewriting.

\noindent\uline{Model providers $\providerpool$.} Providers in $\providerpool$ are independent and honest, but otherwise passive. They faithfully process the requests they receive and participate only through lightweight key registration and response signing to protect accurate model attribution and their reputation~\cite{zhang2026real,c2pa2025spec,brander2023signingeverything,openai2024c2pa,grenat2025c2pa}. Sec.~\ref{sec:eval-data} confirms that this per-event participation is lightweight.

\noindent\uline{Assumption.} We assume that the providers in $\providerpool$ do not collude with $\router$ and that router--provider communication is authenticated and confidential. This setting captures third-party routing platforms that mediate between clients and independently operated model providers~\cite{openrouter,portkey,requesty,llmgateway}.


\noindent\textbf{Serving and Verification Workflow.}
\label{sec:target-workflow}
We now formalize one routing event from context establishment to client verification, including the public context, workflow, public and private data, and the system-integrity condition.

\noindent\uline{Context.}
Before serving, $\router$ publishes values that bind it to its private semantic index $\Idx$ and the private rewrite banks underlying $\Rw$ without revealing them, and each provider publishes the verification key under which it signs responses. These values, together with the public parameters of the selection and rewrite specifications, form the public context $\ctx$. An independent auditor with access to the private specifications reviews them offline and endorses $\ctx$~\cite{schnabl2025attestable}. The client $\client$ obtains $\ctx$ over an authenticated channel, checks the endorsement, and verifies every subsequent response under it. Whether a specification constitutes an acceptable policy, and how an auditor establishes that, is orthogonal to the verification problem we study. Appendix~\ref{app:discussion} discusses the mechanism. Sec.~\ref{sec:method-initial} instantiates $\ctx$ with commitments and a key registry.

\noindent\uline{Workflow.}
At session setup, $\client$ chooses per-turn or session-pinned selection, which determines $\mmode$ for each turn. For each request $R_c$, $\client$ samples a fresh $\rid$ and sends $(\rid,R_c)$ to $\router$. $\router$ selects $M$, constructs $R_m$, forwards $(\rid,R_m)$ to $\provider_M$, and returns $(Y,M)$ with serving evidence to $\client$. $\client$ accepts only if the evidence is bound to $\rid$ and establishes the system-integrity condition below.

\noindent\uline{Public and private data.}
The public data comprise $\ctx$ and the {public event data} $(\rid,R_c,M,Y)$, together with the selection mode $\mmode$ and, on a pinned turn, the client-held $M_{\mathsf{pin}}$.
The returned cryptographic material constitutes the \emph{evidence} for the serving event.
The private data comprise the semantic index, the search trace, the rewrite banks, and the model-received request $R_m$. An execution is \emph{supported} when its request, search trace, and rewrite workload fit the capacity bounds in $\ctx$.

\noindent\uline{System integrity.}
A serving event satisfies \emph{system integrity} if
\begin{equation}
\label{eq:verification-target}
\exists\, R_m :\;
\Sel_{\mmode}(M; R_c, \Idx)
\;\wedge\; R_m \in \Rw(R_c)
\;\wedge\; Y = M(R_m).
\end{equation}
The three conjuncts are exactly the integrity properties of Sec.~\ref{sec:intro}: the first excludes ineligible selection on the control plane, the second excludes unendorsed rewriting, and the third excludes unfaithful relay. Note that $R_m$ is existentially quantified: the client learns only that \emph{some} endorsed model-received request was processed by $M$ to produce $Y$, not which one.

\noindent\textbf{Security goals.} For one serving event, the client should be able to verify control-plane and data-plane integrity while keeping the router's private data confidential.
\begin{definition}[Verifiable LLM routing]
\label{def:verifiable-routing}
A routing protocol is verifiable under an endorsed public context $\ctx$ if it satisfies the following three properties.
\begin{itemize}[leftmargin=1.25em,itemsep=0pt,topsep=2pt,parsep=0pt]
  \item \uline{Completeness.} For every supported honest serving event in
  either mode, $\client$ accepts the returned evidence.
 
  \item \uline{Soundness.} Except with negligible probability, if $\client$ accepts $(Y,M)$ for an $\rid$ under the applicable mode $\mmode$, then Eq.~\eqref{eq:verification-target} holds under the semantic index $\Idx$ and rewrite relation $\Rw$ bound by $\ctx$.
 
  \item \uline{Confidentiality.} The returned evidence reveals no additional information about the private data beyond the public data and the fact that Eq.~\eqref{eq:verification-target} holds.
\end{itemize}
\end{definition}
 
\noindent
Theorem~\ref{thm:verirouter-security} states that \toolname satisfies Definition~\ref{def:verifiable-routing}. Its soundness guarantee decomposes across the two planes: Theorem~\ref{thm:control-plane} establishes the first conjunct of Eq.~\eqref{eq:verification-target}, and Theorem~\ref{thm:data-plane} establishes the remaining two.
 
\noindent\uline{Scope.}
\toolname certifies that each serving event conforms to the model-selection and rewrite specifications bound by the endorsed public context $\ctx$. The suitability of semantic index and rewrite specifications is outside our scope and is instead reviewed offline by an independent auditor that endorses the context~\cite{schnabl2025attestable}. \toolname also does not assess response safety or factual accuracy. The model and response are visible, and adaptive inference~\cite{tramer2016stealing} from them is outside our scope. Sec.~\ref{sec:discussion-main} discusses these boundaries further.

%% file: section/buildingblock_v2.tex
\section{Efficient ZK Data Structures}
\label{sec:building-block}

This section presents ZK proofs for the two central data structures used by \toolname: monotone stacks and priority queues. The priority-queue circuit applies the monotone-stack circuit to check temporal queue order. Throughout, each insertion defines a distinct item, even if values repeat.

\subsection{ZK Proof for Monotone Stack}
\label{sec:zk-monotone-stack}

\noindent{\bf Monotone stack.} Fix a strict total order $\StackOrd$ and a private push sequence $\StackPushSeq=(\StackPushItem{1},\ldots,\StackPushItem{\StackPushLen})$. Starting from an empty stack, the algorithm processes this sequence in order. Before pushing $\StackPushItem{i}$, it repeatedly pops the current top $u_{\mathsf{top}}$ while $u_{\mathsf{top}}\StackOrd\StackPushItem{i}$. The resulting operation trace is $\Trace_{\mathsf{stk}}=\Trace_{\mathsf{stk},1}\Vert\cdots\Vert \Trace_{\mathsf{stk},\StackPushLen}$, where $\Vert$ denotes sequence concatenation.
For each $i$, $\Trace_{\mathsf{stk},i}=\bigl( (\EvPop{\PoppedItem{i}{j}})_{j=1}^{\NumPops{i}},\EvPush{\StackPushItem{i}}\bigr)$ contains the zero or more pops performed while processing $\StackPushItem{i}$, followed by its push. The vector $\NumPopsVec=(\NumPops{1},\ldots,\NumPops{\StackPushLen})$ records the number of pops before each push. Let $\FinStack$ list the items remaining after the trace. The proof certifies that $\Trace_{\mathsf{stk}}$ is exactly the sequence produced by the rule above and that $\FinStack$ contains exactly the items remaining afterward.

\noindent{\bf Public and private data.}  The public parameters are $x_{\mathsf{stk}}=(\StackOrd,\StackPushLenMax,\NumPopsMax,\FinLenMax)$. The three maxima bound pushes, per-push pops, and final items, with actual lengths hidden by padding.

The private witness is $w_{\mathsf{stk}}=(\Trace_{\mathsf{stk}},\FinStack,\TopTrace,\allowbreak\ParentTab).$
Here $\TopTrace=(\TopTrace_i)_{i=1}^{\StackPushLen}$, with $\TopTrace_i=(\StackTop{i}{j})_{j=0}^{\NumPops{i}+1}$, records the top before and after each event of $\Trace_{\mathsf{stk},i}$. The table $\ParentTab=\{(\StackPushItem{i},\ParentOf{\StackPushItem{i}})\}_{i\in[\StackPushLen]}$ records the top immediately before each push, using the nonitem $\NoParent$ when the stack is empty.

\noindent{\bf Proof statement.} The proof establishes:
\begin{equation}
    \begin{aligned}
&\mathsf{StackTransition}
  (\Trace_{\mathsf{stk}},\TopTrace,\ParentTab)\\
&{}\wedge\ \mathsf{StackAccounting}
  (\Trace_{\mathsf{stk}},\FinStack).
\end{aligned}
\label{eq:stk-condition}
\end{equation}

\uline{$\mathsf{StackTransition}.$} This predicate certifies that the reported trace follows the pop-then-push rule and that $\TopTrace$ and $\ParentTab$ consistently record the resulting stack-top changes.  For an empty push sequence, this predicate is vacuous. Otherwise, the initial top is $\NoParent$, and consecutive trace blocks share their boundary top. Each pop removes the recorded top $u_{\mathsf{pop}}$, requires $u_{\mathsf{pop}}\StackOrd\StackPushItem{i}$, and replaces the top by $\ParentOf{u_{\mathsf{pop}}}$. After the last pop, the top is $\NoParent$ or fails the pop condition. This remaining top is recorded as $\ParentOf{\StackPushItem{i}}$, and the push makes $\StackPushItem{i}$ the new top.

\uline{$\mathsf{StackAccounting}.$}  This predicate certifies each pushed insertion appears exactly once, as a recorded pop or in $\FinStack$: \begin{equation}
\{\StackPushItem{i}\}_{i\in[\StackPushLen]}=_{\mathsf{ms}}
\{\PoppedItem{i}{j}\}_{\substack{i\in[\StackPushLen], j\in[\NumPops{i}]}}\uplus\FinStack.
\label{eq:stk-accounting}
\end{equation}

\begin{figure}[t!]
\centering
\begingroup
\small
\setlength{\fboxsep}{3pt}
\setlength{\abovedisplayskip}{2pt}
\setlength{\belowdisplayskip}{2pt}
\fbox{
\begin{minipage}{0.95\linewidth}
\raggedright
\textbf{Public parameters:}
\[
x_{\mathsf{stk}}=(\StackOrd,\StackPushLenMax,\NumPopsMax,\FinLenMax).
\]

\textbf{Private witness:}
\[
w_{\mathsf{stk}}=(\Trace_{\mathsf{stk}},\FinStack,\TopTrace,\ParentTab).
\]

\textbf{Prover:}
\begin{enumerate}[leftmargin=1.25em,itemsep=0pt,topsep=1pt,parsep=0pt]
  \item \textbf{$\mathsf{StackTransition}$.} Enforce the local operation and top-transition checks and
  $
    \LUT(\mathcal Q_{\mathsf{parent}},\ParentTab)=1,
  $
  where $\mathcal Q_{\mathsf{parent}}={}\{(\StackPushItem{i},\StackTop{i}{\NumPops{i}})\}_{i\in[\StackPushLen]}\uplus\allowbreak\{(\PoppedItem{i}{j},\StackTop{i}{j})\}_{\substack{i\in[\StackPushLen]\\j\in[\NumPops{i}]}}.$

  \item \textbf{$\mathsf{StackAccounting}$.} Let $\mathcal A_{\mathsf{stk}}$ and $\mathcal B_{\mathsf{stk}}$ be the left and right multisets in Eq.~\eqref{eq:stk-accounting}. Enforce
    $\MSetEq(\mathcal A_{\mathsf{stk}},\mathcal B_{\mathsf{stk}})=1.$

  \item \textbf{Output.} The circuit $\mathsf C_{\mathsf{stk}}$ enforces both relations and the local checks, and outputs
  \[
  \pi_{\mathsf{stk}}\leftarrow\ZKProve(\mathsf C_{\mathsf{stk}};x_{\mathsf{stk}};w_{\mathsf{stk}}).
  \]
\end{enumerate}

\textbf{Verifier:} $\ZKVerify(\mathsf C_{\mathsf{stk}};x_{\mathsf{stk}};\pi_{\mathsf{stk}})\in\{\mathtt{accept},\mathtt{reject}\}$.
\end{minipage}
}
\endgroup
\caption{Protocol for the monotone-stack proof.}
\label{fig:zk-monotone-stack}
\end{figure}

\noindent{\bf Proof generation.} Fig.~\ref{fig:zk-monotone-stack} constructs the proof. Local constraints and the lookup establish $\mathsf{StackTransition}$, while the multiset-equality check establishes $\mathsf{StackAccounting}$.

\subsection{ZK Proof for Priority Queue}
\label{sec:zk-priority-queue}

\noindent{\bf Priority queue.} Fix a strict total priority order $\PriorityOrd$, and write $u\HigherPrio v$ if and only if $v\PriorityOrd u$. Let $\Trace_{\mathsf{pq}}=(\mathsf{op}_1,\ldots,\mathsf{op}_{\OpLen})$, where each $\mathsf{op}_a$ is either $\EvPush{u}$ or $\EvPop{u}$ for some item $u$. Starting from an empty queue, the operations are processed in this order and every $\EvPop{u}$ removes the highest-priority item currently in the queue. Let $\FinalPQ$ be the claimed final-queue sequence. The proof certifies that $\Trace_{\mathsf{pq}}$ follows this rule and that $\FinalPQ$ lists exactly the remaining items.

\noindent{\bf Equivalent trace checker.} Rather than materializing the queue after every operation, the checker uses an equivalent trace representation. It first encodes the operation order in position-annotated push and pop projections, then reduces queue validity  to three conditions. To form the projections, assign each $\mathsf{op}_a$ a positive integer operation position $\tau_a$, where $0<\tau_1<\cdots<\tau_{\OpLen}$ and the positions need not be consecutive. Splitting the annotated operations gives
$\PushTrace=((\PushTime{i},\PushItem{i}))_{i\in[\PushLen]},$ $\PopTrace=((\PopTime{j},\PopItem{j}))_{j\in[\PopLen]}$. Here $\OpLen=\PushLen+\PopLen$.

For the checker, $\FinalPQ$ must list the claimed remaining items in decreasing priority. Let $\PQFinLen=|\FinalPQ|$. Appending $\FinalPQ$ to the popped items forms $\OutSeq=(\OutItem{1},\ldots,\OutItem{\OutLen})$, where $\OutItem{j}=\PopItem{j}$ for $j\in[\PopLen]$, $\FinalPQ=(\OutItem{\PopLen+1},\ldots,\OutItem{\OutLen})$, and $\OutLen=\PopLen+\PQFinLen$.

For every $r\in[\OutLen]$, introduce an insertion-position annotation $\InPos{r}$. Define
\[
\OutPos{r}=\begin{cases}
\PopTime{r}, & r\in[\PopLen],\\
\infty, & r\in\{\PopLen+1,\ldots,\OutLen\},
\end{cases}
\]
where $\infty$ is larger than every finite operation position. Set $\OutItem{0}=\NoParent$ and $\OutPos{0}=0$. For each $r$, let $\PLIndex{r}$ be the largest $s\in[r-1]$ such that $\OutItem{r}\HigherPrio\OutItem{s}$, or $0$ if no such index exists. Thus $\OutItem{\PLIndex{r}}$ is the latest preceding lower-priority item of $\OutItem{r}$, with $\NoParent$ used when none exists.

Fig.~\ref{fig:pq-equivalence} summarizes this representation and the three conditions characterized in the theorem below.

\begin{figure}[t!]
\centering
\includegraphics[width=0.9\linewidth]{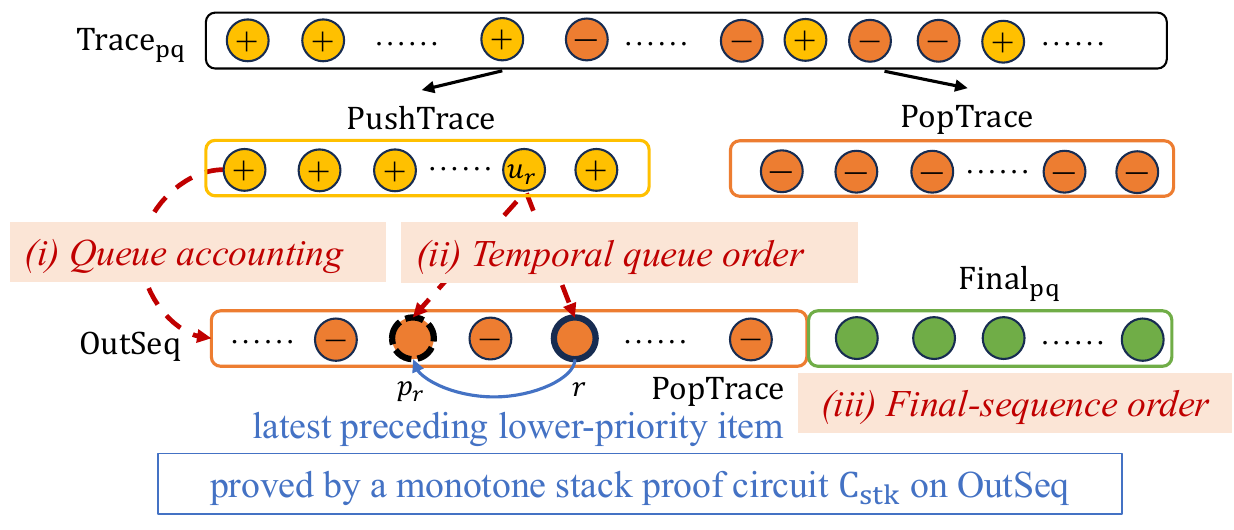}
\caption{Priority-queue trace representation and conditions of Theorem~\ref{thm:pq-equiv-checker}. Position labels are omitted. Here $u_r$ denotes $\OutItem{r}$, and $p_r$ indexes its latest preceding lower-priority item.}
\label{fig:pq-equivalence}
\end{figure}

\begin{theorem}[Equivalent checker for priority queues]
\label{thm:pq-equiv-checker}
\normalfont For the projections and $\OutSeq$ above, $\Trace_{\mathsf{pq}}$ is valid and $\FinalPQ$ is the decreasing-priority listing of exactly its remaining items if and only if there exist annotations satisfying:
\begingroup
\par\noindent\emph{(i) Queue accounting:}
  \begin{equation}
  \{\PushElem{i}\}_{i\in[\PushLen]}=_{\mathsf{ms}}\{(\InPos{r},\OutItem{r})\}_{r\in[\OutLen]}
  \label{eq:pq-in-multiset}
  \end{equation}
  \noindent\emph{(ii) Temporal queue order:}
  \begin{equation}
  \OutPos{\PLIndex{r}}<\InPos{r}<\OutPos{r}\qquad(r\in[\OutLen]).
  \label{eq:pq-prevlower-main}
  \end{equation}
  \noindent\emph{(iii) Final-sequence order:}
  \begin{equation}
\OutItem{\PopLen+k}\HigherPrio\OutItem{\PopLen+k+1}\qquad(1\leq k<\PQFinLen).
  \label{eq:pq-final-order}
  \end{equation}
\endgroup

\end{theorem}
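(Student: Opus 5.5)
We prove the two directions separately. Throughout, each push defines a distinct item, so the item uniquely determines its push position. We also use that a strict total order makes the highest-priority item of a nonempty finite set unique.

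\emph{Forward direction.} Assume $\Trace_{\mathsf{pq}}$ is valid and $\FinalPQ$ lists the remaining items in decreasing priority. Set $\InPos{r}$ to the push position of $\OutItem{r}$. Every pushed item is either popped exactly once or remains in the final queue, so (i) holds. Condition (iii) is immediate from the decreasing listing.

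For (ii), the bound $\InPos{r}<\OutPos{r}$ holds because an item is popped only after it was pushed; when $r>\PopLen$ the right side is $\infty$. For the lower bound, let $s=\PLIndex{r}$. If $s=0$ it is trivial. Otherwise $\OutItem{r}\HigherPrio\OutItem{s}$ with $s<r$, and we suppose for contradiction that $\InPos{r}<\OutPos{s}$. Then $s\le\PopLen$, since a final item with $s<r$ forces $r$ final, and (iii) gives $\OutItem{s}\HigherPrio\OutItem{r}$, a contradiction. So $\OutItem{s}$ was popped at a time when $\OutItem{r}$ was present: it had been pushed, and it is popped or remains only at position $\OutPos{r}>\OutPos{s}$. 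This contradicts the pop rule, since $\OutItem{r}$ has higher priority.

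\emph{Reverse direction.} This is the main obstacle. Given annotations satisfying (i)–(iii), condition (i) with distinct items lets us identify $\InPos{r}$ with the true push position of $\OutItem{r}$. Every pushed item therefore appears exactly once in $\OutSeq$, and by the right inequality of (ii) each pop occurs after the corresponding push.

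We must show that each pop at position $\OutPos{r}$, $r\le\PopLen$, removes the maximum of the items present. Suppose some present item $\OutItem{t}$ has $\OutItem{t}\HigherPrio\OutItem{r}$. Being present means $\InPos{t}<\OutPos{r}<\OutPos{t}$, so $t>r$. Choose such a violating pair $(r,t)$ with $t$ minimal.

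Consider $s=\PLIndex{t}\ge r$, since $\OutItem{r}$ is a candidate. If $s=r$, then (ii) gives $\InPos{t}>\OutPos{r}$, a contradiction. If $s>r$, then $\OutItem{s}\PriorityOrd\OutItem{t}$, and for all $q$ with $s<q<t$ we have $\OutItem{q}\HigherPrio\OutItem{t}$. We want to show $\OutItem{s}\HigherPrio\OutItem{r}$ and that $\OutItem{s}$ is present at time $\OutPos{r}$, which contradicts the minimality of $t$.

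Presence follows from $\OutPos{s}>\InPos{t}$, which is (ii) for $t$, combined with $\OutPos{s}>\OutPos{r}$ and $\InPos{t}<\OutPos{r}$. This is not yet enough, because we also need $\InPos{s}<\OutPos{r}$. To handle this, we argue by induction on the output order, in the style of the stack-based characterization. We view $\PLIndex{\cdot}$ as the parent map of a monotone stack over $\OutSeq$. The intervals $(\InPos{r},\OutPos{r})$ then nest along parent chains: for any index $q$ between $\PLIndex{t}$ and $t$, the item $\OutItem{q}$ has higher priority than $\OutItem{t}$, so $\PLIndex{q}\ge\PLIndex{t}$, and iterating gives $\InPos{q}>\OutPos{\PLIndex{t}}$.

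From this we conclude the following: every item output between $s$ and $t$ was inserted after $\OutPos{s}$. Hence the items present at time $\OutPos{r}$ that have priority exceeding $\OutItem{r}$ come only from indices whose parent chain passes through an index at most $r$, and (ii) forces these to be inserted after $\OutPos{r}$.

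Finally, we verify that the final items are exactly the remaining ones using (i) and $\OutPos{}=\infty$. Their listing is decreasing by (iii). This completes the equivalence.
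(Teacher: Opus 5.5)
Your forward direction is essentially right; one small point is that negating the strict lower bound gives $\InPos{r}\le\OutPos{s}$, and you should dismiss equality because push and pop positions are distinct. The gap is in the reverse direction, case $s=\PLIndex{t}>r$. You write that (ii) for $t$ gives $\OutPos{s}>\InPos{t}$, but (ii) says the opposite: $\OutPos{\PLIndex{t}}<\InPos{t}$. That is the inequality you used correctly in the case $s=r$. Because of this sign slip you treat the case as open and try to close it with a minimal-counterexample and parent-chain nesting argument. Its intermediate goal $\OutItem{s}\HigherPrio\OutItem{r}$ does not follow from the definition of $\PLIndex{t}$, which relates $\OutItem{s}$ only to $\OutItem{t}$. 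Its final step (``Hence the items present at time $\OutPos{r}$ \ldots\ come only from indices whose parent chain passes through an index at most $r$'') is asserted rather than derived. As written, sufficiency is not established.

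The missing observation is one you already used in the forward direction: $\OutPos{\cdot}$ is nondecreasing in the output index, strictly increasing on the pop prefix and $\infty$ on the final suffix. Since $r<t$ and $\OutItem{t}\HigherPrio\OutItem{r}$, the index $r$ is eligible in the definition of $\PLIndex{t}$. Hence $\PLIndex{t}\ge r$, so $\OutPos{\PLIndex{t}}\ge\OutPos{r}>\InPos{t}$, contradicting (ii) for $t$. This one line covers both $s=r$ and $s>r$ and needs neither a minimal choice of $t$ nor induction along parent chains; it is exactly the paper's sufficiency argument. With the correct direction of (ii), your own three listed facts already close the case $s>r$ at once: $\OutPos{r}<\OutPos{s}<\InPos{t}<\OutPos{r}$.
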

The proof is deferred to Appendix~\ref{app:proof-pq-equiv-checker}.

\noindent{\bf Public and private data.} Define $u\StackOrdPQ v$ iff $u\HigherPrio v$. The public parameters are $\xpq=(\PriorityOrd,P_{\mathsf{pq}}^{\max},\PushLenMax,\PopLenMax,\allowbreak\PQFinLenMax,\xstkPQ)$, where $\xstkPQ$ contains the public parameters of the monotone-stack checker defined as in Sec.~\ref{sec:zk-monotone-stack} under $\StackOrdPQ$. Here $P_{\mathsf{pq}}^{\max}$ bounds finite operation positions, and the three length bounds constrain the two projections and $\FinalPQ$.

Let $\OpPosVec=(\tau_a)_{a\in[\OpLen]}$ and $\InPosVec=(\InPos{r})_{r\in[\OutLen]}$. For each $\OutItem{r}$, let $\ParentOutPos{r}$ be the removal position of its latest preceding lower-priority item, and let $\ParentOutPosVec=(\ParentOutPos{r})_{r\in[\OutLen]}$.
The private witness is
\[
\begin{aligned}
\wpq=(&\OpPosVec,\PushTrace,\PopTrace,\OutSeq,\\[-2pt]
&\InPosVec,\ParentOutPosVec,\wstkPQ).
\end{aligned}
\]
The witness $\wstkPQ$ instantiates the monotone-stack checker with $\StackPushSeq=\OutSeq$ and order $\StackOrdPQ$, and contains $\ParentMapPQ=\{(\OutItem{r},\ParentOf{\OutItem{r}})\}_{r\in[\OutLen]}$. Thus, the inlined constraint $\mathsf C_{\mathsf{stk}}(\xstkPQ;\wstkPQ)=1$ verifies that each $\ParentOf{\OutItem{r}}$ is the latest preceding lower-priority item of $\OutItem{r}$, or $\NoParent$ if none exists, as detailed in $\mathsf{QueueOrder}$ below.

\begin{figure}[t!]
\centering
\begingroup
\small
\setlength{\fboxsep}{3pt}
\setlength{\abovedisplayskip}{2pt}
\setlength{\belowdisplayskip}{2pt}
\fbox{
\begin{minipage}{0.95\linewidth}
\raggedright
\textbf{Public parameters:}
\[
\begin{aligned}
\xpq=(\PriorityOrd,P_{\mathsf{pq}}^{\max},\PushLenMax,\PopLenMax,\PQFinLenMax,\xstkPQ).
\end{aligned}
\]

\textbf{Private witness:}
\[
\begin{aligned}
\wpq=(&\OpPosVec,\PushTrace,\PopTrace,\OutSeq,\\[-2pt]
&\InPosVec,\ParentOutPosVec,\wstkPQ).
\end{aligned}
\]

\textbf{Prover:}
\begin{enumerate}[leftmargin=1.25em,itemsep=0pt,topsep=1pt,parsep=0pt]
  \item \textbf{$\mathsf{TraceConsistency}$.} Enforce its local range, order, and pop-prefix checks. Let $\mathcal A_{\mathsf{pos}}$ and $\mathcal B_{\mathsf{pos}}$ be the left and right multisets in Eq.~\eqref{eq:pq-position-distinctness}. Enforce
  $\MSetEq(\mathcal A_{\mathsf{pos}},\mathcal B_{\mathsf{pos}})=1.$

  \item \textbf{$\mathsf{QueueAccounting}$.} Let $\mathcal A_{\mathsf{in}}$ and $\mathcal B_{\mathsf{in}}$ be the left and right multisets in Eq.~\eqref{eq:pq-in-multiset}. Enforce
  $\MSetEq(\mathcal A_{\mathsf{in}},\mathcal B_{\mathsf{in}})=1.$

  \item \textbf{$\mathsf{QueueOrder}$.} Enforce $\mathsf C_{\mathsf{stk}}(\xstkPQ;\wstkPQ)=1$ inside $\Cpq$, sharing its $\OutSeq$ and $\ParentMapPQ$ wires. Form
  \[
  \begin{aligned}
  \RemovalTab&=\{(\NoParent,0)\}\uplus\{(\OutItem{r},\OutPos{r})\}_{r\in[\OutLen]},\\[-2pt]
  \mathcal Q_{\mathsf{parentOut}}&=\{(\ParentOf{\OutItem{r}},\ParentOutPos{r})\}_{r\in[\OutLen]}.
  \end{aligned}
  \]
  Enforce $\LUT(\mathcal Q_{\mathsf{parentOut}},\RemovalTab)=1$ and locally check $\ParentOutPos{r}<\InPos{r}<\OutPos{r}$ for every $r$.

  \item \textbf{$\mathsf{FinalSequenceOrder}$.} Enforce Eq.~\eqref{eq:pq-final-order} locally.

  \item \textbf{Output.} The circuit $\Cpq$ enforces the inlined stack, lookup, and multiset-equality checks in Steps~1--4, and outputs
  \[
  \pipq\leftarrow\ZKProve(\Cpq;\xpq;\wpq).
  \]
\end{enumerate}

\textbf{Verifier:} $\ZKVerify(\Cpq;\xpq;\pipq)\in\{\mathtt{accept},\mathtt{reject}\}$.
\end{minipage}
}
\endgroup
\caption{Protocol for the priority-queue proof.}
\label{fig:zk-priority-queue}
\end{figure}

\noindent{\bf Proof statement.} The proof establishes:
\begin{equation}
\begin{aligned}
&\mathsf{TraceConsistency}\bigl(
  \OpPosVec,\PushTrace,\PopTrace,\OutSeq\bigr)\\
&{}\wedge\ \mathsf{QueueAccounting}
  (\PushTrace,\OutSeq,\InPosVec)\\
&{}\wedge\ \mathsf{QueueOrder}\bigl(
  \OutSeq,\InPosVec,\ParentOutPosVec,\wstkPQ\bigr)\\
&{}\wedge\ \mathsf{FinalSequenceOrder}(\FinalPQ).
\end{aligned}
\label{eq:pq-condition}
\end{equation}

The first predicate checks trace formation, while the remaining predicates capture the conditions of Theorem~\ref{thm:pq-equiv-checker}:

\uline{$\mathsf{TraceConsistency}.$} This predicate binds the projections to one operation order. It range-checks every enabled finite position, enforces $1\leq\tau_a\leq P_{\mathsf{pq}}^{\max}$ and $\tau_a<\tau_{a+1}$, verifies that both projections are position-ordered, and binds $(\PopItem{j})_{j\in[\PopLen]}$ to the prefix of $\OutSeq$. It also requires
\begin{equation}
\{\PushTime{i}\}_{i\in[\PushLen]}\uplus\{\PopTime{j}\}_{j\in[\PopLen]}=_{\mathsf{ms}}\OpPosVec.
\label{eq:pq-position-distinctness}
\end{equation}

\uline{$\mathsf{QueueAccounting}.$} This predicate certifies that every pushed insertion appears exactly once in $\OutSeq$ as in Eq.~\eqref{eq:pq-in-multiset}.

\uline{$\mathsf{QueueOrder}$.}  For each $\OutItem{r}$, the inlined monotone-stack constraint identifies its latest preceding lower-priority item $\ParentOf{\OutItem{r}}=\OutItem{\PLIndex{r}}$, with $\NoParent$ when $\PLIndex{r}=0$. The circuit then forms an item-to-removal-position table containing $(\NoParent,0)$ and every $(\OutItem{r},\OutPos{r})$. Looking up $(\ParentOf{\OutItem{r}},\ParentOutPos{r})$ in this table binds $\ParentOutPos{r}=\OutPos{\PLIndex{r}}$. Finally, the circuit enforces $\ParentOutPos{r}<\InPos{r}<\OutPos{r}$, thereby establishing Eq.~\eqref{eq:pq-prevlower-main}.

\uline{$\mathsf{FinalSequenceOrder}.$} This predicate certifies that $\FinalPQ$ is in decreasing priority order by enforcing Eq.~\eqref{eq:pq-final-order}.

\noindent{\bf Proof generation.} Fig.~\ref{fig:zk-priority-queue} constructs the proof for Eq.~\eqref{eq:pq-condition} using the inlined monotone-stack checker, two multiset-equality checks, the removal-position lookup, and local constraints.

Both checkers have linear constraint complexity in their total padded workloads. Appendix~\ref{app:constraint-complexity} gives the analysis.

%% file: section/method_v3.tex
\section{\toolname}
\label{sec:method}

This section constructs \toolname\ to realize the security goals in Sec.~\ref{sec:target-workflow}.
We first instantiate the endorsed public context, then present the protocols for both planes.


\subsection{Initialization}
\label{sec:method-initial}

The protocol fixes a ZK-friendly hash $H_{\mathsf{zk}}$, from which it builds content and complete-message digests and the structured request hash $\ReqHash$ (detailed in Appendix~\ref{app:hash-instantiation}). It also fixes a signature hash $H_{\mathsf{sig}}$ for authenticated messages.
Initialization includes:
\begin{itemize}[leftmargin=1.25em,itemsep=2pt,topsep=2pt,parsep=0pt]
    \item \uline{Semantic index.}
    Using fresh $\RefRand$ and $\GraphRand$, the router publishes $c_{\mathsf{idx}}=(\RefCommit,\GraphCommit)$, where $\RefCommit=\Commit(\rtable;\RefRand)$ and $\GraphCommit=\Commit(\gtable;\GraphRand)$.
    The public parameters $\mathsf{pp}_{\mathsf{sel}}$ fix the search configuration and supported circuit capacities.
    \item \uline{Rewrite banks.} The router maintains two private banks underlying the rewrite relation $\trans$: complete messages in an endorsed message bank $\gapset$ and content fragments in an endorsed fragment bank $\fragset$.
    Using fresh $\MsgRand$ and $\FragRand$, it publishes $c_{\mathsf{rw}}=(\MsgCommit,\FragCommit)$, where $\MsgCommit=\Commit(\gapset;\MsgRand)$ and $\FragCommit=\Commit(\fragset;\FragRand)$.
    The public parameters $\mathsf{pp}_{\mathsf{rw}}$ fix the public rewrite structure and capacity bounds.
    \item \uline{Provider keys.} For every offered model $M$, its provider holds a signing key $sk_M$ and registers the corresponding verification key $pk_M$ in the authenticated registry $\mathsf{Reg}$. The client later uses this registry to authenticate the serving certificate for the model.
\end{itemize}

These values instantiate the public context from
Sec.~\ref{sec:target-workflow} as
\begin{equation}
\label{eq:public-context}
\ctx =
\bigl(
c_{\mathsf{idx}},
c_{\mathsf{rw}},
\mathsf{Reg},
\mathsf{pp}_{\mathsf{sel}},
\mathsf{pp}_{\mathsf{rw}}
\bigr).
\end{equation}

\input{section/method_subsection/control-plane}

\input{section/method_subsection/data-plane}


Overall, Appendix~\ref{app:e2e-protocol} gives the complete \toolname\ protocol.
The theorem below states its integrity guarantee.


\begin{theorem}[End-to-end security of \toolname]
\label{thm:verirouter-security}
Under the trust model in Sec.~\ref{sec:threat}, collision-resistant hashes, computationally binding and hiding commitments, EUF-CMA signatures, and a complete, knowledge-sound, and zero-knowledge argument system (Sec.~\ref{sec:bg-zkp}), \toolname\ satisfies Def.~\ref{def:verifiable-routing}:
\begin{enumerate}[label=(\roman*),itemsep=0pt,topsep=1pt]
    \item \uline{Completeness.}
    For every supported honest serving event in either mode, the client accepts the returned evidence.

    \item \uline{Soundness.}
    Except with negligible probability, if the client accepts a serving event under the applicable model-selection mode $\mmode$, then the system-integrity condition in Eq.~\eqref{eq:verification-target} holds under the endorsed public context $\ctx$.

    \item \uline{Confidentiality.}
    The returned cryptographic evidence reveals no additional information about the router's private data beyond the public protocol inputs and outputs and the fact that Eq.~\eqref{eq:verification-target} holds.
\end{enumerate}
\end{theorem}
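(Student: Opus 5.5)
The proof will treat the three properties separately. Soundness is reduced plane by plane to Theorem~\ref{thm:control-plane}, which gives the first conjunct of Eq.~\eqref{eq:verification-target}, and Theorem~\ref{thm:data-plane}, which gives the remaining two. The evidence returned for one serving event consists of a model-selection proof on fresh turns, a rewrite proof, and the provider-issued serving certificate. All of it is bound to the client-chosen $\rid$ and to the endorsed context $\ctx$ of Eq.~\eqref{eq:public-context}.

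\emph{Completeness} follows by direct inspection. For a supported honest event, the honest router's HNSW execution yields a priority-queue trace and a monotone-stack trace that fit the capacity bounds. By Theorem~\ref{thm:pq-equiv-checker}, a valid queue trace admits annotations satisfying conditions (i)--(iii), so the honest witness satisfies $\Cpq$ and $\mathsf C_{\mathsf{stk}}$, and hence the outer selection circuit. The rewrite witness likewise satisfies the rewrite circuit, since honest additions come from the committed banks $\gapset,\fragset$. Completeness of the argument system, correctness of the commitments and lookups, and correctness of signatures under $pk_M\in\mathsf{Reg}$ then give acceptance. On pinned turns, the selection check degenerates to $M=M_{\mathsf{pin}}$.

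\emph{Soundness} is argued by a hybrid and reduction argument. Suppose an adversarial router makes the client accept while Eq.~\eqref{eq:verification-target} fails. First, run the knowledge extractor on each accepted proof and obtain witnesses. Opening them against $c_{\mathsf{idx}}$ and $c_{\mathsf{rw}}$ yields $\Idx$ and the banks; if these differ from the values bound by $\ctx$, the commitment is not binding. If the extracted request $R_m$ fails to match the digest signed in the certificate, this gives a collision in $H_{\mathsf{zk}}$, $\ReqHash$, or $H_{\mathsf{sig}}$. A certificate on $(\rid,H(R_m),Y)$ that $\provider_M$ never issued gives an EUF-CMA forgery. Otherwise, the honest, non-colluding provider really computed $Y=M(R_m)$ on an authenticated channel, and the freshness of $\rid$ rules out replay across events. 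Applying Theorems~\ref{thm:control-plane} and~\ref{thm:data-plane} to the extracted witnesses gives $\Sel_{\mmode}(M;R_c,\Idx)$ and $R_m\in\Rw(R_c)$.

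The main obstacle is consistency across the separate proofs. The selection proof, the rewrite proof, and the certificate must all refer to the same $R_c$, the same $R_m$ (shared as a hidden digest), and the same $\rid$, even though the proofs are generated independently and aggregated recursively. This calls for a careful definition of the public inputs linking them, plus a joint extractor for the recursive composition whose failure probability is bounded by a union of negligible terms.

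\emph{Confidentiality} will be shown with a simulator. Given only $\ctx$, the public event data, and the fact that Eq.~\eqref{eq:verification-target} holds, the simulator runs the zero-knowledge simulator of the argument system for each (outer) proof. It replaces $c_{\mathsf{idx}},c_{\mathsf{rw}}$ and any request commitment $\ReqCommit$ by commitments to zero, which is indistinguishable by hiding. The certificate reveals only a hiding or random-padded digest of $R_m$; I would instantiate it via $\ReqCommit$ with fresh $\ReqRand$ so that this holds. Padding to the public maxima hides trace lengths. A standard hybrid sequence then completes the argument.
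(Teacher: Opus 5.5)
Your decomposition is the paper's: plane-wise reduction for soundness, and ZK simulation plus commitment hiding for confidentiality. However, the confidentiality simulator fails as you wrote it.

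After you replace $\ReqCommit$ by a commitment to zero, the serving certificate must carry $\sigma_\model$ on $H_{\mathsf{sig}}(\rid\|\model\|\ReqCommit\|h_Y)$ for the new value. A simulator holding only $\ctx$ and the public event data has no $sk_\model$, so it can output neither the real $\mathsf{SC}$ nor one on the swapped commitment. Also, $c_{\mathsf{idx}}$ and $c_{\mathsf{rw}}$ are components of the public, auditor-endorsed $\ctx$ the simulator is given, not values it may change.

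The paper avoids both problems:
\begin{itemize}
\item It keeps the commitments and $\mathsf{SC}$ fixed, and replaces only $\pi_{\mathsf{sel}}$ and $\pi_{\mathsf{rw}}$ by simulated proofs for their \emph{true} public statements. Plain zero knowledge then suffices, and no false statement about a zero commitment is ever simulated.
\item It notes that recursive aggregation hides internal values such as $h_{\dtable}$ and $N^{\mathsf{dig}}_{\mathsf{rw}}$.
\item It observes that $\mathsf{SC}$ depends on $\receivedreq$ only through $\ReqCommit$, which hides $\ReqHash(\receivedreq)$ because $\ReqRand$ is fresh.
\end{itemize}
To keep your full-replacement hybrid, the challenger would have to hold the provider's signing key, and the proofs would have to be simulated before the commitment is switched.

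In soundness, the step you leave open as the ``main obstacle'' is exactly what the paper's proof supplies, and it needs no joint extractor. The first conjunct of Eq.~\eqref{eq:verification-target} is a predicate on $\model$, $\clientreq$, $\Idx$ (and $\model_{\mathsf{pin}}$) alone, so Theorem~\ref{thm:control-plane} yields it by itself. The witness $\receivedreq$ and the other two conjuncts come entirely from Theorem~\ref{thm:data-plane}. These concern one and the same event because the client performs four checks:
\begin{itemize}
\item it requires the model in $x_{\mathsf{sel}}$ to equal the $\model$ whose key $\mathsf{Reg}[\model]$ verifies $\mathsf{SC}$;
\item it derives $e_q$ from the same $\clientreq$ that appears in $x_{\mathsf{rw}}$;
\item it verifies $\pi_{\mathsf{rw}}$ and $\mathsf{SC}$ against the same $\ReqCommit$;
\item it checks the issued $\rid$.
\end{itemize}
A union bound over the two theorems' failure events then finishes the argument.

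Similarly, in completeness, satisfying $\Cpq$ and $\mathsf C_{\mathsf{stk}}$ does not by itself give the selection circuit. That circuit also needs the score-opening, upper-layer, cross-queue and stopping constraints (Theorem~\ref{thm:efsearch-proof}). The paper simply invokes part (i) of Theorems~\ref{thm:control-plane} and~\ref{thm:data-plane}, together with the shared $\model$, $\ReqCommit$ and $\rid$.
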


Appendix~\ref{sec:method-e2e} gives the proof.

%% file: section/method_subsection/control-plane.tex
\subsection{Control-plane Integrity}
\label{sec:method-control}



Recall the model-selection condition $\Sel_{\mmode}$ defined in Sec.~\ref{sec:bg-routing}. On a fresh-selection turn, $\Sel_{\mathsf{fresh}}(\model;\clientreq,\Idx)$ requires $\model\in\Elig(\clientreq,\Idx)$. We construct the model-selection proof $\pi_{\mathsf{sel}}$ to certify this condition. 
We first describe the committed semantic index, then construct score-opening, upper-layer, and level-$0$ \textsf{efSearch} subproofs $\pi_{\mathsf{score}}$, $\pi_{\mathsf{upper}}$, and $\pi_{\mathsf{ef}}$. The aggregation circuit $\mathsf{C}_{\mathsf{sel}}$ verifies and links them. 
Fig.~\ref{fig:control-plane-composition} summarizes the subproofs and their shared interfaces.


\begin{figure}[t]
\centering
\includegraphics[width=0.9\linewidth]{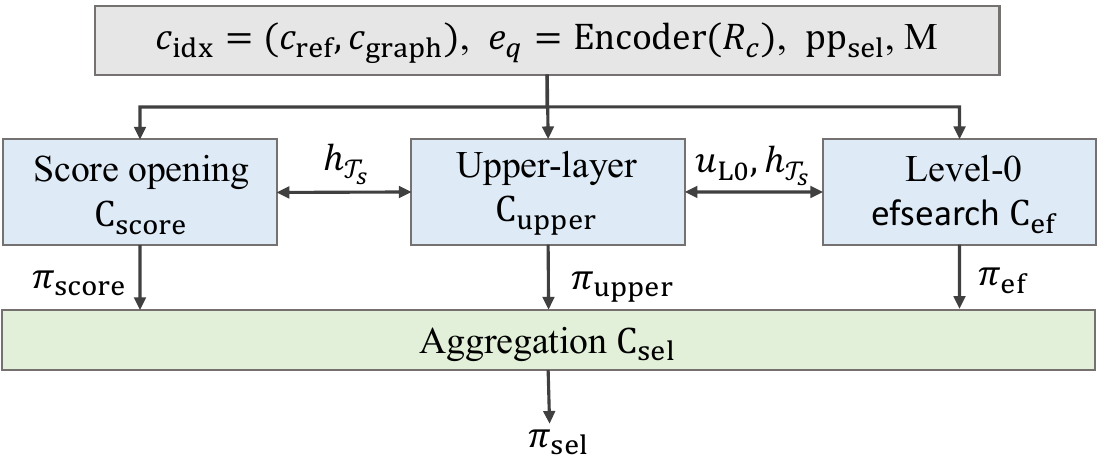}
\caption{Composition of the model-selection proof on a fresh-selection turn. The aggregation circuit verifies the subproofs, equates their $h_{\dtable}$ values that bind the score table, links the upper-layer output $\EntryNode$ to the level-$0$ input, and produces $\pi_{\mathsf{sel}}$.}
\label{fig:control-plane-composition}
\end{figure}


\subsubsection{Committed semantic index}
As initialized in Sec.~\ref{sec:method-initial}, $c_{\mathsf{idx}}=(\RefCommit,\GraphCommit)$ commits to reference rows $(u,v_u,\mathsf{mid}(u))$ and graph rows $(\ell,u,\mathsf{Pad}_{B_\ell}(\mathsf{Nbr}_{\ell}(u)))$ from Sec.~\ref{sec:bg-routing}. Here $\mathsf{Nbr}_{\ell}(u)$ lists the real neighbors in stored order, and the public bound $B_\ell$ fixes the padded row width. Whenever a subproof uses a reference tuple or neighbor list, it authenticates the corresponding row against $\RefCommit$ or $\GraphCommit$.



\subsubsection{Score-opening proof}
\label{sec:method-control-score} 

The same node score may be consumed multiple times during semantic search. The score-opening proof therefore certifies each required score once and makes the score table available to the traversal subproofs through a private shared interface.

\uline{Proof statement.} For a public capacity bound, the prover forms a canonical private score table $\dtable$: its active rows have unique node identifiers in increasing order, and its unused rows use fixed padding. Let $e_q=\encoder(\clientreq)$. For every active row $(u,\score_u,\mathsf{mid}(u))$, the circuit $\cscore$ authenticates $(u,v_u,\mathsf{mid}(u))$ against $\RefCommit$ and checks $\score_u=\score(e_q,v_u)$. It exposes an internal digest $h_{\dtable}$ that binds the table across the three subproofs and remains private in the final model-selection aggregation proof.


\uline{Parallelization.} Computing many high-dimensional scores against the same query is a major circuit cost. We therefore design a parallel dot-product gate that shares one copy of the query coordinates across several node-score computations in each gate row. Relative to a generic formula-gate realization, this reduces score-computation rows by about $64\%$. Appendix~\ref{app:score-opening-gate} gives the concrete construction.

The upper-layer and level-$0$ \textsf{efSearch} subproofs bind every consumed score to the same private $\dtable$. Both traversals prefer a larger certified score and retain the first encountered node on a tie. Appendix~\ref{app:wc-instantiation} gives the exact comparison encoding. 



\subsubsection{Upper-layer proof}
\label{sec:method-control-upper}
Let $u_{\mathsf{ent}}$ be the entry node at the highest layer $L_{\max}$. The upper-layer proof certifies the greedy traversal from $u_{\mathsf{ent}}$ to the level-$0$ entry. At each step, the traversal moves to the best neighbor only if it outranks the current node, or otherwise retains the current node and descends. The node retained after layer $1$ becomes the level-$0$ entry. Alg.~\ref{alg:hnsw-upper} gives this procedure.

\uline{Proof statement.} The private witness is a sequence $(\ell_i,\UpperNode{i},\mathsf{Nbr}_i,\mathsf{best}_i,\mathsf{stop}_i)_{i=1}^{\ustep}$ padded to the public step bound $\ustep$. For some $N_{\mathsf{upper}}^{\mathsf{act}}\geq1$, steps $1,\ldots,N_{\mathsf{upper}}^{\mathsf{act}}$ are enabled and the remaining slots are disabled. Step $i$ records the current layer and node $(\ell_i,\UpperNode{i})$, the real neighbors $\mathsf{Nbr}_i$, the claimed best node $\mathsf{best}_i\in\{\UpperNode{i}\}\cup\mathsf{Nbr}_i$, and a bit $\mathsf{stop}_i$ indicating descent after this step. The circuit $\cupper$ checks:

\emph{(i) Authentication.} Every traversed neighbor list is authenticated against $\GraphCommit$, and required scores are bound to $\dtable$.

\emph{(ii) Greedy traversal.}
Each enabled step scans $\UpperNode{i}$ and then $\mathsf{Nbr}_i$ and retains the first maximum score. The circuit checks this value as $\mathsf{best}_i$, sets $\mathsf{stop}_i=1\Longleftrightarrow\mathsf{best}_i=\UpperNode{i}$, and, for $i<N_{\mathsf{upper}}^{\mathsf{act}}$, enforces the layer transition.

\emph{(iii) Boundary conditions.} Every enabled step satisfies $1\leq\ell_i\leq L_{\max}$, and the first starts at $(L_{\max},u_{\mathsf{ent}})$. Let $i_\star=N_{\mathsf{upper}}^{\mathsf{act}}$ be the last enabled step. The circuit requires $\ell_{i_\star}=1$, $\mathsf{stop}_{i_\star}=1$, and $\mathsf{best}_{i_\star}=\UpperNode{i_\star}$, and sets $\EntryNode=\UpperNode{i_\star}$.


\subsubsection{Level-0 \textsf{efSearch} proof}
\label{sec:method-control-ef}

The level-$0$ proof certifies the top-$k$ output $F^{\mathsf{topk}}_{\WorkQ}$ of $\textsf{efSearch}$ starting from $\EntryNode$ and the selected-model membership condition $\exists u\in F^{\mathsf{topk}}_{\WorkQ}:\mathsf{mid}(u)=\model$. The search maintains a candidate queue $\CandQ$, a working queue $\WorkQ$ of capacity $\EfCap\geq k$, and a visited set. Each round expands the best candidate and stops when no remaining candidate can improve $\WorkQ$ (Alg.~\ref{alg:hnsw-efsearch}).

Rather than replaying the queue states, the prover supplies push and pop projections of $\CandQ$ and $\WorkQ$. It validates both queues with the priority-queue checker from Sec.~\ref{sec:zk-priority-queue}, then enforces equivalent cross-queue and stopping conditions below.

\uline{Proof statement.} The circuit $\cef$ certifies three conditions:

\emph{(i) Queue-trace validity.} The circuit inlines the priority-queue checker from Sec.~\ref{sec:zk-priority-queue} for both queues, treating $\CandQ$ as a max-priority queue and $\WorkQ$ as a min-priority queue under the search key. To enforce capacity $\EfCap$, $\WorkQ$ starts with $\EfCap$ sentinels below all real nodes. Each update pushes a real node and pops the minimum, either discarding the new node or replacing a sentinel or the current worst node. Thus, $\WorkQ$ retains the best nodes seen so far up to capacity, matching the conditional insertion in Alg.~\ref{alg:hnsw-efsearch}.

\emph{(ii) Cross-queue consistency.} Individually valid queue traces must also describe the same search execution. Let $T$ be the public padded length. To link the traces, the circuit uses the working-queue update sequence 
$\WorkUpdTrace=
\bigl((n_i,\tau_i,\mathsf{En}_i,\score_i,\AdmitBit{i})\bigr)_{i\in[T]}.$ 
The sequence starts with $n_1=\EntryNode$ and then follows $\CandQ$ pop order, with each popped node contributing its padded neighbor list. The value $\tau_i$ reserves row $i$'s position in the $\CandQ$ operation order as in Sec.~\ref{sec:zk-priority-queue} and serves as $n_i$'s tie-breaking position when enabled. The bit $\mathsf{En}_i$ selects the first-visit rows that form the real-node work-push sequence $\WorkPushTrace$, while $\AdmitBit{i}$ further selects from them the candidate-push sequence $\CandPushTrace$. Repeated and padding rows have $\mathsf{En}_i=\AdmitBit{i}=0$. As described below, the circuit checks this relation in three steps:

First, \emph{node-order authentication} binds the node sequence in $\WorkUpdTrace$ to candidate expansion. The first row has $n_1=\EntryNode$. For each node popped from $\CandQ$, the next fixed-width block of $n_i$ values is its padded level-$0$ neighbor row authenticated against $\GraphCommit$. The blocks follow pop order, and unused blocks are padding. Thus, the sequence contains the entry followed by authenticated neighbor identifiers in $\CandQ$ pop order.

Second, the \emph{first-visit filter} implements the visited set. The circuit checks a copy of $(n_i,i,\mathsf{En}_i)_{i\in[T]}$ sorted by node and original row index. A multiset-equality check binds the copies, and adjacent sorted rows enable only the earliest row of each real node. Hence, $\mathsf{En}_i=1$ exactly when $n_i$ is real and has not appeared earlier. For each enabled row, the circuit looks up the unique active $\dtable$ row for $n_i$, binds its score to $\score_i$, and performs the $\WorkQ$ push--pop using $(n_i,\score_i,\tau_i)$.

Third, the \emph{admission filter} links $\WorkQ$ admissions to $\CandQ$ pushes. For each enabled row $i$, let $\WorkEvictItem{i}$ be the node immediately popped by its update. The circuit sets $\AdmitBit{i}=1$ exactly when $\mathsf{En}_i=1$ and $\WorkEvictItem{i}\ne n_i$, meaning that $n_i$ remains in $\WorkQ$. It then forms the push trace of $\CandQ$: $\CandPushTrace=\bigl((\tau_i,n_i):\AdmitBit{i}=1\bigr)_{i\in[T]}$. Thus, $\CandQ$ receives exactly the admitted nodes with the same positions and order.


\emph{(iii) Stopping rule.} The terminal state represents either an empty-$\CandQ$ stop or a threshold stop just before its final pop, where $\WorkQ$ is full and the best node in $\CandQ$ is worse than the worst node in $\WorkQ$. To avoid scanning the live $\WorkQ$, the circuit permits an equivalent output-preserving suffix and checks each round locally. In any full-$\WorkQ$ round with an admission, the first admission evicts exactly the round-start worst node because preceding rows were rejected. The circuit requires the popped candidate to be no worse than this eviction. A below-threshold round therefore admits no update, leaves $\WorkQ$ unchanged, and pushes no candidate. Since the popped candidate was the $\CandQ$ maximum, every later candidate is no better, so the suffix is inert and can be removed to obtain an equivalent stopped execution with the same final $\WorkQ$.

The ordered real portion of the certified final $\WorkQ$ determines $F^{\mathsf{topk}}_{\WorkQ}$, which is the $\mathsf{SS}(e_q,\Idx)$ in Sec.~\ref{sec:bg-routing} and the circuit checks $\mathsf{mid}(u)=\model$ for some $u\in \mathsf{SS}(e_q,\Idx)$. Together, these checks produce $\pi_{\mathsf{ef}}$. Appendix~\ref{app:wc-instantiation} gives the detailed constraints, and Appendix~\ref{app:proof-efsearch} proves correctness.

The router $\router$ generates the three subproofs in parallel. Aggregation binds them to the same $e_q$, $c_{\mathsf{idx}}$, $\mathsf{pp}_{\mathsf{sel}}$, and internal score-table hash $h_{\dtable}$. It also equates $\cupper$'s output $\EntryNode$ with $\cef$'s input. The resulting $\pi_{\mathsf{sel}}$ establishes $\Sel_{\mathsf{fresh}}(\model;\clientreq,\Idx)$, or equivalently, $\model\in\Elig(\clientreq,\Idx)$. Appendix~\ref{app:zk-control-agg-protocol} gives the detailed protocol. On the other hand, a pinned turn requires no new
$\pi_{\mathsf{sel}}$. The client instead performs a \emph{pinned-model check}, requiring $M=M_{\mathsf{pin}}$ for the model retained from the session's initial accepted fresh-selection turn. Thus, the model-selection proof on a fresh-selection turn and the pinned-model check on a pinned turn both establish the corresponding condition $\Sel_{\mmode}$ in Eq.~\eqref{eq:eligible-mode}.


\begin{theorem}[Control-plane integrity]
\label{thm:control-plane}
Under collision-resistant hashing, computationally binding commitments, and a complete and knowledge-sound argument system (Sec.~\ref{sec:bg-zkp}), the control-plane verification satisfies:
\begin{enumerate}[label=(\roman*),itemsep=0pt,topsep=1pt]
    \item \uline{Completeness.}
    For every supported honest serving event,
    $\Sel_{\mmode}(M;R_c,\Idx)$ holds and the corresponding control-plane verification accepts.

    \item \uline{Soundness.}
    Except with negligible probability, every accepted verification establishes
    $\Sel_{\mmode}(M;R_c,\Idx)$, where $\Idx$ is bound by the endorsed public context and $M_{\mathsf{pin}}$ is the client-held session state when $\mmode=\mathsf{pinned}$.
\end{enumerate}
\end{theorem}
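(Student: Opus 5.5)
The proof splits on $\mmode$ and, within each mode, treats completeness and soundness separately.

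\emph{Pinned turns.} No proof is involved: the client checks $M=M_{\mathsf{pin}}$ against its own session state. Completeness holds because an honest router returns $M_{\mathsf{pin}}$. Soundness holds because acceptance is literally the second case of Eq.~\eqref{eq:eligible-mode}. The session's initial turn was fresh, so it is covered by the fresh case below; that is what gives $M_{\mathsf{pin}}$ its meaning.

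\emph{Fresh turns, completeness.} An honest router holds $\Idx$ with openings of $c_{\mathsf{idx}}$. It runs Alg.~\ref{alg:hnsw-upper} and Alg.~\ref{alg:hnsw-efsearch} on $e_q$ and picks $M\in\Elig(R_c,\Idx)$, so $\Sel_{\mathsf{fresh}}$ holds. It then builds the witnesses:
\begin{itemize}
\item the canonical score table $\dtable$;
\item the upper-layer step sequence;
\item the push/pop projections of $\CandQ$ and $\WorkQ$ with positions and annotations, plus the update sequence $\WorkUpdTrace$.
\end{itemize}
Each witness satisfies its circuit by construction, for the following reasons.
\begin{itemize}
\item Every authenticated row is opened correctly.
\item The queue traces produced by a real execution satisfy conditions (i)--(iii) of Theorem~\ref{thm:pq-equiv-checker}, by its ``only if'' direction. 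The inlined monotone-stack instance holds because the actual pop-then-push run generates $\TopTrace$ and $\ParentTab$.
\item The first-visit and admission filters reproduce the visited-set and conditional-insertion logic.
\item The stopping check holds because the honest run stops at the threshold.
\end{itemize}
Support guarantees the capacity bounds are met. Completeness of the argument system then gives accepting subproofs, and its completeness again gives an accepting aggregation proof.

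\emph{Fresh turns, soundness.} Given an accepted $\pi_{\mathsf{sel}}$, I would first apply knowledge soundness of the aggregation circuit to extract accepting subproofs $\pi_{\mathsf{score}},\pi_{\mathsf{upper}},\pi_{\mathsf{ef}}$. Knowledge soundness again extracts their witnesses, which are consistent on $e_q$, $c_{\mathsf{idx}}$, $h_{\dtable}$ and $\EntryNode$. Collision resistance of the $h_{\dtable}$ digest forces all three subproofs to use the same table. Binding of $\RefCommit$ and $\GraphCommit$ forces every opened reference row and neighbor row to be the committed one. Together these mean every consumed score equals $\score(e_q,v_u)$ for the true $v_u$, and every neighbor list is the true $\mathsf{Nbr}_\ell(u)$. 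The claim then reduces to a deterministic statement: any witness satisfying $\cupper$ and $\cef$ over the true $\Idx$ yields exactly $\mathsf{SS}(e_q,\Idx)$. Membership $\mathsf{mid}(u)=M$ for some output $u$ then gives $M\in\Elig(R_c,\Idx)$.

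For the upper layer, I would induct on the step index. The first-maximum rule and the $\mathsf{stop}$ bits determine each step uniquely from the previous one, so $\EntryNode$ is the honest greedy output. For level $0$, I would invoke the ``if'' direction of Theorem~\ref{thm:pq-equiv-checker} for both queues, which makes them genuine queue executions. An induction over rounds of Alg.~\ref{alg:hnsw-efsearch} then matches the extracted execution to the honest one. The invariant is that after each candidate pop, the visited set, $\CandQ$ and $\WorkQ$ coincide with the algorithm's state. It is maintained in three parts:
\begin{itemize}
\item node-order authentication fixes which neighbors are scanned;
\item the sorted first-visit copy fixes $\mathsf{En}_i$;
\item the eviction test fixes $\AdmitBit{i}$ and hence the $\CandQ$ pushes.
\end{itemize}
This step relies on the deferred result in Appendix~\ref{app:proof-efsearch}, and tie-breaking by $\tau_i$ and first encounter keeps the order strict.

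The main obstacle is the stopping rule, since the circuit accepts an output-preserving suffix rather than the exact honest stop. I would handle it by taking the first round whose popped candidate is below the round-start worst element of a full $\WorkQ$. At that round the check forbids any admission, so no $\CandQ$ push occurs. Because $\CandQ$ pops in decreasing order, every later popped candidate is also below threshold and admits nothing. Hence the final $\WorkQ$ equals the honest one at its threshold stop, or at the empty-$\CandQ$ stop if no such round exists. Finally, the negligible failure probability is the sum of the knowledge-soundness errors and the collision and binding advantages.
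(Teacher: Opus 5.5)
Your overall architecture matches the paper's proof: the split on $\mmode$, recursive extraction, a common $h_{\dtable}$ with collision resistance, binding of the index commitments, induction over upper-layer steps, Theorem~\ref{thm:efsearch-proof} at level $0$, and the inert-suffix stopping argument. But the soundness reduction skips a step the paper treats as essential.

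You assert that after extraction and binding, the claim reduces to a deterministic statement about any witness satisfying $\cupper$ and $\cef$. For the circuits as built this is false. Every $\MSetEq$ in the model-selection subproofs is enforced only as a randomized grand-product identity, at challenges obtained by hashing the transcript. This covers:
\begin{itemize}
\item queue and stack accounting;
\item the private lookups into $\ParentTab$ and $\RemovalTab$, which are themselves realized by multiset equalities;
\item the sorted copy behind the first-visit filter;
\item the score-table consistency of Eq.~\eqref{eq:subproof-score-table-consistency}.
\end{itemize}
A witness extracted by knowledge soundness satisfies those identities, not exact multiset equality. Theorems~\ref{thm:stack-checker}, \ref{thm:pq-equiv-checker} and~\ref{thm:efsearch-proof} all presuppose exact equalities, so you cannot invoke them directly. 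You need the argument of Lemma~\ref{lem:multiset-error}. Model $H_{\mathsf{zk}}$ as a random oracle. Then unequal sequences give a nonzero product difference of bounded degree, and Schwartz--Zippel bounds each independent challenge test by (degree)$/|\mathbb F|$. A union bound over the polynomially many oracle queries of the adversary and extractor, plus transcripts first evaluated at extraction, makes all checks exact except with negligible probability. This event is also missing from your final failure-probability sum, which lists only extraction, collision and binding failures.

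The same gap undercuts your claim that every consumed score equals $\score(e_q,v_u)$. A common $h_{\dtable}$ only shows that the three subproofs hold one table. What ties each traversal's score uses to that table is the sorted-copy multiset equality with adjacent-run checks in Eq.~\eqref{eq:subproof-score-table-consistency}. You do not mention it, and it again needs exactness.

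Two smaller points:
\begin{itemize}
\item Binding only shows that an opened row lies in the committed tree. Concluding that it is \emph{the} row of its key uses the endorsement's one-row-per-key invariant.
\item Your stopping case split (``empty-$\CandQ$ stop if no such round exists'') omits traces with no certified below-threshold round that end via the second disjunct of Eq.~\eqref{eq:efsearch-terminal}. These correspond to a threshold stop on the algorithm's next pop.
\end{itemize}
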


%% file: section/method_subsection/data-plane.tex
\subsection{Data-plane Integrity}
\label{sec:method-data}
This subsection constructs the rewrite proof $\pi_{\mathsf{rw}}$ and serving certificate $\mathsf{SC}$ for the \DataPlane. Both use a shared hiding commitment $\ReqCommit$ to the model-received request. We present the certificate first to authenticate $\ReqCommit$, then the proof to certify that the committed request satisfies the rewrite relation.


\subsubsection{Serving certificate}
\label{sec:method-data-sig}
Using the signature interface from Sec.~\ref{sec:bg-sig}, the serving certificate authenticates a provider response and binds it to the request received by the model. For each event, the router $\router$ samples $\ReqRand$ and sets $\ReqCommit=\Commit(\ReqHash(\receivedreq);\ReqRand)$. Here $\ReqHash$ binds the messages in order, including their roles and contents, as detailed in Appendix~\ref{app:hash-instantiation}. The router $\router$ sends $(\rid,\model,\receivedreq,\ReqRand)$ to $\provider_\model$. The provider recomputes $\ReqCommit$, evaluates $Y=\model(\receivedreq)$, and constructs the certificate below.
\begin{equation}
\begin{alignedat}{2}
\ReqCommit&=\Commit(\ReqHash(\receivedreq);\ReqRand),\; &
h_Y&=H_{\mathsf{sig}}(Y),\\[-2pt]
d_{\mathsf{srv}}&=H_{\mathsf{sig}}(\rid\|\model\|\ReqCommit\|h_Y),\; &
\sigma_\model&\leftarrow\mathsf{Sign}_{sk_\model}(d_{\mathsf{srv}}),\\[-2pt]
\mathsf{SC}&=(\rid,\ReqCommit,h_Y,\sigma_\model).&&
\end{alignedat}
\label{eq:serving-certificate}
\end{equation}


For verification, the client $\client$ obtains $pk_\model=\mathsf{Reg}[\model]$, checks $h_Y=H_{\mathsf{sig}}(Y)$, recomputes $d_{\mathsf{srv}}$ from Eq.~\eqref{eq:serving-certificate}, and accepts only if $\rid$ equals the identifier issued for this request and $\mathsf{Verify}_{pk_\model}(d_{\mathsf{srv}},\sigma_\model)=1$.


\subsubsection{ZK rewrite proof}
\label{sec:method-data-rewrite}

The rewrite proof establishes the request relation in Sec.~\ref{sec:bg-routing} for variable-length requests. It preserves client messages and reduces validation of router-supplied content to local placement checks and committed lookups. We first formalize the relation and then describe the proof statement.



\uline{Rewrite relation.} Routers commonly assemble model requests by adding complete messages, such as system instructions or retrieval context, or attaching prompt-formatting fragments around existing content~\cite{notdiamond-prompt-optimization,requesty_prompt_optimization,portkey_prompt_engineering_studio}. Fig.~\ref{fig:authorized-rewrite} instantiates both forms around one client message. Let $\clientreq=[x_1,\ldots,x_{N_{\mathsf{cmsg}}}]$ with $N_{\mathsf{cmsg}}\geq1$. A request in $\trans(\clientreq)$ preserves these messages in order, inserts messages before, between, or after them, and may prepend or append fragments to each client-message content. Formally,
\begin{equation}
\begin{aligned}
\receivedreq\in\trans(\clientreq)\quad\Longleftrightarrow\quad
\receivedreq={}&G_0\mathbin{\Vert}[\mathsf{wrap}(x_1)]\mathbin{\Vert}G_1\mathbin{\Vert}\cdots\\[-2pt]
&{}\mathbin{\Vert}[\mathsf{wrap}(x_{N_{\mathsf{cmsg}}})]{}\mathbin{\Vert}G_{N_{\mathsf{cmsg}}},
\end{aligned}
\label{eq:gformat}
\end{equation}
where each $G_i$, for $i\in\{0,\ldots,N_{\mathsf{cmsg}}\}$, contains at most $\MsgBound$ messages $g_{i,j}$ drawn from $\gapset$. Each entry contains a role and content, allowing request-dependent selection of messages such as system instructions or retrieval context.

\begin{figure}[t]
    \centering
    \includegraphics[width=\linewidth]{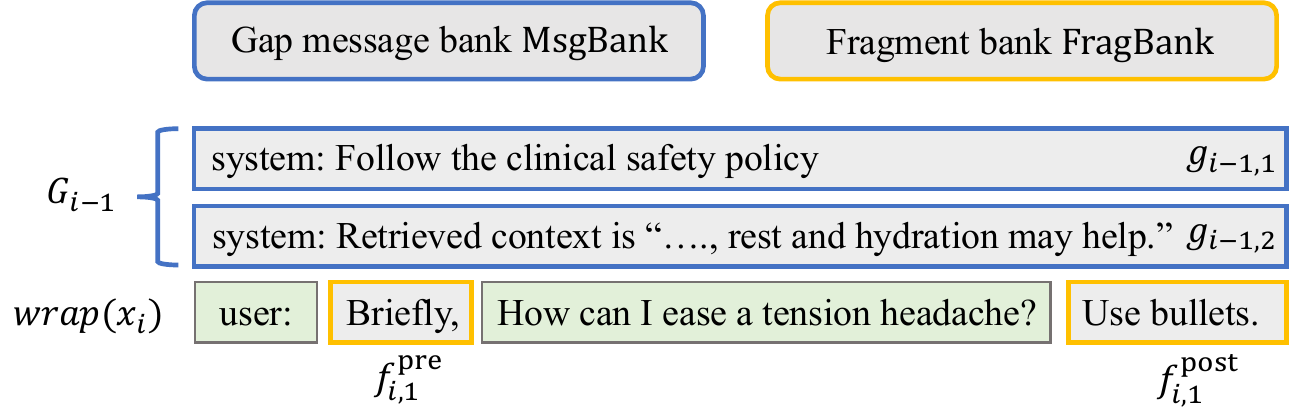}
    \caption{Endorsed rewrite example. The complete messages in $G_{i-1}$ are selected from $\gapset$, while the prefix and suffix fragments in $\mathsf{wrap}(x_i)$ are selected from $\fragset$.}
    \label{fig:authorized-rewrite}
\end{figure}

The wrapper $\mathsf{wrap}(x_i)$ preserves both fields of $x_i=(\mathsf{role}_i,\mathsf{content}_i)$, and permits authenticated fragments around the content. Thus $\mathsf{wrap}(x_i)$ is
\begin{equation}
\begin{aligned}
    \bigl(\mathsf{role}_i,\,
    f^{\mathsf{pre}}_{i,1}\|\cdots\|f^{\mathsf{pre}}_{i,N^{\mathsf{frag}}_{i,\mathsf{pre}}}\|\mathsf{content}_i \|f^{\mathsf{post}}_{i,1}\|\cdots\|f^{\mathsf{post}}_{i,N^{\mathsf{frag}}_{i,\mathsf{post}}}\bigr).
\end{aligned}
    \label{eq:fformat}
\end{equation}

For each $S\in\{\mathsf{pre},\mathsf{post}\}$ and $i\in[N_{\mathsf{cmsg}}]$, $(f^S_{i,1},\ldots,f^S_{i,N^{\mathsf{frag}}_{i,S}})$ denotes the fragments placed on that side of message $i$, where $0\leq N^{\mathsf{frag}}_{i,S}\leq\FragBound$. All selected fragments $\mathsf{Fragments}=\{f^S_{i,j}\}$ must belong to $\fragset$. The relation constrains placement and bank membership, allowing different system prompts, retrieval passages, and provider formats. Eqs.~\eqref{eq:gformat} and~\eqref{eq:fformat} also preserve every original role and content, thereby excluding common rewrite-relation violations, including message deletion, replacement, and reordering.

\uline{Proof statement.} The rewrite proof checks the endorsed rewriting against the same $\ReqCommit$ authenticated by the serving certificate. The public inputs are $\clientreq$, $\ReqCommit$, $\MsgCommit$, $\FragCommit$, and $\mathsf{pp}_{\mathsf{rw}}$. The private witness contains $\ReqRand$, the selected additions, their bank-opening and hashing data, and the message counts bounded by the leaf capacities. The circuit enforces:
\begin{equation}
\begin{aligned}
&\mathsf{RequestCommitment}(\receivedreq,\ReqRand,\ReqCommit)\\
&\wedge\ \mathsf{LayoutConsistency}(\clientreq,\receivedreq,(G_i)_{i=0}^{N_{\mathsf{cmsg}}},\mathsf{Fragments})\\
&\wedge\ \mathsf{GapIn}((G_i)_{i=0}^{N_{\mathsf{cmsg}}},\MsgCommit)\\
&\wedge\ \mathsf{FragIn}(\mathsf{Fragments},\FragCommit).
\end{aligned}
\label{eq:rewrite-proof-relation}
\end{equation}

$\mathsf{RequestCommitment}$ checks $\ReqCommit=\Commit(\ReqHash(\receivedreq);\allowbreak \ReqRand)$, while $\mathsf{LayoutConsistency}$ enforces Eqs.~\eqref{eq:gformat} and~\eqref{eq:fformat}. $\mathsf{GapIn}$ and $\mathsf{FragIn}$ authenticate each selected addition against the corresponding commitment through committed lookups.

\uline{Digest reuse.} To avoid rehashing complete inserted messages, we commit their digests rather than raw bytes under $\MsgCommit$. After $\mathsf{GapIn}$ authenticates an entry, $\mathsf{RequestCommitment}$ reuses the digest in $\ReqHash$. $\FragCommit$ instead binds fragment digests only for membership checks, and each selected fragment must still be combined with the original content and hashed to obtain the digest of $\mathsf{wrap}(x_i)$. Appendix~\ref{app:hash-instantiation} gives the details.



To support variable-length requests, the router partitions the client messages into consecutive ranges and proves them in parallel. Each leaf checks the predicates above for its range and advances a running request-hash state. Using recursive composition (Sec.~\ref{sec:bg-zkp}), the aggregation circuit verifies the leaf proofs, checks their ranges, and enforces agreement between adjacent boundary values and the final commitment $\ReqCommit$. This produces $\pi_{\mathsf{rw}}$. Appendix~\ref{app:zk-rewrite-protocol} gives the detailed protocol.



The client $\client$ verifies $\pi_{\mathsf{rw}}$ and $\mathsf{SC}$ with the same $\ReqCommit$. The rewrite proof certifies the rewrite relation, while the certificate authenticates the provider's response for that request. Together they establish endorsed rewriting and faithful relay, the latter two conditions of Eq.~\eqref{eq:verification-target}. 


\begin{theorem}[Data-plane integrity]
\label{thm:data-plane}
Under the trust model in Sec.~\ref{sec:threat}, collision-resistant hashing, computationally binding commitments, EUF-CMA signatures, and a complete and knowledge-sound argument system (Sec.~\ref{sec:bg-zkp}), the rewrite proof and serving certificate satisfy:
\begin{enumerate}[label=(\roman*),itemsep=0pt,topsep=1pt]
    \item \uline{Completeness.}
    For any supported honest execution with an endorsed rewrite, the rewrite proof and serving certificate are accepted.

    \item \uline{Soundness.}
    Except with negligible probability, whenever the client accepts a serving certificate $\mathsf{SC}$ and rewrite proof $\pi_{\mathsf{rw}}$ bound to the same request commitment, the serving event identified by $\rid$ satisfies $Y=\model(\receivedreq)$ for some $\receivedreq\in\trans(\clientreq)$.
\end{enumerate}
\end{theorem}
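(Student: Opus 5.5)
For completeness, I would follow the honest execution step by step. The router computes $\ReqCommit=\Commit(\ReqHash(\receivedreq);\ReqRand)$ and sends $(\rid,\model,\receivedreq,\ReqRand)$ to $\provider_\model$ over the authenticated channel. The honest provider recomputes the same $\ReqCommit$, evaluates $Y=\model(\receivedreq)$, and signs $d_{\mathsf{srv}}$ as in Eq.~\eqref{eq:serving-certificate}. The client's checks on $h_Y$ and $d_{\mathsf{srv}}$ then pass, and the signature verifies by correctness of the signature scheme. For the rewrite proof, an endorsed $\receivedreq\in\trans(\clientreq)$ has a decomposition as in Eqs.~\eqref{eq:gformat} and~\eqref{eq:fformat} whose additions lie in $\gapset$ and $\fragset$. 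Taking this decomposition, the bank openings, $\ReqRand$, and the running hash states gives a witness satisfying every conjunct of Eq.~\eqref{eq:rewrite-proof-relation} in each leaf, provided the event is supported so the leaf capacities suffice. The boundary agreements in the aggregation circuit hold by construction. Completeness of the argument system, applied to the leaves and then to the recursive aggregation, yields an accepted $\pi_{\mathsf{rw}}$.

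For soundness, I would argue by a hybrid or reduction structure with three extraction and binding steps.
\begin{enumerate}
\item Knowledge soundness of the aggregation proof, applied recursively to the leaf proofs, extracts leaf witnesses for consecutive ranges together with boundary hash states. These states agree across ranges and end in the final $\ReqCommit$. From them I assemble a request $R^\ast$, randomness $\rho^\ast$, and additions such that $\ReqCommit=\Commit(\ReqHash(R^\ast);\rho^\ast)$, $\mathsf{LayoutConsistency}$ holds, and every lookup opens against $\MsgCommit$ or $\FragCommit$. I need the leaf range checks to ensure the ranges partition $[N_{\mathsf{cmsg}}]$ with no gap or overlap.
\item Binding of $\MsgCommit$ and $\FragCommit$, which are endorsed in $\ctx$, together with collision resistance of the digests used under digest reuse, shows the looked-up entries are genuine members of $\gapset$ and $\fragset$. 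Hence $R^\ast\in\trans(\clientreq)$.
\item Let $\receivedreq$ be the request the provider actually received for $\rid$ and $\model$. Since the client accepted $\mathsf{SC}$, EUF-CMA security and collision resistance of $H_{\mathsf{sig}}$ show that $\sigma_\model$ was produced by the honest provider of $\model$. This relies on non-collusion and on the provider signing only for requests it processed. So the honest provider signed exactly $(\rid,\model,\ReqCommit,H_{\mathsf{sig}}(Y))$ after computing $\ReqCommit$ from $\receivedreq$ and returning $Y=\model(\receivedreq)$. Binding of $\Commit$ and collision resistance of $\ReqHash$ then force $\receivedreq=R^\ast$, and therefore $Y=\model(R^\ast)$ with $R^\ast\in\trans(\clientreq)$.
\end{enumerate}

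The main obstacle I expect is the third step. It requires a clean reduction showing that a forged or replayed certificate breaks EUF-CMA. Freshness of $\rid$ has to rule out reuse of a certificate from another event, and I would handle this by treating the provider's signing as the signing oracle. The reduction must also keep two binding events separate: a commitment-binding break (two openings of $\ReqCommit$) and a $\ReqHash$ collision. To keep the argument simple, I would use a union bound over these failure events, each negligible.
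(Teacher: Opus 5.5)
Your proposal is correct and follows the paper's proof. The certificate gives, via EUF-CMA, collision resistance of $H_{\mathsf{sig}}$ and provider honesty, a received request $R_{\mathsf{srv}}$ with $Y=\model(R_{\mathsf{srv}})$ that opens $\ReqCommit$. Recursive knowledge soundness plus binding to the endorsed banks gives an extracted $R_{\mathsf{rw}}\in\trans(\clientreq)$ that opens the same $\ReqCommit$, and the two are identified by splitting into a commitment-binding break versus a $\ReqHash$ collision (Lemma~\ref{lem:reqhash-cr}). The paper only differs in packaging, stating the certificate and rewrite arguments as Theorems~\ref{thm:serving-cert} and~\ref{thm:rewrite-proof} and citing them; one small ordering point is that, because of digest reuse, inserted messages appear in the extracted witness only as digests, so $R^\ast$ can be assembled only after your step~2, which is why the paper's extractor is given the endorsed banks to replace each digest by its message.
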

Appendix~\ref{app:proof-data-plane} gives the proof.

%% file: section/evaluation_v2.tex
\section{Evaluation}
\label{sec:evaluation}






\subsection{Experimental Setup}
\label{sec:eval-setup}

\noindent\textbf{Implementation.}
We implement \toolname\ in Rust using Plonky2~\cite{polygonzero2022plonky2}. The semantic index uses standard 16-bit affine quantization~\cite{jacob2018quantization}. Appendix~\ref{app:quantized-fidelity} reports a routing-quality ablation against floating-point search. We instantiate $H_{\mathsf{zk}}$ with Poseidon~\cite{grassi2021poseidon}, $H_{\mathsf{sig}}$ with SHA-256~\cite{pub2012secure}, and provider signatures with Ed25519~\cite{brendel2021provable}. The model-selection subproofs and rewrite leaf proofs are generated in parallel before their respective aggregation steps. All measurements enable zero-knowledge mode.
All experiments are conducted on an Alibaba Cloud \texttt{ecs.r9ae.8xlarge} with 32 vCPUs and 256~GiB of memory. We use 32 threads by default. Runs exceeding the host-memory limit are reported as \textcolor{oommuted}{OOM}.

\noindent\textbf{Baselines.}
We compare \toolname\ with two general-purpose zkVMs, SP1~\cite{succinct_sp1_repo} and Jolt~\cite{arun2024jolt}. They prove the same quantized semantic search, rewrite, and binary-heap relations by executing the corresponding guest programs directly. We run SP1 in compressed mode, which compresses its sharded execution into one proof, and report each system's client-facing serialized proof for a fair comparison of proof size and verification cost. We additionally implement direct Plonky2 replay for model selection and the priority-queue checker to isolate the benefits of control-plane specializations.

\noindent\textbf{Datasets.}
We evaluate \toolname\ on real-world semantic-search and conversational workloads. For the control plane, the public RouteLLM corpus~\cite{ong2025routellm} provides 55,361 1,536-dimensional training embeddings and 1,531 disjoint test queries. For the data plane and end-to-end evaluation, MT-Bench-101~\cite{bai2024mt} provides 1,388 multi-turn conversations and 4,208 turns across 13 task types. To construct realistic rewrite banks, we use GPT-5.4~\cite{gpt5-4} offline to populate $\gapset$ and $\fragset$ with reusable messages and fragments.


\input{tables/control-plane-cost}

\subsection{RQ1: Control-Plane Performance}
\label{sec:eval-control}

We evaluate control-plane performance at three levels: the model-selection proof, subproofs, and the standalone priority-queue checker. \toolname's model-selection proof is 1,307$\times$ faster than SP1, while its priority-queue checker is 17.1--52.0$\times$ faster than the fastest completing baseline.

\noindent\textbf{Model-selection proof.}
Tab.~\ref{tab:route-cost} reports the overall cost of the model-selection proof. The 1,531 search traces span seven sizes, specified by score-table size $|\dtable|$ and the length $|\WorkUpdTrace|$ of the working-queue $\WorkQ$ update sequence. Across these sizes, \toolname's proving time is 9.41--36.92~s and 21.92~s on average, whereas SP1 takes 228.2--706.8~min and 477.4~min on average. SP1's default prover uses sharded execution and recursive compression to complete all sizes, but remains slow. Jolt and direct Plonky2 replay instead exceed the 256~GiB limit after running for several minutes even at the smallest size.
\toolname also reduces verification time, proof size, and peak memory. Average verification takes 4.87~ms rather than SP1's 37.15~ms, proof size is 0.15~MiB rather than 1.21~MiB, and average peak memory is 27.7~GiB rather than 172.0~GiB. 


\input{tables/pq_cost}

Within \toolname, at fixed $|\WorkUpdTrace|=2{,}048$, quadrupling $|\dtable|$ raises overall proving time from 9.41 to 30.60~s (3.25$\times$); at fixed $|\dtable|=2{,}048$, quadrupling $|\WorkUpdTrace|$ raises it only from 30.60 to 36.92~s (1.21$\times$). Thus, overall proving time is more sensitive to $|\dtable|$. The component measurements below help explain this difference. Verification time and proof size remain nearly constant. 

\begin{figure}[t]
    \centering
    \includegraphics[width=0.85\linewidth]{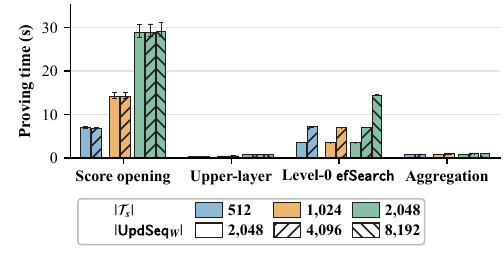}
    \caption{Isolated model-selection component proving time. Color denotes $|\dtable|$, and hatching denotes $|\WorkUpdTrace|$. The score-opening proof generally dominates.}
    \label{fig:eval-subproof}
\end{figure}

\noindent\textbf{Subproof breakdown.}
To isolate costs, we run each subproof separately with the full prover thread pool. Fig.~\ref{fig:eval-subproof} shows that the score-opening proof is the largest component in all but one size, taking 6.85--29.10~s. The level-$0$ \textsf{efSearch} proof takes 3.52--14.43~s and is nearly tied with the score-opening proof only at $(|\dtable|,|\WorkUpdTrace|)=(512,4{,}096)$, while the upper-layer proof and aggregation proof remain below 1~s. Thus authenticated high-dimensional scoring, even with the 64\%-row-reducing gate, determines most end-to-end time. Our specialized checks, on the other hand, keep traversal proof efficient without hitting a memory barrier. 

\noindent\textbf{Standalone priority-queue checker.}
We further isolate the priority-queue checker (Sec.~\ref{sec:zk-priority-queue}) used by level-$0$ \textsf{efSearch} by comparing it on operation traces with queue replay in baselines. Tab.~\ref{tab:pq-cost} shows Jolt and direct Plonky2 reach \textcolor{oommuted}{OOM} at 100K and 5K operations, respectively. SP1 completes every size using sharded execution and recursive compression, but takes approximately 52.5$\times$ longer than our checker on average. These results demonstrate the benefit of replacing mutable-queue replay with our trace checker. 

\subsection{RQ2: Data-Plane Performance}
\label{sec:eval-data}

We evaluate data-plane performance by comparing our rewrite proof with general-purpose baselines and measuring serving-certificate costs. Overall, \toolname's rewrite proof is 100.6$\times$ faster on average than Jolt, the fastest completing baseline, and provider-side certificate generation averages only 3.84~ms, which is lightweight. Throughout the data-plane experiments, message size is the maximum client-message content length $\max_i |\mathsf{content}_i|$ in $R_c$.

\input{tables/rewrite-cost}

\noindent\textbf{Rewrite proof.}
Tab.~\ref{tab:rewrite-cost} shows that \toolname generates a rewrite proof in 2.26--8.06~s, compared with 4.9--8.3~min for Jolt and 13.7--20.2~min for SP1. Jolt's proofs are 0.04--0.05~MiB smaller than \toolname's 0.14~MiB proofs, but require 100.6$\times$ longer proving and 59.1$\times$ longer verification on average. By avoiding general-purpose execution traces, \toolname reduces average peak memory to 7.1~GiB, while Jolt and SP1 use 23.0$\times$ and 19.5$\times$ as much. Increasing message size from 256~B to 4~KiB raises \toolname's proving time by only 3.6$\times$, while its verification time remains about 4.6~ms and its proof size remains 0.14~MiB.

\input{tables/serving-certificate-cost}

\noindent\textbf{Serving certificate.}
Across all message sizes, Tab.~\ref{tab:serving-certificate-cost} reports 3.46--4.15~ms for provider generation and 2.46--2.49~ms for client verification, averaging 3.84~ms and 2.47~ms, respectively. This low provider cost makes lightweight participation practical for protecting model attribution and reputation~\cite{zhang2026real,c2pa2025spec}, consistent with Sec.~\ref{sec:threat}.

\subsection{RQ3: End-to-End Performance}
\label{sec:eval-e2e}

We evaluate end-to-end user-visible serving latency and client verification under per-turn and session-pinned model selection. The added latency is only 4.08~s  and 0.11~s, respectively.

\begin{figure}[t]
    \centering
    \includegraphics[width=0.9\linewidth]{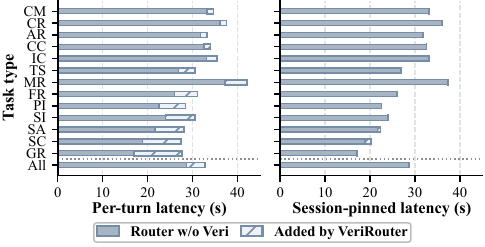}
    \caption{Average per-turn serving latency across MT-Bench-101 task types. \toolname adds 4.08~s (14.24\%) with per-turn model selection and 0.11~s (0.38\%) with session-pinned model selection.}
    \label{fig:e2e-overhead}
\end{figure}

\noindent\textbf{Serving latency.}
Fig.~\ref{fig:e2e-overhead} shows that overlapping proof generation with inference hides most proof-generation time in both modes. Relative to serving without \toolname, which averages 28.68~s per turn, per-turn model selection averages 32.77~s, adding 4.08~s (14.24\%) while hiding 83.92\% of proof-generation time. Session-pinned selection amortizes one model-selection proof across the dialogue and averages 28.79~s per turn, adding 0.11~s (0.38\%) while hiding 98.75\% of proof-generation time. The task-type breakdown also shows the latency stress case, as shorter inference exposes more of the proof path and yields higher relative overhead.

\noindent\textbf{Client verification.}
%
Client verification averages 11.91~ms for a fresh selection, covering both ZK proofs and the serving certificate, and 7.04~ms on later session-pinned turns.

\subsection{RQ4: Scalability}
\label{sec:eval-scalability}
We evaluate scalability with graph size on the control plane and with rewrite-bank capacity and rewrite bounds on the data plane. Proving remains stable for graphs up to 10M nodes and banks up to 16K entries. Increasing the rewrite bounds adds cost only when it expands byte-level fragment processing, while final aggregation proofs remain about 0.15~MiB.

\begin{figure}[b!]
    \centering
    \includegraphics[width=\linewidth]{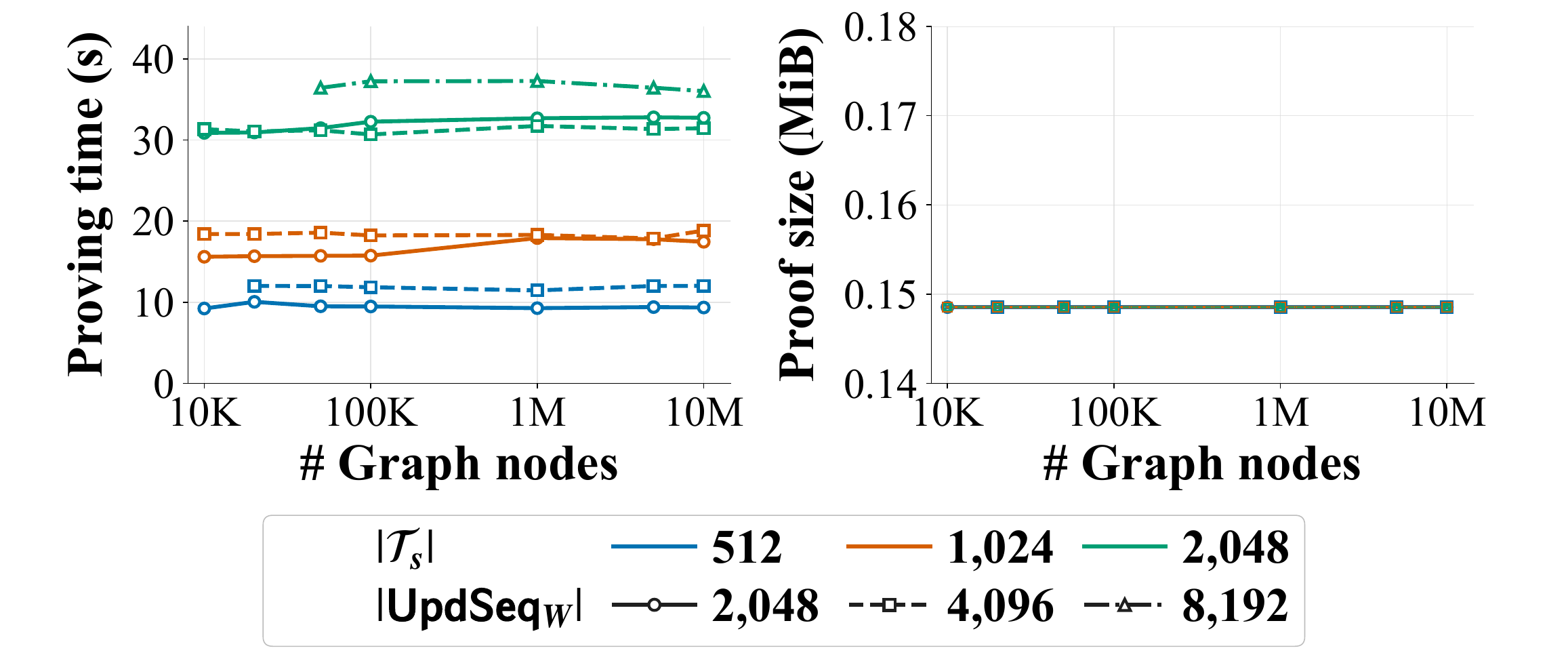}
    \caption{Scalability of model-selection proof with the number of graph nodes. Color denotes $|\dtable|$, while line style denotes $|\WorkUpdTrace|$. At each size, proving time and proof size remain stable as the graph grows to 10M nodes.}
    \label{fig:route-scalability}
\end{figure}

\noindent\textbf{Model-selection proof.} Fig.~\ref{fig:route-scalability} shows that model-selection proving remains stable at each $(|\dtable|,|\WorkUpdTrace|)$ size as the graph grows to 10M nodes. For $(512,2{,}048)$, proving changes from 9.23 to 9.35~s between 10K and 10M nodes. For $(2{,}048,8{,}192)$, it remains within 36.03--37.27~s between 50K and 10M nodes. This stability follows because each proof authenticates a bounded search trace rather than scanning the full graph. Across the trace sizes, the proving-time differences are driven mainly by $|\dtable|$, consistent with Fig.~\ref{fig:eval-subproof}. Every model-selection aggregation proof remains 0.15~MiB.

\begin{figure}[t]
    \centering
    \includegraphics[width=0.9\linewidth]{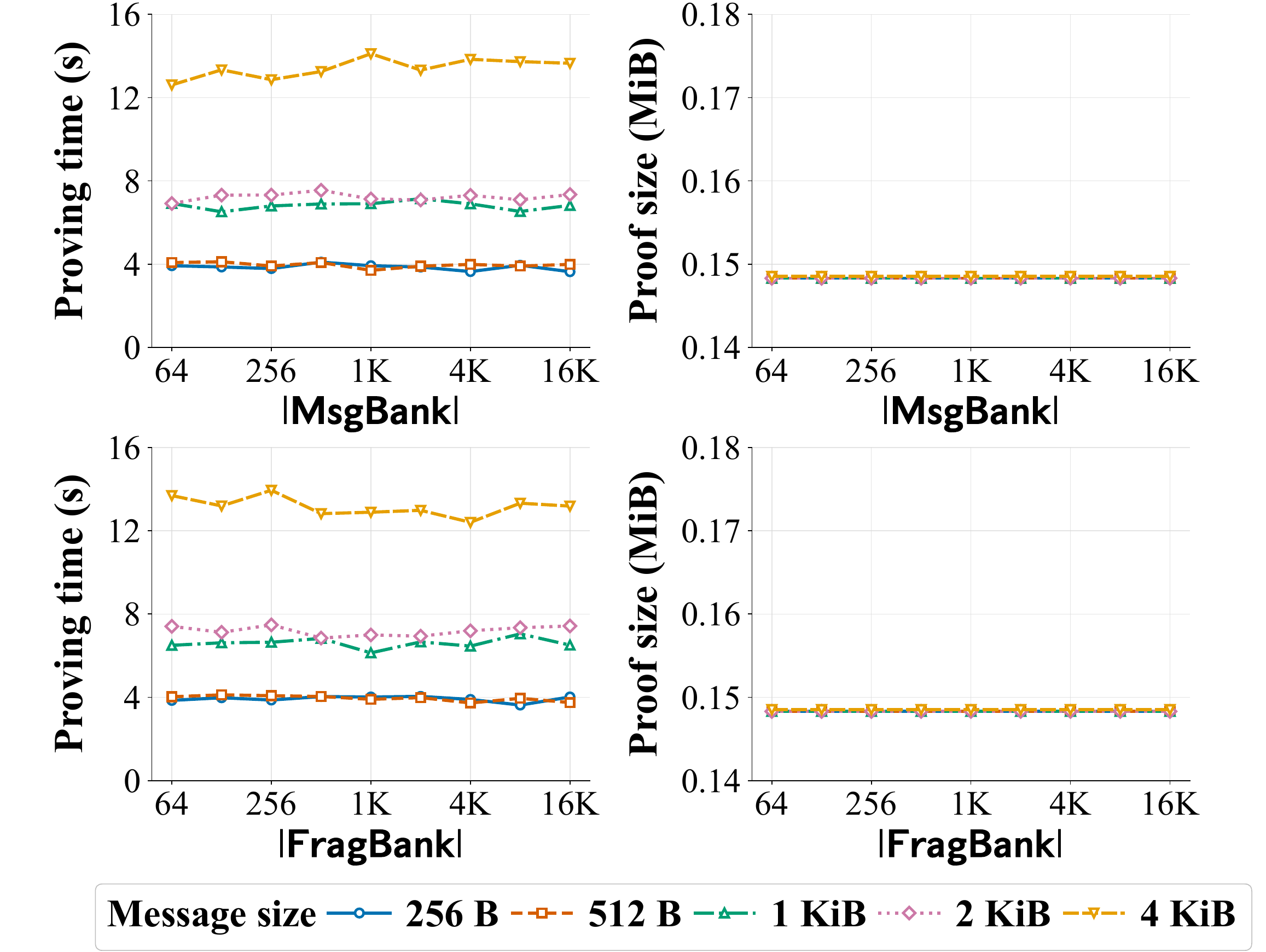}
    \caption{Scalability of rewrite proof with the capacity of rewrite banks $|\gapset|$ and $|\fragset|$. Proving time and proof size remain stable.}
    \label{fig:rewrite-bank-scalability}
\end{figure}

\noindent\textbf{Capacities of rewrite banks.} Fig.~\ref{fig:rewrite-bank-scalability} shows that rewrite proving remains stable as either bank grows from 64 to 16K entries. Across message sizes, the endpoint change is at most 1.04~s for $|\gapset|$ and 0.51~s for $|\fragset|$. This stability follows because the bank-membership checks in Sec.~\ref{sec:method-data-rewrite} authenticate only selected entries instead of processing an entire bank. The rewrite proof remains about 0.15~MiB throughout.

\begin{figure}[b!]
    \centering
    \includegraphics[width=0.9\linewidth]{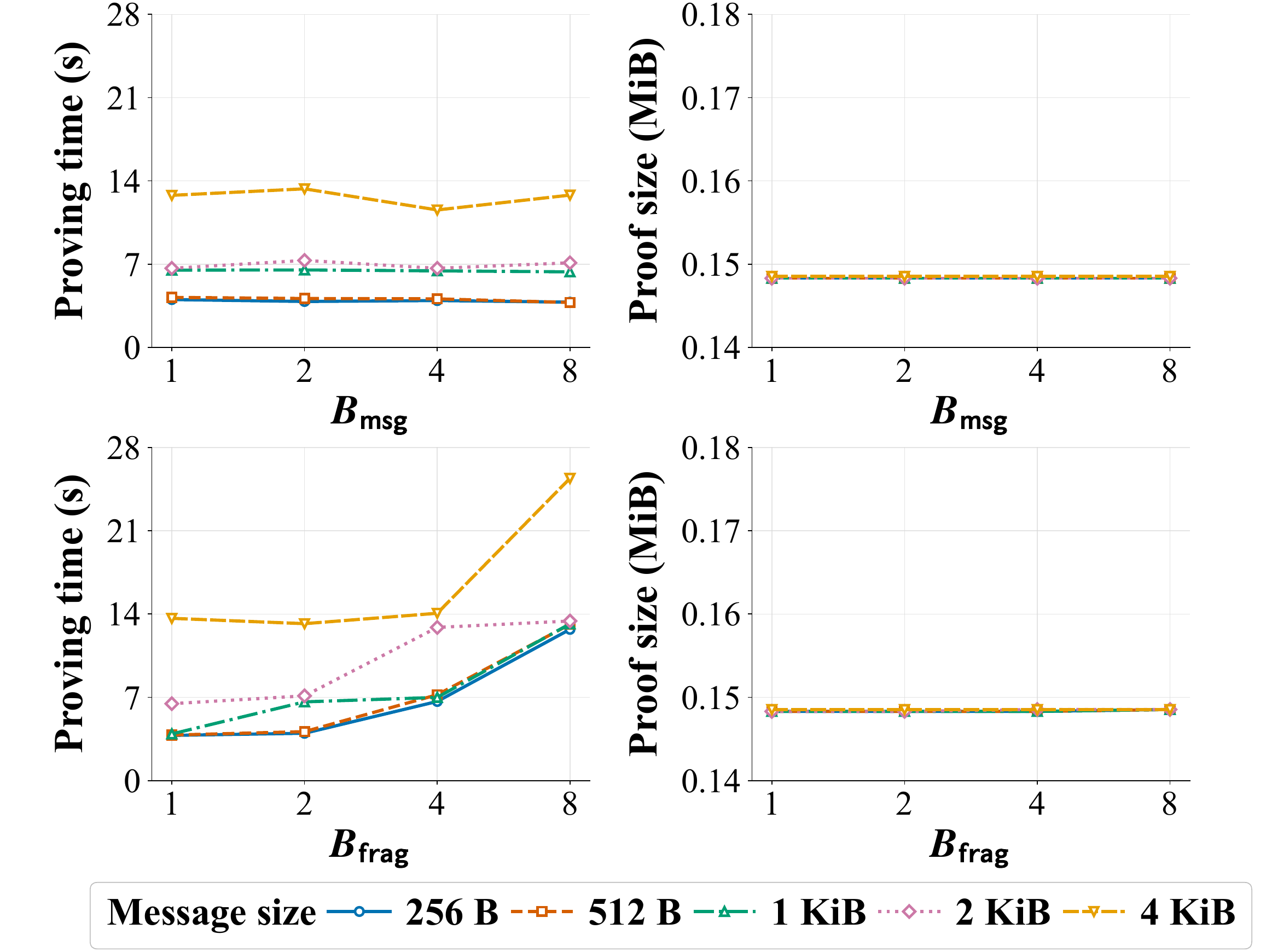}
    \caption{Scalability of rewrite proof with insertion bounds $\MsgBound$ and $\FragBound$. Proving time is stable with $\MsgBound$ and scales predictably with $\FragBound$, while the proof size remains 0.15~MiB.}
    \label{fig:rewrite-bound-scalability}
\end{figure}

\noindent\textbf{Rewrite insertion bounds.}
Fig.~\ref{fig:rewrite-bound-scalability} evaluates scalability under the two rewrite bounds $\MsgBound$ and $\FragBound$. The results reflect the two hashing paths in our rewrite proof. Across message sizes, increasing $\MsgBound$ leaves average proving time essentially unchanged because each complete inserted message contributes an authenticated bank digest directly to the ordered fold of $\ReqHash$. By contrast, increasing $\FragBound$ from 1 to 8 raises average proving time from 6.33 to 15.57~s because a fragment forms part of $\mathsf{wrap}(x_i)$ and must be hashed with the original content to derive the wrapped-message digest. Since consecutive content on the same side can be concatenated and committed as one fragment, eight independent fragments per side is already a generous practical bound. Recursive aggregation keeps the rewrite aggregation proof about 0.15~MiB.

%% file: tables/control-plane-cost.tex
\begin{table}[t]
\centering
\caption{Control-plane model-selection cost. Best completing results are in \textbf{bold}. \toolname reduces average proving time by 1,307$\times$ over SP1, while Jolt and direct Plonky2 replay exceed 256~GiB of memory.}
\label{tab:route-cost}
\begingroup
\setlength{\tabcolsep}{3pt}
\renewcommand{\arraystretch}{0.94}
\resizebox{0.9\columnwidth}{!}{%
\begin{tabular}{cc l cccc}
\toprule
$|\dtable|$ & $|\WorkUpdTrace|$ & \textbf{Method} & \makecell{\textbf{Proving}\\\textbf{time}} & \makecell{\textbf{Verification}\\\textbf{time}} & \makecell{\textbf{Proof}\\\textbf{size}} & \makecell{\textbf{Peak}\\\textbf{RSS}} \\
\midrule
\multirow{8}{*}{$512$} & \multirow{4}{*}{$2,048$} & SP1 & 228.2 min & 36.99 ms & 1.21 MiB & 171.4 GiB \\
 &  & Jolt & \textcolor{oommuted}{OOM} & -- & -- & \textcolor{oommuted}{$>256$ GiB} \\
 &  & Direct Plonky2 & \textcolor{oommuted}{OOM} & -- & -- & \textcolor{oommuted}{$>256$ GiB} \\
 &  & \textbf{\toolname} & \textbf{9.41 s} & \textbf{4.89 ms} & \textbf{0.15 MiB} & \textbf{11.5 GiB} \\
\cmidrule(lr){2-7}
 & \multirow{4}{*}{$4,096$} & SP1 & 230.2 min & 37.16 ms & 1.21 MiB & 154.3 GiB \\
 &  & Jolt & \textcolor{oommuted}{OOM} & -- & -- & \textcolor{oommuted}{$>256$ GiB} \\
 &  & Direct Plonky2 & \textcolor{oommuted}{OOM} & -- & -- & \textcolor{oommuted}{$>256$ GiB} \\
 &  & \textbf{\toolname} & \textbf{12.18 s} & \textbf{4.86 ms} & \textbf{0.15 MiB} & \textbf{15.6 GiB} \\
\midrule
\multirow{8}{*}{$1,024$} & \multirow{4}{*}{$2,048$} & SP1 & 385.7 min & 37.05 ms & 1.21 MiB & 175.4 GiB \\
 &  & Jolt & \textcolor{oommuted}{OOM} & -- & -- & \textcolor{oommuted}{$>256$ GiB} \\
 &  & Direct Plonky2 & \textcolor{oommuted}{OOM} & -- & -- & \textcolor{oommuted}{$>256$ GiB} \\
 &  & \textbf{\toolname} & \textbf{15.07 s} & \textbf{4.86 ms} & \textbf{0.15 MiB} & \textbf{19.9 GiB} \\
\cmidrule(lr){2-7}
 & \multirow{4}{*}{$4,096$} & SP1 & 387.9 min & 37.01 ms & 1.21 MiB & 171.8 GiB \\
 &  & Jolt & \textcolor{oommuted}{OOM} & -- & -- & \textcolor{oommuted}{$>256$ GiB} \\
 &  & Direct Plonky2 & \textcolor{oommuted}{OOM} & -- & -- & \textcolor{oommuted}{$>256$ GiB} \\
 &  & \textbf{\toolname} & \textbf{17.78 s} & \textbf{4.85 ms} & \textbf{0.15 MiB} & \textbf{24.3 GiB} \\
\midrule
\multirow{12}{*}{$2,048$} & \multirow{4}{*}{$2,048$} & SP1 & 700.3 min & 37.09 ms & 1.21 MiB & 179.1 GiB \\
 &  & Jolt & \textcolor{oommuted}{OOM} & -- & -- & \textcolor{oommuted}{$>256$ GiB} \\
 &  & Direct Plonky2 & \textcolor{oommuted}{OOM} & -- & -- & \textcolor{oommuted}{$>256$ GiB} \\
 &  & \textbf{\toolname} & \textbf{30.60 s} & \textbf{4.87 ms} & \textbf{0.15 MiB} & \textbf{34.7 GiB} \\
\cmidrule(lr){2-7}
 & \multirow{4}{*}{$4,096$} & SP1 & 702.8 min & 37.53 ms & 1.21 MiB & 176.1 GiB \\
 &  & Jolt & \textcolor{oommuted}{OOM} & -- & -- & \textcolor{oommuted}{$>256$ GiB} \\
 &  & Direct Plonky2 & \textcolor{oommuted}{OOM} & -- & -- & \textcolor{oommuted}{$>256$ GiB} \\
 &  & \textbf{\toolname} & \textbf{31.46 s} & \textbf{4.89 ms} & \textbf{0.15 MiB} & \textbf{40.3 GiB} \\
\cmidrule(lr){2-7}
 & \multirow{4}{*}{$8,192$} & SP1 & 706.8 min & 37.20 ms & 1.21 MiB & 176.2 GiB \\
 &  & Jolt & \textcolor{oommuted}{OOM} & -- & -- & \textcolor{oommuted}{$>256$ GiB} \\
 &  & Direct Plonky2 & \textcolor{oommuted}{OOM} & -- & -- & \textcolor{oommuted}{$>256$ GiB} \\
 &  & \textbf{\toolname} & \textbf{36.92 s} & \textbf{4.88 ms} & \textbf{0.15 MiB} & \textbf{47.3 GiB} \\
\midrule
\multicolumn{2}{c}{\multirow{4}{*}{\textbf{Average}}} & SP1 & 477.4 min & 37.15 ms & 1.21 MiB & 172.0 GiB \\
 & & Jolt & \textcolor{oommuted}{OOM} & -- & -- & \textcolor{oommuted}{$>256$ GiB} \\
 & & Direct Plonky2 & \textcolor{oommuted}{OOM} & -- & -- & \textcolor{oommuted}{$>256$ GiB} \\
 & & \textbf{\toolname} & \textbf{21.92 s} & \textbf{4.87 ms} & \textbf{0.15 MiB} & \textbf{27.7 GiB} \\
\bottomrule
\end{tabular}%
}
\endgroup
\end{table}

%% file: tables/pq_cost.tex
\begin{table}[b!]
\centering
\caption{Standalone priority-queue proving cost on operation traces. Best completing results are in \textbf{bold}. Our checker is 17.1--52.0$\times$ faster than the fastest completing replay baseline.}
\label{tab:pq-cost}
\resizebox{0.75\columnwidth}{!}{%
\begin{tabular}{rcccc}
\toprule
\textbf{\#Operations} & \textbf{SP1} & \textbf{Jolt} & \textbf{Direct Plonky2} & \textbf{Ours} \\
\midrule
1,000 & 1.1 min & 25.89 s & 29.52 s & \textbf{0.95 s} \\
5,000 & 3.4 min & 1.3 min & \textcolor{oommuted}{OOM} & \textbf{3.68 s} \\
10,000 & 6.2 min & 2.4 min & \textcolor{oommuted}{OOM} & \textbf{7.33 s} \\
50,000 & 28.7 min & 9.2 min & \textcolor{oommuted}{OOM} & \textbf{32.30 s} \\
100,000 & 56.9 min & \textcolor{oommuted}{OOM} & \textcolor{oommuted}{OOM} & \textbf{65.70 s} \\
\bottomrule
\end{tabular}%
}
\end{table}

%% file: tables/rewrite-cost.tex
\begin{table}[t]
\centering
\caption{Rewrite-proof cost. Best completing results are in \textbf{bold}. \toolname reduces average proving time by 100.6$\times$ over Jolt, the fastest general-purpose baseline.}
\label{tab:rewrite-cost}
\resizebox{0.9\columnwidth}{!}{%
\begin{tabular}{clrrrr}
\toprule
\textbf{Message size} & \textbf{Method} & \makecell{\textbf{Proving}\\\textbf{time}} & \makecell{\textbf{Verification}\\\textbf{time}} & \makecell{\textbf{Proof}\\\textbf{size}} & \makecell{\textbf{Peak}\\\textbf{RSS}} \\
\midrule
\multirow{3}{*}{256 B} & SP1 & 13.7 min & 36.7 ms & 1.21 MiB & 138.2 GiB \\
 & Jolt & 4.9 min & 264.9 ms & \textbf{0.09 MiB} & 105.4 GiB \\
 & \toolname & \textbf{2.26 s} & \textbf{4.55 ms} & 0.14 MiB & \textbf{5.4 GiB} \\
\cmidrule(lr){1-6}
\multirow{3}{*}{512 B} & SP1 & 16.9 min & 36.6 ms & 1.21 MiB & 139.6 GiB \\
 & Jolt & 7.0 min & 271.3 ms & \textbf{0.09 MiB} & 175.0 GiB \\
 & \textbf{\toolname} & \textbf{2.72 s} & \textbf{4.63 ms} & 0.14 MiB & \textbf{5.3 GiB} \\
\cmidrule(lr){1-6}
\multirow{3}{*}{1 KiB} & SP1 & 17.4 min & 36.7 ms & 1.21 MiB & 141.3 GiB \\
 & Jolt & 7.0 min & 272.1 ms & \textbf{0.09 MiB} & 171.1 GiB \\
 & \textbf{\toolname} & \textbf{3.94 s} & \textbf{4.58 ms} & 0.14 MiB & \textbf{8.3 GiB} \\
\cmidrule(lr){1-6}
\multirow{3}{*}{2 KiB} & SP1 & 19.7 min & 36.8 ms & 1.21 MiB & 150.7 GiB \\
 & Jolt & 8.3 min & 271.3 ms & \textbf{0.10 MiB} & 189.8 GiB \\
 & \textbf{\toolname} & \textbf{4.23 s} & \textbf{4.57 ms} & 0.14 MiB & \textbf{8.0 GiB} \\
\cmidrule(lr){1-6}
\multirow{3}{*}{4 KiB} & SP1 & 20.2 min & 36.8 ms & 1.21 MiB & 122.1 GiB \\
 & Jolt & 8.3 min & 274.9 ms & \textbf{0.10 MiB} & 173.8 GiB \\
 & \textbf{\toolname} & \textbf{8.06 s} & \textbf{4.57 ms} & 0.14 MiB & \textbf{8.5 GiB} \\
\midrule
\multirow{3}{*}{Average} & SP1 & 17.6 min & 36.70 ms & 1.21 MiB & 138.4 GiB \\
 & Jolt & 7.1 min & 270.89 ms & \textbf{0.09 MiB} & 163.0 GiB \\
 & \textbf{\toolname} & \textbf{4.24 s} & \textbf{4.58 ms} & 0.14 MiB & \textbf{7.1 GiB} \\
\bottomrule
\end{tabular}%
}
\end{table}

%% file: tables/serving-certificate-cost.tex
\begin{table}[b!]
\centering
\caption{Serving-certificate cost. Provider generation remains below 4.2~ms, and every certificate is only 0.145~KiB.}
\label{tab:serving-certificate-cost}
\begingroup
\setlength{\tabcolsep}{5pt}
\renewcommand{\arraystretch}{1.05}
\resizebox{0.86\columnwidth}{!}{%
\begin{tabular}{rccc}
\toprule
\textbf{Message size} &
\textbf{Provider generation} &
\textbf{Client verification} &
\textbf{Certificate size} \\
\midrule
256 B & 3.46 ms & 2.46 ms & \multirow{6}{*}{0.145 KiB} \\
512 B & 3.65 ms & 2.48 ms & \\
1 KiB & 3.94 ms & 2.46 ms & \\
2 KiB & 4.15 ms & 2.47 ms & \\
4 KiB & 3.99 ms & 2.49 ms & \\
\cmidrule(lr){1-3}
\textbf{Average} & \textbf{3.84 ms} & \textbf{2.47 ms} & \\
\bottomrule
\end{tabular}%
}
\endgroup
\end{table}

%% file: section/discussion.tex
\section{Discussion and Related Work}
\label{sec:discussion-main}
\label{sec:related_work}

\noindent\textbf{Guarantee and scope.}
\toolname certifies that model selection and request rewriting conform to the semantic index and rewrite banks committed in the public context $\mathsf{ctx}$, but does not assess their policy quality. An independent auditor with access to the private specifications can validate and endorse these policies~\cite{schnabl2025attestable}, e.g., ensuring that rewrite banks exclude disallowed sponsored content as in Fig.~\ref{fig:intro-threat}. The current relation covers common additive rewrites, while model-generated rewrites can be handled by composing \toolname with verifiable inference proofs~\cite{chen2024zkml,sun2024zkllm,qu2025zkgpt}. Updates to the semantic index or rewrite banks, including new references or retrieval content, can be incorporated through an updated public context. Inference from repeated  outputs~\cite{tramer2016stealing} and hiding client requests~\cite{lu2023bumblebee,akimoto2023privformer,chowdhury2025pr,huang2026your} are complementary concerns. Further discussion is provided in Appendix~\ref{app:discussion}.

\noindent\textbf{Provider participation.}
Providers only register keys and sign responses (Tab.~\ref{tab:serving-certificate-cost}). Faithful relay assumes router--provider non-collusion, motivated in the third-party setting (Sec.~\ref{sec:bg-routing}) by independent providers' reputational incentives to sign only actual serving events~\cite{zhang2026real,c2pa2025spec}. Joint control of $\router$ and $\provider_\model$ instead requires an independent trust anchor such as remote attestation, trusted execution environment (TEE)-backed auditing, or ZK-verifiable inference~\cite{schnabl2025attestable,jin2026trusting,chen2024zkml,sun2024zkllm,qu2025zkgpt,guo2026immaculate}.

\noindent\textbf{Zero-knowledge proofs and verifiable ML.}
General-purpose ZK systems certify programs by replaying control flow~\cite{wahby2014efficient,heath20202,franzese2021constant}, whereas \toolname uses specialized routing proofs. ZK techniques for machine learning (ZKML) and LLMs (ZKLLM) certify learned computation~\cite{chen2024zkml,liu2021zkcnn,hao2024scalable,sun2024zkllm,qu2025zkgpt,weng2021mystique}, including full LLM inference and authenticated agent transcripts~\cite{gailly2026deepprove,wang2026zkagent}. Composing them with \toolname can verify selector or provider inference.

\noindent\textbf{Verifiable ANN search.}
Prior graph-ANN verification schemes disclose query-specific index information to the client~\cite{lu2024annproof,wang2024verifiable}. Recent ZK techniques prove IVF-PQ~\cite{qiu2026v3db,lin2026verirag}, whereas \toolname certifies data-dependent traversal over a committed semantic-search graph.

\noindent\textbf{LLM routing security.}
Prior work studies robust model selection~\cite{shafran2025rerouting,lin2025life,kassem2026robust,zhang2026rerouteguard,tang2026route}, malicious API intermediaries~\cite{liu2026your}, failure bounds under honest execution~\cite{uddin2026conformal}, and routing-input privacy under semi-honest secure multiparty computation (MPC)~\cite{wu2026privacy}. TEE-based gateway provenance attests routing and fallback constraints~\cite{wang2026evidence}. \toolname instead makes LLM routing verifiable even when the router is untrusted.

%% file: section/conclusion.tex
\section{Conclusion}
\label{sec:conclusion}

We presented \toolname, a verifiable LLM routing framework that establishes control-plane and data-plane integrity under an endorsed public context while hiding the router's private data. On real-world workloads, it adds only 4.08~s or 0.11~s to serving latency under per-turn or session-pinned selection, respectively.

%% file: section/appendix.tex
\section{Appendix}

\subsection{Algorithms of Semantic Search}
\label{app:hnsw-algorithms}

Algs.~\ref{alg:hnsw-upper} and~\ref{alg:hnsw-efsearch} show upper-layer greedy descent and level-$0$ \textsf{efSearch}. Both rank a larger score first and retain the first encountered node on a tie, but they encode ties at different scopes. An upper-layer scan visits the current node and then its neighbors in stored order; its strict score comparison leaves the first maximum unchanged. At level $0$, each node receives a persistent first-encounter position $\tau_u$, and the search compares the lexicographic key $(\score_u,-\tau_u)$. We use $<$ and $>$ under the applicable order in each algorithm. If the entry node lies on layer $0$, Alg.~\ref{alg:hnsw-upper} is skipped and $\EntryNode=u_{\mathsf{ent}}$.

\begin{algorithm}[t!]
\caption{Upper-layer greedy descent}
\label{alg:hnsw-upper}
\begin{algorithmic}[1]
\Require entry node $u_{\mathsf{ent}}$ at top layer $L_{\max}\geq1$, query embedding $e_q$
\State $\mathsf{cur} \gets u_{\mathsf{ent}}$
\For{$\ell = L_{\max},\, L_{\max}-1,\, \dots,\, 1$}
    \While{\textbf{true}}
        \State $\mathsf{best} \gets \mathsf{cur}$
        \State $\mathsf{stop} \gets 1$
               \Comment{$\mathsf{stop}=1$ iff no improvement found}
        \For{$u$ in the stored order of $\mathsf{Nbr}_\ell(\mathsf{cur})$}
            \If{$u>\mathsf{best}$}
                \State $\mathsf{best} \gets u$
                \State $\mathsf{stop} \gets 0$
            \EndIf
        \EndFor
        \If{$\mathsf{stop} = 1$}
            \State \textbf{break}
                   \Comment{no improvement on layer $\ell$, descend to layer $\ell-1$}
        \Else
            \State $\mathsf{cur} \gets \mathsf{best}$
                   \Comment{advance to best neighbor and continue on layer $\ell$}
        \EndIf
    \EndWhile
\EndFor
\State \Return $\mathsf{cur}$
       \Comment{level-$0$ entry node: endpoint of upper-layer greedy descent}
\end{algorithmic}
\end{algorithm}

\begin{algorithm}[t!]
\caption{Level-0 \textsf{efSearch}}
\label{alg:hnsw-efsearch}
\begin{algorithmic}[1]
\Require level-$0$ entry node $\EntryNode$, query embedding $e_q$, beam width $\EfCap$, output size $k$ with $1\leq k\leq\EfCap$
\State initialize visited set $V \gets \{\EntryNode\}$
\State initialize candidate queue $\CandQ \gets \{\EntryNode\}$
       \Comment{max-heap by search key, \textsf{PopMax} returns the largest-key element}
\State initialize working queue $\WorkQ \gets \{\EntryNode\}$
       \Comment{min-heap by search key, keeps the top-$\EfCap$ nodes}
\While{$\CandQ \neq \emptyset$}
    \State $\mathsf{cur} \gets \CandQ.\mathsf{PopMax}()$
    \If{$|\WorkQ|=\EfCap$ \textbf{and} $\mathsf{cur}<\WorkQ.\mathsf{Worst}()$}
        \State \textbf{break} \Comment{best candidate is below working threshold, stop}
    \EndIf
    \For{$u$ in the stored order of $\mathsf{Nbr}_0(\mathsf{cur})$}
        \If{$u \notin V$}
            \State $V \gets V \cup \{u\}$
            \If{$|\WorkQ|<\EfCap$ \textbf{or} $u>\WorkQ.\mathsf{Worst}()$}

                \State $\WorkQ.\mathsf{Push}(u)$
                \State $\CandQ.\mathsf{Push}(u)$
                \If{$|\WorkQ| > \EfCap$}
                    \State $\WorkQ.\mathsf{PopMin}()$
                          \Comment{evict worst element to maintain capacity $\EfCap$}
                \EndIf
            \EndIf
        \EndIf
    \EndFor
\EndWhile
\State \Return the best $k$ nodes in $\WorkQ$
\end{algorithmic}
\end{algorithm}

\subsection{Detailed ZK Protocols}

\subsubsection{Request Hash Instantiation}
\label{app:hash-instantiation}

We instantiate the structured request hash $\ReqHash$ with Poseidon~\cite{grassi2021poseidon}. It hashes content bytes, binds each content digest to its message role and length, folds the message digests in request order, and binds the message count in a final hash. The ordered fold also supports composition over consecutive message ranges. {Every evaluation of $H_{\mathsf{zk}}$ has a fixed input length, because Poseidon without padding is collision resistant only on fixed-length inputs. The request-hash contexts below are separated by distinct tags: a tag is the first input element in sponge mode, and for the two-digest fold step it occupies a capacity element of a single Poseidon permutation. We assume that $H_{\mathsf{zk}}$ is collision resistant in each of these contexts.} Roles, lengths, and bytes use fixed, injective field encodings.

\noindent\textbf{Content digest.}
For a byte string $c=(a_1,\ldots,a_{N_c})$, set $b_0(c)=H_{\mathsf{zk}}(\mathsf{tag}_{\mathsf{byte\text{-}init}}\|N_c\|0)$ and $b_j(c)=H_{\mathsf{zk}}(\mathsf{tag}_{\mathsf{byte\text{-}step}}\|b_{j-1}(c)\|1\|a_j)$ for $j\in[N_c]$, and define $b(c)=b_{N_c}(c)$.

\noindent\textbf{Message digest.}
For a message $y=(\mathsf{role},c)$, define $m(y)=H_{\mathsf{zk}}(\mathsf{tag}_{\mathsf{msg}}\|\mathsf{role}\||c|\|b(c))$.

\noindent\textbf{Ordered fold.}
Let $\mathsf{IV}$ be the protocol-fixed initial fold value. For a digest sequence $\mu=(\mu_1,\ldots,\mu_{N_\mu})$ and starting fold value $s_{\mathsf{in}}$, set $s_0=s_{\mathsf{in}}$ and $s_j=H_{\mathsf{zk}}(\mathsf{tag}_{\mathsf{req}}\|s_{j-1}\|\mu_j)$ for $j\in[N_\mu]$, and define $\ReqFold(s_{\mathsf{in}};\mu)=s_{N_\mu}$. A full request starts from $s_{\mathsf{in}}=\mathsf{IV}$. When consecutive message ranges are folded separately, each later range starts from the preceding range's output.

\noindent\textbf{Final request hash.}
For a request $R=[y_1,\ldots,y_{N_R}]$, define
\begin{equation}
\begin{aligned}
\mu_R&=\bigl(m(y_1),\ldots,m(y_{N_R})\bigr),\\[-2pt]
\ReqHash(R)&=H_{\mathsf{zk}}\bigl(\mathsf{tag}_{\mathsf{req\text{-}final}}\|N_R\|\ReqFold(\mathsf{IV};\mu_R)\bigr).
\end{aligned}
\label{eq:structured-request-hash}
\end{equation}

\noindent\textbf{Rewrite-bank digests.}
The rewrite banks reuse these digest functions. The canonical message digest table $\MsgDigestTable$ contains $m(g)$ for every complete message $g\in\gapset$, and the canonical fragment digest table $\FragDigestTable$ contains $b(f)$ for every fragment $f\in\fragset$. These digest tables instantiate the rewrite-bank commitments of Sec.~\ref{sec:method-initial}. A committed lookup against $\MsgCommit$ authenticates $m(g)$, which the rewrite circuit places directly in the ordered message-digest sequence. For a selected fragment, the circuit recomputes $b(f)$ from its bytes, authenticates it against $\FragCommit$, and reuses the bytes in $m(\mathsf{wrap}(x_i))$.

\begin{lemma}[Request-hash collision resistance]
\label{lem:reqhash-cr}
If $H_{\mathsf{zk}}$ is collision resistant {in each context above}, then $\ReqHash$ is collision resistant for requests.
\end{lemma}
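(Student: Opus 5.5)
We argue by reduction: given two distinct requests $R\neq R'$ with $\ReqHash(R)=\ReqHash(R')$, we extract, in polynomial time, a collision for $H_{\mathsf{zk}}$ in one of the tagged contexts. The key observation is that every context has a fixed input length and a distinct leading tag. Consequently, equal outputs from two different contexts do not count as a collision in the assumed sense, and we never need to compare across contexts. Each layer therefore either yields a collision within one context or forces its inputs to be equal. We then peel the construction from the outside in.

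\emph{Final hash.} Equality of $H_{\mathsf{zk}}(\mathsf{tag}_{\mathsf{req\text{-}final}}\|N_R\|\ReqFold(\mathsf{IV};\mu_R))$ on the two sides has two possible consequences. Either the inputs differ, giving a collision in the final context, or $N_R=N_{R'}$ and the fold outputs coincide.

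\emph{Ordered fold.} With $N:=N_R=N_{R'}$ and $s_N=s'_N$, we induct backwards on $j$. If $s_j=s'_j$, then either $(s_{j-1},\mu_j)\neq(s'_{j-1},\mu'_j)$, which is a collision in the $\mathsf{tag}_{\mathsf{req}}$ context, or both components are equal. Iterating down to $j=1$ gives $\mu_j=\mu'_j$ for all $j$. The equal lengths obtained in the previous step are essential here, since without them the two chains could have different numbers of steps.

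\emph{Message digests.} Suppose $y_j\neq y'_j$ for some $j$, while $m(y_j)=m(y'_j)$. Then either $(\mathsf{role},|c|,b(c))$ differs between the two messages, giving a collision in the $\mathsf{tag}_{\mathsf{msg}}$ context, or roles and lengths agree and $b(c)=b(c')$ with $c\neq c'$. The role and length are recovered exactly because the field encodings are fixed and injective.

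\emph{Content digest.} Here $|c|=|c'|=N_c$, $c\neq c'$, and $b_{N_c}(c)=b_{N_c}(c')$. Let $t$ be the largest index with $a_t\neq a'_t$. We run the same backward induction over the byte-step chain. If the chaining values already differ somewhere above $t$, that step yields a collision. Otherwise $b_t(c)=b_t(c')$, while the step inputs $(b_{t-1},1,a_t)$ and $(b'_{t-1},1,a'_t)$ differ because $a_t\neq a'_t$, so step $t$ is itself a collision in the byte-step context. If no position differed, the two messages would coincide, contradicting $y_j\neq y'_j$.

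\emph{Conclusion.} Since $R\neq R'$ and $N_R=N_{R'}$, some message pair must differ, so some layer produces an explicit $H_{\mathsf{zk}}$ collision. The extractor simply recomputes both hash trees and scans for the first layer where the inputs differ but the outputs agree, which takes polynomial time. A collision-finding adversary for $\ReqHash$ therefore breaks collision resistance of $H_{\mathsf{zk}}$ in some context with the same advantage, up to a union over the fixed number of contexts.

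The main obstacle is the length bookkeeping. Both the fold and the byte chain are iterated hashes, so a naive argument must rule out one input being a prefix or extension of the other. This is handled by first extracting $N_R$ from the final hash and $|c|$ from the message digest, so that both chains are compared step by step at equal length. The $N_c$ folded into $b_0$ gives an additional safeguard. We will make sure these length equalities are established before each backward induction begins.
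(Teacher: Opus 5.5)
Your proof is correct and follows the same backward-peeling argument as the paper: the final call fixes the message count and fold output, the fold then fixes every message digest, and the message and byte-chain hashes fix every role, content length, and byte, with any deviation giving a same-context collision in $H_{\mathsf{zk}}$. The only difference is that you spell out the length bookkeeping and the byte-chain induction, which the paper's compressed proof leaves implicit.
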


\begin{proof}
Suppose $\ReqHash(R)=\ReqHash(R')$. Unless some compared pair of $H_{\mathsf{zk}}$ calls already forms a collision, equality at the final call fixes the message count and final fold value. Working backward through the fold fixes every message digest, and working backward through each message and content hash fixes every role, content length, and byte. Hence $R=R'$, so distinct requests with the same hash yield a collision in $H_{\mathsf{zk}}$.
\end{proof}

The same backward argument, using the length-bound initialization $b_0(\cdot)$, also shows that the fragment digest $b(\cdot)$ is collision resistant.

\subsubsection{Table Digests}

The approved profile fixes each table's leaf encoding, hash functions, binary-tree height and padding, yielding $d_{\mathcal T}=\mathsf{MerkleRoot}(\mathcal T)$~\cite{merkle1989certified,miller2014authenticated}. {Every opening uses an authentication path of exactly the fixed height, so a leaf is never interpreted as an internal node, and comparing two distinct fixed trees with the same root yields a hash collision. The reference and graph tables apply Poseidon to the concatenated child digests and duplicate the last node of an odd level, so every position reachable by a fixed-height path holds a copy of an endorsed row. The rewrite-bank digest tables use tagged compression nodes and pad to a full tree with zero leaves; the profile records the actual bank lengths, and enabled lookup indices are constrained to these ranges, so padding leaves are never opened.}

The private per-request score table instead uses an internal full-table digest. Let $\ScoreTableCap=|\dtable|$ be its public capacity and $\overline{\dtable}$ its $\ScoreTableCap$-row serialization, with active rows $(u,\score_u,\mathsf{mid}(u))$ in strictly increasing node-identifier order. Every remaining row equals the protocol-fixed inactive encoding $\bot_{\mathsf{score}}$, which is disjoint from every real score-table row. Define
\begin{equation}
h_{\dtable}=H_{\mathsf{zk}}(\mathsf{tag}_{\mathsf{score}}\|\ScoreTableCap\|\overline{\dtable}).
\label{eq:flat-score-hash}
\end{equation}
Each model-selection subproof recomputes this digest from its private table copy, and aggregation requires the same $h_{\dtable}$ in all three subproof statements. Collision resistance thereby binds the copies to one table, while the aggregation proof hides both the digest and the table.

\subsubsection{Commitment Instantiation}
\label{app:committed-lookup}

We instantiate every commitment with $H_{\mathsf{zk}}$. For a committed object $A$, let $d_A$ be its fixed-length digest: $d_A=\mathsf{MerkleRoot}(A)$ for a Merkle-authenticated table and $d_A=A$ when $A$ is already a digest:
\begin{equation}
\Commit(A;\rho)=H_{\mathsf{zk}}(\mathsf{tag}_{\mathsf{com}}\|d_A\|\rho).
\label{eq:commitment-instantiation}
\end{equation}
We encode $(\mathsf{tag}_{\mathsf{com}},d_A,\rho)$ injectively and sample $\rho$ as four fresh, independent, secret field elements with total min-entropy $h_\rho\geq252$ bits. Lemma~\ref{lem:reqhash-cr}, the fixed Merkle encoding and collision resistance make Eq.~\eqref{eq:commitment-instantiation} computationally binding. {Modeling $H_{\mathsf{zk}}$ as a random oracle, the commitment is computationally hiding because $\rho$ is fresh and has at least $252$ bits of min-entropy.} \toolname uses this construction for semantic tables, rewrite-bank digest tables and requests.

For queried entries $\mathcal Q$ of a committed table $\mathcal T$, let $c_{\mathcal T}=\Commit(\mathcal T;\rho_{\mathcal T})$. The witness $\mathcal O_{\mathcal Q}$ contains $\rho_{\mathcal T}$, each queried row's leaf index, and its authentication path. The circuit enforces $\LUT(\mathcal Q,c_{\mathcal T};\mathcal O_{\mathcal Q})=1$, which checks that every queried leaf has the requested row and reconstructs the same root, and that Eq.~\eqref{eq:commitment-instantiation} yields $c_{\mathcal T}$.

{\noindent\textbf{Private lookups.} The monotone-stack checker looks up $\mathcal Q_{\mathsf{parent}}$ in $\ParentTab$, and the priority-queue checker looks up $\mathcal Q_{\mathsf{parentOut}}$ in $\RemovalTab$. Neither table is committed, and the circuits realize both lookups with multiset equalities. Every row of $\ParentTab$ is created by exactly one push, so one multiset equality matches the push queries with the rows. A pop query $(\PoppedItem{i}{j},\StackTop{i}{j})$ asks for the row created by the push of $\PoppedItem{i}{j}$. The circuit lists $\FinStack$ from the top together with each item's parent and enforces that the rows of $\ParentTab$, as edges $(u,\ParentOf{u})$, equal as a multiset the pop queries together with these final edges. This equality implies the pop lookups, and its projection to first components is Eq.~\eqref{eq:stk-accounting}; an honest execution satisfies it because every pushed item is either popped once or remains in $\FinStack$. Finally, each $\OutSeq$ item carries its removal position as a payload of its stack row, so the parent row returned for $\OutItem{r}$ also returns $\ParentOutPos{r}$, which answers the lookup into $\RemovalTab$. Lemma~\ref{lem:multiset-error} bounds the error of all these multiset equalities.}

\subsubsection{Parallel Dot-product Gate}
\label{app:score-opening-gate}

The score-opening subproof evaluates many high-dimensional node scores against one query. Our parallel dot-product gate shares the query coordinates across several node scores per row, reducing score-computation rows.

\noindent{\bf Score decomposition.} {Our implementation instantiates the deterministic score function from Sec.~\ref{sec:bg-routing} with standard 16-bit affine quantization~\cite{jacob2018quantization}. The public parameters $\mathsf{pp}_{\mathsf{sel}}$ fix the quantizer $\mathsf{Q}_{\mathsf{sel}}$, its scale and zero point $z_q$, and the score bias. The client computes $q=\mathsf{Q}_{\mathsf{sel}}(e_q)$ and binds its digest in the verified statement. The score-opening circuit authenticates this $q$. Each endorsed reference row stores $v_u=\mathsf{Q}_{\mathsf{sel}}(\encoder(R_u^{\mathsf{ref}}))$.} Let $D$ be the embedding dimension. The certified score is
\begin{equation}
\label{eq:centered-score-opening}
\begin{aligned}
\score_u
&=\sum_{i=0}^{D-1}(q_i-z_q)(v_{u,i}-z_q)+\mathsf{bias}\\
&=\sum_{i=0}^{D-1}(q_i-z_q)v_{u,i}+\left(Dz_q^2-z_q\sum_{i=0}^{D-1}q_i+\mathsf{bias}\right).
\end{aligned}
\end{equation}
Because the query $q$ and zero point $z_q$ are the same for every node, the circuit computes the centered query coordinates $\widehat q_i=q_i-z_q$ and the query-only correction $\Delta_q=Dz_q^2-z_q\sum_iq_i+\mathsf{bias}$ once. Eq.~\eqref{eq:centered-score-opening} then becomes $\score_u=\sum_i\widehat q_iv_{u,i}+\Delta_q$ for every node $u$.

\noindent{\bf Gate organization.} Each row processes $B_{\mathsf{coord}}$ coordinates for $N_{\mathsf{lane}}$ nodes, with one lane carrying each node's running sum. It stores the $B_{\mathsf{coord}}$ query coordinates once and, per lane, the $B_{\mathsf{coord}}$ node coordinates and incoming and outgoing sums. Thus it uses $B_{\mathsf{coord}}+N_{\mathsf{lane}}(B_{\mathsf{coord}}+2)$ routed wires and $N_{\mathsf{lane}}$ quadratic equations for $N_{\mathsf{lane}}B_{\mathsf{coord}}$ products.

The Plonky2 recursion configuration provides $80$ routed wires. Maximizing $N_{\mathsf{lane}}B_{\mathsf{coord}}$ subject to $B_{\mathsf{coord}}+N_{\mathsf{lane}}(B_{\mathsf{coord}}+2)\leq80$ gives $N_{\mathsf{lane}}B_{\mathsf{coord}}=56$. Among the maximizing layouts, we choose $(N_{\mathsf{lane}},B_{\mathsf{coord}})=(4,14)$ because it uses the fewest lane constraints. One row therefore processes four nodes over $14$ coordinates using $78$ wires and four quadratic equations.

The circuit pads $\dtable$ to its public size and groups its rows in fours. Let $\CoordBlockCount=\lceil D/14\rceil$. For node group $g$, lane $a\in\{0,\ldots,3\}$ contains node $u_{g,a}$. For each dimension block $b\in\{0,\ldots,\CoordBlockCount-1\}$, the gate enforces
\begin{equation}
\label{eq:four-lane-dot-gate}
\mathsf{acc}_{\mathsf{out},g,b,a}=\mathsf{acc}_{\mathsf{in},g,b,a}+\sum_{j=0}^{13}\widehat q_{14b+j}v_{u_{g,a},14b+j}.
\end{equation}

\noindent{\bf Accumulator chaining.} Consecutive gate rows process successive coordinate groups. Each lane starts with $\mathsf{acc}_{\mathsf{in},g,0,a}=\Delta_q$, and the circuit enforces $\mathsf{acc}_{\mathsf{in},g,b+1,a}=\mathsf{acc}_{\mathsf{out},g,b,a}$ for $0\leq b<\CoordBlockCount-1$.
After the final group, the circuit enforces $\chi_{g,a}(\score_{u_{g,a}}-\mathsf{acc}_{\mathsf{out},g,\CoordBlockCount-1,a})=0$, where $\chi_{g,a}\in\{0,1\}$ is the row's activity bit. An inactive row is fixed to its padding value and requires no reference-table opening. Coordinates beyond $D$ are constrained to zero. {The circuit range-checks $q_i,z_q\in[0,2^{16}-1]$ and the resulting scores to 48 bits. Endorsement checks this coordinate range for the full reference table. Authenticated openings transfer it to each active $v_u$. Setup requires $D(2^{16}-1)^2<2^{47}$ and fixes $\mathsf{bias}=2^{47}$, so scores lie in $[0,2^{48})$ without base-field wraparound and later comparisons use integer order.}

\noindent{\bf Cost.} For the padded score table $\dtable$, the dot-product body uses $\mathsf{Rows}_{\mathsf{dot}}(|\dtable|,D)=\lceil|\dtable|/4\rceil\lceil D/14\rceil$ rows; the query terms and the final active-row equalities add $O(D+|\dtable|)$ arithmetic operations. Under the same $80$-wire configuration, a generic Plonky2 arithmetic gate checks $20$ products per row, compared with $56$ for our gate. At $|\dtable|=1024$ and $D=1536$, the generic realization uses $78{,}848$ rows for the products, while our dot-product body uses $28{,}160$ rows, about a $64\%$ reduction.

\noindent{\bf Gate correctness.} For an active lane, unrolling Eq.~\eqref{eq:four-lane-dot-gate} from its initial value $\Delta_q$ and applying the chaining constraints gives $\mathsf{acc}_{\mathsf{out},g,\CoordBlockCount-1,a}=\Delta_q+\sum_{i=0}^{D-1}\widehat q_i v_{u_{g,a},i}$. The final active-row constraint therefore gives $\score_{u_{g,a}}=\sum_i\widehat q_i v_{u_{g,a},i}+\Delta_q$, which is Eq.~\eqref{eq:centered-score-opening}. The circuit authenticates the client-computed $q$, constrains $\widehat q_i=q_i-z_q$ and $\Delta_q=Dz_q^2-z_q\sum_iq_i+\mathsf{bias}$, and authenticates each active $v_u$. Thus sharing the query coordinates across lanes changes only the gate layout, not the certified score relation.

\subsubsection{Level-0 \textsf{efSearch} Subproof Details}
\label{app:wc-instantiation}

This subsection gives the trace layout and constraints by which $\cef$ certifies Alg.~\ref{alg:hnsw-efsearch}, organized around the three conditions in Sec.~\ref{sec:method-control-ef}: queue-trace validity, cross-queue consistency, and the stopping rule and output.

\noindent\textbf{Shared trace layout.}
Let $N_{\mathsf{round}}^{0,\max}$ be the public bound on candidate-expansion rounds and let $B_0$ be the public level-$0$ neighbor-row width. The padded trace length is $|\WorkUpdTrace|=T=1+N_{\mathsf{round}}^{0,\max}B_0$, and the maximum reserved position in $\CandQ$'s operation order is $P=1+N_{\mathsf{round}}^{0,\max}(B_0+1)$.
Let $\CandPopLen\leq N_{\mathsf{round}}^{0,\max}$ be the number of certified candidate pops, with $\CandPopItem{a}$ and $\CandPopTime{a}$ denoting the item and logical position of pop $a$. As defined in Sec.~\ref{sec:method-control-ef}, the padded $\WorkUpdTrace$ has rows $(n_i,\tau_i,\mathsf{En}_i,\score_i,\AdmitBit{i})$ indexed by $i\in[T]$.
Row $i$ is aligned with the node $\WorkEvictItem{i}$ popped by its potential $\WorkQ$ update. The sequence starts with $n_1=\EntryNode$ and continues with the padded level-$0$ neighbor rows exposed by $\CandPopItem{1},\ldots,\CandPopItem{\CandPopLen}$ in this order. Inactive round blocks set $n_i=\bot_{\mathsf{pad}}$, a protocol-fixed disabled node encoding disjoint from every real node identifier. Repeated nodes and padding remain in the sequence.

For $a\in[N_{\mathsf{round}}^{0,\max}]$ and $p\in[B_0]$, let $i_{a,p}=1+(a-1)B_0+p$ be the row index for neighbor position $p$ in round $a$. The circuit assigns
\begin{equation}
\begin{aligned}
\tau_1&=1,\\[-2pt]
\CandPopTime{a}&=2+(a-1)(B_0+1)
  &&(a\in[\CandPopLen]),\\[-2pt]
\tau_{i_{a,p}}&=2+(a-1)(B_0+1)+p.
\end{aligned}
\label{eq:efsearch-fixed-positions}
\end{equation}
Here $\tau_i$ is the position in $\CandQ$'s operation order reserved for row $i$. Only an admitted row uses it for an actual push. Rejected, repeated, padded, and inactive rows leave holes. All positions are circuit constants, lie in $[P]$, and strictly increase with row order. Together with the fixed $\WorkQ$ positions in Eq.~\eqref{eq:workq-fixed-positions}, this makes $\mathsf{TraceConsistency}$ hold by construction for both queues.

\noindent\textbf{(i) Queue-trace validity.}
\uline{\emph{Priority encodings.}} Let $\score_{\max}^{\mathsf{enc}}$ be a public upper bound on every encoded score. For every real queue item $u$, let $i$ be its unique enabled row under the first-visit check below. Set $\tau_u:=\tau_i$ and $\mathsf{key}(u)=(\score_u,-\tau_u)$.
Because the reserved positions strictly increase with row order, this key favors the first encounter on equal scores. The candidate checker packs this order using $P$ as the position radix:
\begin{equation}
\kappa_C(u)=(\score_{\max}^{\mathsf{enc}}-\score_u)P+(\tau_u-1),
\label{eq:efsearch-packed-c-key}
\end{equation}
so its minimum is the largest-score, first-encountered candidate. For distinct sentinels $\DummyTok^{(j)}$, $j\in[\EfCap]$, representing unused capacity, the working checker uses
\begin{equation}
\begin{aligned}
\kappa_W(\DummyTok^{(j)})&=j-1,\\[-2pt]
\kappa_W(u)&=\EfCap+\score_uP+(P-\tau_u),
\end{aligned}
\label{eq:efsearch-packed-w-key}
\end{equation}
so every unused sentinel has smaller $\kappa_W$ than every real item and the minimum real item is the smallest-score, latest-encountered node. The instantiated priority relations are $u\CandHigher v\Longleftrightarrow\kappa_C(u)<\kappa_C(v)$ and $u\WorkHigher v\Longleftrightarrow\kappa_W(u)<\kappa_W(v)$.
Setup requires $2^{48}P+\EfCap\leq2^{62}$, so both packed keys are integers below $2^{62}$ and range-checked comparisons preserve their integer order without field wraparound. Distinct reserved positions and distinct sentinel ordinals make every queue item's packed key unique.

Every real queue item carries $\widehat{u}=(u,\score_u,\tau_u)$. We suppress the score and position in node notation when clear. The $\CandQ$ relation checks the position-annotated projections $\CandPushTrace$ and $\CandPopTrace$. Its public push, pop, and final-sequence capacities are $T$, $N_{\mathsf{round}}^{0,\max}$, and $T$, respectively.

\uline{\emph{Aligned $\WorkQ$ trace.}} The working trace first inserts the $\EfCap$ sentinels $\DummyTok^{(1)},\ldots,\DummyTok^{(\EfCap)}$. Real identifiers, sentinels, $\NoParent$, and $\bot_{\mathsf{pad}}$ have disjoint encodings, with $\bot_{\mathsf{pad}}$ disabled. Under $\WorkHigher$, the generic checker pops the $\kappa_W$-minimum.

Every row retains $n_i$ and its reserved position $\tau_i$. If $\mathsf{En}_i=0$, its remaining fields and aligned queue targets are fixed zeros and it contributes no operation. Let $I_{\mathsf{en}}=\{i\in[T]:\mathsf{En}_i=1\}$. The enabled item projection $\WorkPushTrace=\bigl((n_i,\score_i,\tau_i)\bigr)_{i\in I_{\mathsf{en}}}$ is $\WorkQ$'s real-node push sequence and hence its logical update sequence. After inserting each sentinel exactly once, the $\WorkQ$ relation checks the complete position-annotated projections
\begin{equation}
\begin{aligned}
\PushTrace_{\WorkQ}
&=((j,\DummyTok^{(j)}))_{j\in[\EfCap]}
  \mathbin\Vert
  ((\EfCap+2i-1,n_i))_{i\in I_{\mathsf{en}}},\\[-2pt]
\PopTrace_{\WorkQ}
&=((\EfCap+2i,\WorkEvictItem{i}))_{i\in I_{\mathsf{en}}}.
\end{aligned}
\label{eq:workq-fixed-positions}
\end{equation}
The checker capacities are fixed in $\mathsf{pp}_{\mathsf{sel}}$.

\uline{\emph{Inlined queue checkers.}} The circuit instantiates the generic checker from Sec.~\ref{sec:zk-priority-queue} for both $\CandQ$ and $\WorkQ$, with respective parameter--witness pairs $(x_{\mathsf{pq}}^{\CandQ},w_{\mathsf{pq}}^{\CandQ})$ and $(x_{\mathsf{pq}}^{\WorkQ},w_{\mathsf{pq}}^{\WorkQ})$. Denote their certified final sequences by $\mathsf{Final}_{\CandQ}$ and $\mathsf{Final}_{\WorkQ}$. These sequences and the annotated traces are shared with $\cef$, and the circuit enforces
\begin{equation}
\begin{aligned}
\Cpq(x_{\mathsf{pq}}^{\CandQ};w_{\mathsf{pq}}^{\CandQ})&=1,\\[-2pt]
\Cpq(x_{\mathsf{pq}}^{\WorkQ};w_{\mathsf{pq}}^{\WorkQ})&=1.
\end{aligned}
\label{eq:efsearch-queue-checkers}
\end{equation}
Each real entry contains its identifier, score-table-bound score, and reserved position, with priority computed by Eq.~\eqref{eq:efsearch-packed-c-key} or Eq.~\eqref{eq:efsearch-packed-w-key}.

\noindent\textbf{(ii) Cross-queue consistency.}
First, \emph{node-order authentication} binds the node sequence in $\WorkUpdTrace$ to the $\CandQ$ pop trace and the committed graph. For every $a\in[\CandPopLen]$, the circuit authenticates the padded row $(0,\CandPopItem{a},\mathsf{Pad}_{B_0}(\mathsf{Nbr}_0(\CandPopItem{a})))$ against $\GraphCommit$ and equates its $p$-th slot with $n_{i_{a,p}}$. Together with $n_1=\EntryNode$, this makes every real node in $\WorkUpdTrace$ either the initial entry or a neighbor of a node popped from $\CandQ$.

Second, the \emph{first-visit filter} sorts a copy of $(n_i,i,\mathsf{En}_i)_{i\in[T]}$ by node and original row index and binds it by multiset equality. Adjacent-row checks mark only the earliest row of each real node, enforcing, for every $i\in[T]$,
\begin{equation}
\mathsf{En}_i=1
\Longleftrightarrow
n_i\text{ is real}\ \land\ \forall j<i,\ n_j\ne n_i.
\label{eq:first-visit-stable-filter}
\end{equation}

Whenever a row is enabled, the circuit binds $(n_i,\score_i)$ to the node-and-score projection of the unique active row for $n_i$ in the level-$0$ subproof's padded score table and uses $(n_i,\score_i,\tau_i)$ for its $\WorkQ$ update. Since $n_1=\EntryNode$ is real and has no preceding row, Eq.~\eqref{eq:first-visit-stable-filter} enables the first row and supplies the initial entry update. Thus, $\WorkPushTrace$ is exactly the first-visit item sequence in row order.

Third, the \emph{admission filter} links $\WorkQ$ admissions to candidate pushes. For every row, the circuit enforces
\begin{equation}
\begin{aligned}
\AdmitBit{i}=1
&\Longleftrightarrow
\mathsf{En}_i=1\ \land\ \WorkEvictItem{i}\ne n_i
&& (i\in[T]),\\[-2pt]
\CandPushTrace
&=\bigl((\tau_i,n_i):\AdmitBit{i}=1\bigr)_{i\in[T]}.
\end{aligned}
\label{eq:efsearch-direct-projection}
\end{equation}
This reuses exactly the admitted rows in $\CandPushTrace$ with the same node, score-table-bound score, and reserved position. Let $J_a=\{i_{a,p}:p\in[B_0]\}$ be the row indices assigned to round $a$. Equations~\eqref{eq:efsearch-fixed-positions} and~\eqref{eq:efsearch-direct-projection} place its admitted first visits in the $B_0$ positions following the round-$a$ pop.

\noindent\textbf{(iii) Stopping rule.}
The entry update precedes all expansion rounds. For round $a$, if the round has an admitted row, let $i_a^\star=\min\{i\in J_a:\AdmitBit{i}=1\}$. The circuit derives the one-hot first-admitted flag in $\WorkUpdTrace$ row order and uses it to select $\WorkEvictItem{i_a^\star}$. It enforces
\begin{equation}
\WorkEvictItem{i_a^\star}\text{ is real}
\quad\Longrightarrow\quad
\mathsf{key}(\CandPopItem{a})\geq\mathsf{key}(\WorkEvictItem{i_a^\star}).
\label{eq:efsearch-round-threshold}
\end{equation}
When such a row exists, rows preceding $i_a^\star$ in $J_a$ are rejected and leave $\WorkQ$ unchanged. By the capacity invariant, the first admission evicts a real item exactly when the round starts with $\EfCap$ real nodes, and that item is the round-start minimum. Equation~\eqref{eq:efsearch-round-threshold} therefore checks the required threshold using this eviction. A round with no admitted row leaves $\WorkQ$ unchanged.

Let $C_{\mathsf{fin}}$ and $W_{\mathsf{fin}}$ be the real-node sets in $\mathsf{Final}_{\CandQ}$ and $\mathsf{Final}_{\WorkQ}$. The terminal check enforces
\begin{equation}
\begin{aligned}
C_{\mathsf{fin}}=\emptyset\quad\lor\quad\bigl(&
|W_{\mathsf{fin}}|=\EfCap\\[-2pt]
&{}\land\max_{u\in C_{\mathsf{fin}}}\mathsf{key}(u)
<\min_{u\in W_{\mathsf{fin}}}\mathsf{key}(u)\bigr).
\end{aligned}
\label{eq:efsearch-terminal}
\end{equation}

The sequence $\mathsf{Final}_{\WorkQ}$ is certified by the $\WorkQ$ checker. Because every sentinel has smaller $\kappa_W$ than every real node, the sentinels form a prefix. Let $(u_1^{\WorkQ,\mathsf{fin}},\ldots,u_{N_W}^{\WorkQ,\mathsf{fin}})$ be the real-node suffix. The final-sequence order gives $\mathsf{key}(u_1^{\WorkQ,\mathsf{fin}})<\cdots<\mathsf{key}(u_{N_W}^{\WorkQ,\mathsf{fin}})$. The circuit requires $N_W\geq k$ and enforces
\begin{equation}
F^{\mathsf{topk}}_{\WorkQ}=\{u_{N_W-k+1}^{\WorkQ,\mathsf{fin}},\ldots,u_{N_W}^{\WorkQ,\mathsf{fin}}\}.
\label{eq:efsearch-topk-output}
\end{equation}
Thus $F^{\mathsf{topk}}_{\WorkQ}$ contains exactly the $k$ largest-key reference indices retained in the final $\WorkQ$. The circuit also enforces $\mathsf{mid}(u)=\model$ for some $u\in F^{\mathsf{topk}}_{\WorkQ}$.

\subsubsection{Model-selection Aggregation Protocol}
\label{app:zk-control-agg-protocol}

Fig.~\ref{fig:zk-control-plane-aggregation} gives the model-selection protocol, which recursively aggregates the subproofs. The public statement is $x_{\mathsf{sel}}=(e_q,\model,c_{\mathsf{idx}},\mathsf{pp}_{\mathsf{sel}})${, where $\mathsf{pp}_{\mathsf{sel}}$ includes the entry node $u_{\mathsf{ent}}$, its layer $L_{\max}$, $\EfCap$, and $k$}. Each subproof circuit privately receives its own copy of $\overline{\dtable}$ and exposes its digest $h_{\dtable}$ only to the recursive parent. Let $x_{\mathsf{shared}}=(e_q,c_{\mathsf{idx}},\mathsf{pp}_{\mathsf{sel}},h_{\dtable})$. The three subproof statements are
\begin{equation}
\begin{aligned}
x_{\mathsf{score}}=x_{\mathsf{shared}},\\[-2pt]
x_{\mathsf{upper}}=(x_{\mathsf{shared}},{\EntryNode}),\\[-2pt]
x_{\mathsf{ef}}=(x_{\mathsf{shared}},{\EntryNode},\model).
\end{aligned}
\label{eq:control-subproof-interfaces}
\end{equation}

For recursive verification, the aggregation circuit is the parent and the proofs it verifies are its children. At circuit construction, the parent fixes each child's approved verifier data: \texttt{circuit\_digest} identifies the child circuit, and \texttt{constants\_sigmas\_cap} authenticates its constant and permutation polynomials. The parent uses these fixed values to verify the child proof, binding it to the intended relation.

\begin{figure}[t!]
\centering
\begingroup
\small
\setlength{\fboxsep}{3pt}
\setlength{\abovedisplayskip}{2pt}
\setlength{\belowdisplayskip}{2pt}
\fbox{
\begin{minipage}{0.95\linewidth}
\raggedright
\textbf{Public statement:} $x_{\mathsf{sel}}=(e_q,\model,c_{\mathsf{idx}},\mathsf{pp}_{\mathsf{sel}})$.

\textbf{Subproof witnesses:} Each of $w_{\mathsf{score}}$, $w_{\mathsf{upper}}$, and $w_{\mathsf{ef}}$ contains a private copy of $\overline{\dtable}$ and the openings and execution data required by its relation; $w_{\mathsf{ef}}$ also contains the two inlined queue-checker witnesses and the retained-row model-membership witness.

\textbf{Prover:}
\begin{enumerate}[leftmargin=1.25em,itemsep=1pt,topsep=2pt,parsep=0pt]
  \item \textbf{Subproofs.} Generate
  \[
  \begin{aligned}
  \pi_{\mathsf{score}}&\leftarrow\ZKProve(\cscore;x_{\mathsf{score}};w_{\mathsf{score}}),\\[-2pt]
  \pi_{\mathsf{upper}}&\leftarrow\ZKProve(\cupper;x_{\mathsf{upper}};w_{\mathsf{upper}}),\\[-2pt]
  \pi_{\mathsf{ef}}&\leftarrow\ZKProve(\cef;x_{\mathsf{ef}};w_{\mathsf{ef}}).
  \end{aligned}
  \]

  \item \textbf{Aggregation.} Set
  \[
  w_{\mathsf{sel}}^{\mathsf{agg}}=(h_{\dtable},\EntryNode,\pi_{\mathsf{score}},\pi_{\mathsf{upper}},\pi_{\mathsf{ef}}).
  \]
  The aggregation circuit reconstructs the three statements from $x_{\mathsf{sel}}$ and $w_{\mathsf{sel}}^{\mathsf{agg}}$, verifies exactly the three subproofs above, enforces Eq.~\eqref{eq:control-subproof-interfaces}, and outputs
  \[
  \begin{aligned}
  \pi_{\mathsf{sel}}\leftarrow\ZKProve\bigl(\mathsf{C}_{\mathsf{sel}};x_{\mathsf{sel}};w_{\mathsf{sel}}^{\mathsf{agg}}\bigr).
  \end{aligned}
  \]
\end{enumerate}

\textbf{Verifier:} $\ZKVerify(\mathsf{C}_{\mathsf{sel}};x_{\mathsf{sel}};\pi_{\mathsf{sel}})\in\{\mathtt{accept},\mathtt{reject}\}$.
\end{minipage}
}
\caption{Protocol for the model-selection aggregation proof.}
\label{fig:zk-control-plane-aggregation}
\endgroup
\end{figure}

The score-opening subproof certifies $\dtable$ by enforcing strictly increasing node-identifier order and fixed padding, authenticating every active row against $\RefCommit$, and computing its score. This performs each active row's high-dimensional score computation and reference-table opening once.

The traversal subproofs then reuse the certified scores through lower-cost table-consistency checks. Each traversal subproof represents every consumed score use by a tuple $(u,\score_u,\mathsf{mid}(u))$. It binds a sorted copy $((u_r,\score_r,\mathsf{mid}_r))_r$ to all such tuples by multiset equality. For every adjacent pair of real-node tuples, the circuit enforces $u_r=u_{r-1}\Longrightarrow(\score_r,\mathsf{mid}_r)=(\score_{r-1},\mathsf{mid}_{r-1})$.
Thus, all uses of the same node have the same score and model identifier. On each traversal's sorted copy, adjacent-row constraints mark the first tuple of every real-$u$ run, retaining one representative solely for table accounting. Write $\mathsf{Active}(\dtable)$ for the multiset of active rows. Let $\mathcal U_j$ be the multiset of these representatives and let $\mathcal U_j^{\mathsf{rest}}$ contain the remaining active rows. The circuit enforces
\begin{equation}
\mathcal U_j\uplus\mathcal U_j^{\mathsf{rest}}=_{\mathsf{ms}}\mathsf{Active}(\dtable)
\qquad(j\in\{\mathsf{upper},\mathsf{ef}\}).
\label{eq:subproof-score-table-consistency}
\end{equation}
Eq.~\eqref{eq:subproof-score-table-consistency} matches each representative to its unique active row in $\dtable$, with $\mathcal U_j^{\mathsf{rest}}$ accounting for the other rows. The repeated-use checks extend this binding to every occurrence. The level-$0$ subproof also binds its final top-$k$ references to these tuples and enforces $\mathsf{mid}(u)=\model$ for at least one $u\in F^{\mathsf{topk}}_{\WorkQ}$.

\subsubsection{Rewrite Aggregation Protocol}
\label{app:zk-rewrite-protocol}

Let $N_{\mathsf{cmsg}}=|\clientreq|\geq1$. The public parameters $\mathsf{pp}_{\mathsf{rw}}$ fix a positive number $N_{\mathsf{leaf}}$ of leaf-subproof slots, the maximum range width, the insertion bounds $\MsgBound$ and $\FragBound$, and all padding encodings. For a supported request, let $N_{\mathsf{leaf}}^{\mathsf{act}}\in[N_{\mathsf{leaf}}]$ and set $z_a=1$ exactly for the active prefix $a\leq N_{\mathsf{leaf}}^{\mathsf{act}}$. The active slots have nonempty ranges $I_a=[\mathsf{lo}_a,\mathsf{hi}_a]$ that partition $[N_{\mathsf{cmsg}}]$ consecutively; every disabled slot has $I_a=\emptyset$.

For a range $I_a$, the private layout witness $\alpha_a$ contains, in order, the enabled message-digest entries representing each $G_{i-1}$, the enabled prefix and suffix fragment bytes around each $x_i$ for $i\in I_a$, and the entries representing $G_{N_{\mathsf{cmsg}}}$ only when $\mathsf{hi}_a=N_{\mathsf{cmsg}}$. The relation $\mathsf{DigestLayout}_{I_a,N_{\mathsf{cmsg}}}$ enforces all message and fragment count bounds; preserves the role and original content of every $x_i$; recomputes each wrapped-message digest from its ordered prefix fragments, original content, and ordered suffix fragments; and forms $\mu^{(a)}$ in the exact order
``$G_{\mathsf{lo}_a-1}$, $\mathsf{wrap}(x_{\mathsf{lo}_a})$, \ldots, $G_{\mathsf{hi}_a-1}$, $\mathsf{wrap}(x_{\mathsf{hi}_a})$,'' followed by $G_{N_{\mathsf{cmsg}}}$ exactly for the last active range. It also derives $\mathcal Q^{(a)}_{\mathsf{msg}}$ and $\mathcal Q^{(a)}_{\mathsf{frag}}$ from exactly the enabled additions used in this sequence. Every disabled insertion slot has its fixed padding encoding and contributes to neither the digest sequence nor a lookup query. A disabled slot contributes the logical no-op $\mu^{(a)}=()$, $s^{(a)}=s^{(a-1)}$, and $N_a^{\mathsf{dig}}=0$. Its proof slot contains a valid duplicate proof. Each active slot realizes the corresponding range of Eqs.~\eqref{eq:gformat} and~\eqref{eq:fformat}.

Fig.~\ref{fig:zk-rewrite-proof} presents the aggregation relation in leaf order. The leaf proofs are combined through an aggregation tree. Each parent verifies its immediate child proofs using the fixed verifier data above, binds their bank commitments and request-randomizer digest, and joins consecutive message ranges and matching request-hash/count boundaries. The trusted public circuit profile fixes this tree and a constant upper bound on its depth.

\begin{figure}[t!]
\centering
\small
\begingroup
\setlength{\fboxsep}{3pt}
\setlength{\abovedisplayskip}{2pt}
\setlength{\belowdisplayskip}{2pt}
\fbox{
\begin{minipage}{0.95\linewidth}
\raggedright
\textbf{Public statement:} $x_{\mathsf{rw}}=(\clientreq,\ReqCommit,c_{\mathsf{rw}},\mathsf{pp}_{\mathsf{rw}})$, where $c_{\mathsf{rw}}=(\MsgCommit,\FragCommit)$.

\textbf{Prover:}
\begin{enumerate}[leftmargin=1.25em,itemsep=1pt,topsep=2pt,parsep=0pt]
  \item \textbf{Leaf subproofs.} For every active slot $a\in[N_{\mathsf{leaf}}]$, let $\alpha_a$ contain the selected additions and positions, $\mu^{(a)}$ the resulting digest sequence, and $\mathcal O_a=(\mathcal O^{(a)}_{\mathsf{msg}},\mathcal O^{(a)}_{\mathsf{frag}})$ their bank openings. Let $s^{(a-1)}$ and $s^{(a)}$ be its input and output fold values, and let $N_a^{\mathsf{dig}}$ be its claimed digest count. Set
  \[
  x^{(a)}_{\mathsf{leaf}}=(z_a,I_a,N_{\mathsf{cmsg}},\clientreq|_{I_a},s^{(a-1)},s^{(a)},N_a^{\mathsf{dig}},c_{\mathsf{rw}},\mathsf{pp}_{\mathsf{rw}}).
  \]
  Let $\mathcal Q^{(a)}_{\mathsf{msg}}$ contain the enabled message digests and let $\mathcal Q^{(a)}_{\mathsf{frag}}=\{b(f):f\text{ is an enabled fragment in }\alpha_a\}$. The leaf-subproof circuit $\mathsf{C}_{\mathsf{leaf}}$ enforces
  \begin{equation}
  \begin{aligned}
  &\mathsf{DigestLayout}_{I_a,N_{\mathsf{cmsg}}}(\clientreq|_{I_a},\alpha_a,\mu^{(a)})\\[-2pt]
  &{}\land\ s^{(a)}=\ReqFold(s^{(a-1)};\mu^{(a)})\land N_a^{\mathsf{dig}}=|\mu^{(a)}|\\[-2pt]
  &{}\land\ \LUT(\mathcal Q^{(a)}_{\mathsf{msg}},\MsgCommit;\mathcal O^{(a)}_{\mathsf{msg}})\\[-2pt]
  &{}\land\ \LUT(\mathcal Q^{(a)}_{\mathsf{frag}},\FragCommit;\mathcal O^{(a)}_{\mathsf{frag}}).
  \end{aligned}
  \label{eq:rewrite-leaf-relation}
  \end{equation}
  These statement values remain internal to aggregation. Generate
  \[
  \pi^{(a)}_{\mathsf{leaf}}\leftarrow\ZKProve\bigl(\mathsf{C}_{\mathsf{leaf}};x^{(a)}_{\mathsf{leaf}};(\mu^{(a)},\alpha_a,\mathcal O_a)\bigr).
  \]
  Fill each disabled proof slot with a valid duplicate. Its logical contribution is the no-op defined above.
  \item \textbf{Aggregation.} The aggregation circuit $\mathsf{C}_{\mathsf{rw}}$ recursively verifies all $N_{\mathsf{leaf}}$ leaf subproofs. Setting $N_{\mathsf{cmsg}}=|\clientreq|$, it requires $(z_a)_a$ to select a nonempty prefix whose ranges consecutively partition the public $\clientreq$ and every remaining slot to contribute unchanged fold boundaries and zero count. It binds the shared subproof inputs to $x_{\mathsf{rw}}$, chains their fold boundaries from $s^{(0)}=\mathsf{IV}$, and enforces
  \[
  \begin{aligned}
  N_{\mathsf{rw}}^{\mathsf{dig}}&=\textstyle\sum_{a=1}^{N_{\mathsf{leaf}}}N_a^{\mathsf{dig}},\\[-2pt]
  d_{\mathsf{rw}}&=H_{\mathsf{zk}}(\mathsf{tag}_{\mathsf{req\text{-}final}}\|N_{\mathsf{rw}}^{\mathsf{dig}}\|s^{(N_{\mathsf{leaf}})}),\\[-2pt]
  \ReqCommit&=\Commit(d_{\mathsf{rw}};\ReqRand).
  \end{aligned}
  \]
  {The count $N_{\mathsf{rw}}^{\mathsf{dig}}$ is private and bounded only by the leaf capacities.} With $w_{\mathsf{rw}}^{\mathsf{agg}}=(\ReqRand,N_{\mathsf{rw}}^{\mathsf{dig}},\{z_a,I_a,s^{(a-1)},s^{(a)},N_a^{\mathsf{dig}},\pi^{(a)}_{\mathsf{leaf}}\}_{a=1}^{N_{\mathsf{leaf}}})$, output
  \[
  \begin{aligned}
  \pi_{\mathsf{rw}}\leftarrow\ZKProve\bigl(\mathsf{C}_{\mathsf{rw}};x_{\mathsf{rw}};w_{\mathsf{rw}}^{\mathsf{agg}}\bigr).
  \end{aligned}
  \]
\end{enumerate}

\textbf{Verifier:} $\ZKVerify(\mathsf{C}_{\mathsf{rw}};x_{\mathsf{rw}};\pi_{\mathsf{rw}})\in\{\mathtt{accept},\mathtt{reject}\}$.
\end{minipage}
}
\caption{Protocol for the rewrite aggregation proof.}
\label{fig:zk-rewrite-proof}
\endgroup
\end{figure}

\subsubsection{End-to-end Protocol}
\label{app:e2e-protocol}

Fig.~\ref{fig:zk-verirouter-e2e-protocol} gives the end-to-end protocol, which composes the model-selection proof, rewrite proof, and serving certificate by sharing $\model$, $\ReqCommit$, and $\rid$ across their checks. All structured inputs to $H_{\mathsf{sig}}$ use fixed, injective, length-delimited encodings. The client obtains the approved root verifier keys and circuit profile through the authenticated context. The figures name the bound query and request as $e_q$ and $\clientreq$. The implementation represents them by client-computed digests binding the dimension/zero point and message count, respectively.

\begin{figure}[t!]
\centering
\small
\begingroup
\setlength{\fboxsep}{3pt}
\setlength{\abovedisplayskip}{2pt}
\setlength{\belowdisplayskip}{2pt}
\fbox{
\begin{minipage}{0.95\linewidth}
\raggedright
\textbf{Public context:} $\ctx$ from Eq.~\eqref{eq:public-context}.

\textbf{Selection regime:} At session setup, $\client$ chooses per-turn or session-pinned selection, which determines $\mmode$ for each turn.

\textbf{Request:} $\client$ chooses a fresh $\rid$ and sends $(\rid,\clientreq)$ to $\router$.

\textbf{Serving and proof generation:}
\begin{enumerate}[leftmargin=1.25em,itemsep=1pt,topsep=2pt,parsep=0pt]
  \item On a fresh-selection turn, $\router$ selects $\model$; on a pinned turn, it reuses $\model_{\mathsf{pin}}$. It then constructs $\receivedreq\in\trans(\clientreq)$, samples fresh $\ReqRand$, sets $\ReqCommit=\Commit(\ReqHash(\receivedreq);\ReqRand)$, and forwards $(\rid,\model,\receivedreq,\ReqRand)$ to $\provider_\model$.
  \item On a fresh-selection turn, $\router$ runs the model-selection aggregation protocol in Fig.~\ref{fig:zk-control-plane-aggregation} to generate $\pi_{\mathsf{sel}}$. On every turn, it runs the rewrite aggregation protocol in Fig.~\ref{fig:zk-rewrite-proof} to generate $\pi_{\mathsf{rw}}$.
  \item $\provider_\model$ recomputes $\ReqCommit$, evaluates $Y=\model(\receivedreq)$, and returns $Y$ and $\mathsf{SC}=(\rid,\ReqCommit,h_Y,\sigma_\model)$ as defined in Eq.~\eqref{eq:serving-certificate}. $\router$ sends $(Y,\model,\mathsf{SC},\pi_{\mathsf{rw}},\pi_{\mathsf{sel}})$ to $\client$, omitting $\pi_{\mathsf{sel}}$ on pinned turns.
\end{enumerate}

\textbf{Client verification:}
\begin{enumerate}[leftmargin=1.25em,itemsep=1pt,topsep=2pt,parsep=0pt]
  \item Parse $\mathsf{SC}=(\rid,\ReqCommit,h_Y,\sigma_\model)$, obtain $pk_\model=\mathsf{Reg}[\model]$, require that $\rid$ equals the unused identifier issued for this request, check $h_Y=H_{\mathsf{sig}}(Y)$, and verify
  \[
  \mathsf{Verify}_{pk_\model}\bigl(H_{\mathsf{sig}}(\rid\|\model\|\ReqCommit\|h_Y),\sigma_\model\bigr)=1.
  \]
  \item Verify $\ZKVerify(\mathsf{C}_{\mathsf{rw}};x_{\mathsf{rw}};\pi_{\mathsf{rw}})=\mathtt{accept}$ using the same $\clientreq$ and $\ReqCommit$.
  \item On a fresh-selection turn, derive $e_q=\encoder(\clientreq)$ and verify $\ZKVerify(\mathsf{C}_{\mathsf{sel}};x_{\mathsf{sel}};\pi_{\mathsf{sel}})=\mathtt{accept}$ under the approved root key. Require the model identifier in the verified $x_{\mathsf{sel}}$ to equal $\model$ used in the serving-certificate check. On a pinned turn, instead require $\model=\model_{\mathsf{pin}}$ from the accepted initial fresh-selection turn.
  \item Accept $(Y,\model)$ only if all checks required for $\mmode$ accept. On the initial accepted turn of a session-pinned regime, set $\model_{\mathsf{pin}}\gets\model$; then mark $\rid$ completed.
\end{enumerate}
\end{minipage}
}
\caption{End-to-end serving and verification protocol.}
\label{fig:zk-verirouter-e2e-protocol}
\endgroup
\end{figure}

\subsection{Constraint Complexity}
\label{app:constraint-complexity}

This section analyzes the constraint complexity of the protocols.

All bounds below count constraints over the public padded capacities. Fixed-width local checks cost $O(1)$ per processed entry, while a Merkle opening costs constraints linear in its tree height. The batched $\mathsf{LUT}$ and $\mathsf{MSetEq}$ arguments from Sec.~\ref{sec:bg-zkp} use constraints linear in their total table, query, or sequence lengths. Ordered copies are supplied as witnesses, bound by multiset equality, and checked locally, so they require no sorting network in the circuit.

\noindent\textbf{Data-structure checkers.}
The stack checker processes padded event blocks and its final sequence. The priority-queue checker processes push/pop projections and an inlined stack trace over $\OutSeq$. Let their total padded checker lengths be
\[
\begin{aligned}
\mathsf{Len}_{\mathsf{stk}}
&:=\StackPushLenMax(\NumPopsMax+1)+\FinLenMax,\\
\mathsf{Len}_{\mathsf{pq}}
&:=\PushLenMax+\PopLenMax\\[-2pt]
&\qquad +(\PopLenMax+\PQFinLenMax)(\NumPopsMax+1)+\FinLenMax.
\end{aligned}
\]
In $\mathsf{Len}_{\mathsf{pq}}$, $\NumPopsMax$ and $\FinLenMax$ come from the inlined stack parameters $\xstkPQ$.

For the stack checker, $\mathsf{StackTransition}$ applies a constant number of local constraints to each padded event. Its parent lookup has one table row per push and one query per push or pop, while $\mathsf{StackAccounting}$ compares the push sequence with the popped and final items. Each component therefore processes $O(\mathsf{Len}_{\mathsf{stk}})$ entries, giving $O(\mathsf{Len}_{\mathsf{stk}})$ constraints.

For the priority-queue checker, $\mathsf{TraceConsistency}$ scans $\OpPosVec$ and the push and pop projections, including their position multiset equality. $\mathsf{QueueAccounting}$ processes the push projection and $\OutSeq$. $\mathsf{QueueOrder}$ adds the inlined stack checker, one removal-position lookup, and constant local inequalities per $\OutSeq$ item, while $\mathsf{FinalSequenceOrder}$ scans the final sequence once. All non-stack inputs have total length $O(\PushLenMax+\PopLenMax+\PQFinLenMax)$ and are absorbed by $\mathsf{Len}_{\mathsf{pq}}$; its remaining term is exactly the padded inlined-stack length. Hence the checker uses $O(\mathsf{Len}_{\mathsf{pq}})$ constraints.

\noindent\textbf{Model-selection proof.}
Let $\mathsf{ht}_{\mathsf{ref}}$ and $\mathsf{ht}_{\mathsf{graph}}$ be the reference- and graph-table Merkle heights. The padded score table has $\ScoreTableCap$ rows, and each reference vector has dimension $D$. The upper-layer subproof has $\ustep$ padded steps; writing $B_{\mathsf{upper}}=\max_{\ell\geq1}B_\ell$, define its padded node workload as $\mathsf{Len}_{\mathsf{upper}}:=\ustep(B_{\mathsf{upper}}+1)$. At level $0$, $N_{\mathsf{round}}^{0,\max}$ is the padded expansion bound and $T=1+N_{\mathsf{round}}^{0,\max}B_0$ is the length of $\WorkUpdTrace$. Let $\mathsf{Len}_{\mathsf{pq}}^{\CandQ}$ and $\mathsf{Len}_{\mathsf{pq}}^{\WorkQ}$ be the priority-queue checker lengths above instantiated for the two queues, and set $\mathsf{Len}_{\mathsf{ef}}:=T+\mathsf{Len}_{\mathsf{pq}}^{\CandQ}+\mathsf{Len}_{\mathsf{pq}}^{\WorkQ}$.
This includes the level-$0$ update rows and both inlined queue checkers. The filters and multiset-equality checks make only a constant number of linear passes over these rows.

The score-opening subproof computes and authenticates at most $\ScoreTableCap$ $D$-dimensional scores. Each traversal subproof makes another linear pass over the padded score table, absorbed by the same $\ScoreTableCap$ term. Upper-layer traversal opens at most one graph row per padded step, and level-$0$ traversal opens at most one per padded expansion. Let $\mathsf{VfyCost}_{\mathsf{sel}}$ be the total cost of recursively verifying the three subproofs in aggregation, and write $|x_{\mathsf{sel}}|$ for the number of field elements in the public statement. The full model-selection proof therefore uses
\[
\begin{aligned}
O\bigl(&\ScoreTableCap(D+\mathsf{ht}_{\mathsf{ref}}+1)
+\mathsf{Len}_{\mathsf{upper}}+\mathsf{Len}_{\mathsf{ef}}\\[-2pt]
&+(\ustep+N_{\mathsf{round}}^{0,\max})\mathsf{ht}_{\mathsf{graph}}
+\mathsf{VfyCost}_{\mathsf{sel}}+|x_{\mathsf{sel}}|\bigr).
\end{aligned}
\]
Apart from recursive verification, this is linear in the displayed sum of padded score and traversal workloads. The full committed-index size enters only through the Merkle heights.

\noindent\textbf{Rewrite proof.}
Let $\mathsf{Rows}_{\mathsf{rw}}$ be the total number of padded local rows processed across the rewrite leaf subproofs and their aggregation, excluding Merkle paths and recursive proof verification. It includes byte-hash steps, digest-layout and fold slots, selected-addition and range checks, and public-request bytes matched to subproof slices. Let $\mathsf{OpenSlots}_{\mathsf{msg}}$ and $\mathsf{OpenSlots}_{\mathsf{frag}}$ be the total padded message- and fragment-bank opening slots, with Merkle heights $\mathsf{ht}_{\mathsf{msg}}$ and $\mathsf{ht}_{\mathsf{frag}}$. For $N_{\mathsf{leaf}}$ leaf subproofs, let $\mathsf{VfyCost}_{\mathsf{leaf}}$ be the cost of recursively verifying one such subproof. The full rewrite proof uses
\[
\begin{aligned}
O\bigl(&\mathsf{Rows}_{\mathsf{rw}}
+\mathsf{OpenSlots}_{\mathsf{msg}}\mathsf{ht}_{\mathsf{msg}}
+\mathsf{OpenSlots}_{\mathsf{frag}}\mathsf{ht}_{\mathsf{frag}}\\[-2pt]
&+N_{\mathsf{leaf}}\mathsf{VfyCost}_{\mathsf{leaf}}\bigr)
\end{aligned}
\]
constraints. A complete inserted message contributes an authenticated digest and a fold slot but no byte hashing because the message bank supplies its digest. Fragment bytes contribute both the bank-digest and wrapped-message hash passes to $\mathsf{Rows}_{\mathsf{rw}}$. For fixed Merkle heights and recursive verifier, the bound is linear in $\mathsf{Rows}_{\mathsf{rw}}+\mathsf{OpenSlots}_{\mathsf{msg}}+\mathsf{OpenSlots}_{\mathsf{frag}}+N_{\mathsf{leaf}}$. With binary Merkle banks, varying bank capacity affects it only through logarithmic Merkle heights.

\subsection{Algorithmic Correctness}
\label{app:algorithmic-correctness}

We prove the stack, priority-queue and level-$0$ \textsf{efSearch} characterizations with exact multiset equality.

\subsubsection{Monotone-stack Checker Correctness}

\begin{theorem}[Equivalent checker for monotone-stack traces]
\label{thm:stack-checker}
For any strict total order $\StackOrd$, Eq.~\eqref{eq:stk-condition} holds if and only if $\Trace_{\mathsf{stk}}$ is a valid monotone-stack execution on $\StackPushSeq$ and $\FinStack$ contains exactly the items remaining after that execution.
\end{theorem}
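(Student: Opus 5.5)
We prove the two directions separately. The backbone of both is an invariant that reads the stack off the witness: the current stack is the parent chain starting at the recorded top, namely $t, \ParentOf{t}, \ParentOf{\ParentOf{t}}, \ldots$ down to $\NoParent$. This invariant, maintained by induction over the events of $\Trace_{\mathsf{stk}}$, does most of the work.

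\emph{Completeness direction} (valid execution $\Rightarrow$ Eq.~\eqref{eq:stk-condition}). Take the honest witness. Let $\TopTrace$ record the true top before and after every event. Let $\ParentTab$ map each $\StackPushItem{i}$ to the true top immediately before its push, or to $\NoParent$ if the stack was empty. By induction over blocks, the true stack equals the parent chain from the current top. This is because a push sets $\ParentOf{\StackPushItem{i}}$ to the old top, and a pop exposes exactly the parent of the popped item. Hence every pop removes the recorded top $u_{\mathsf{pop}}$, satisfies $u_{\mathsf{pop}}\StackOrd\StackPushItem{i}$ by the algorithm's loop condition, and leaves top $\ParentOf{u_{\mathsf{pop}}}$. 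The loop exits exactly when the top is $\NoParent$ or fails the condition, and that top is $\ParentOf{\StackPushItem{i}}$. So $\mathsf{StackTransition}$ holds, and all lookup queries hit rows of $\ParentTab$. $\mathsf{StackAccounting}$ holds because each pushed insertion is popped at most once and is otherwise in the final stack. Since insertions are distinct items, the multiset identity \eqref{eq:stk-accounting} is exact.

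\emph{Soundness direction} (Eq.~\eqref{eq:stk-condition} $\Rightarrow$ valid execution). Run the honest algorithm alongside the claimed trace. Prove by induction on the event index the invariant that the real stack contents, from top to bottom, equal the parent chain from the recorded top $\StackTop{i}{j}$ under $\ParentTab$.
\begin{itemize}
\item Base case: the initial top is $\NoParent$, and the stack is empty.
\item Push step: $\ParentOf{\StackPushItem{i}}$ is forced by the lookup to equal the boundary top, so pushing extends the chain by exactly one element.
\item Pop step: the pop must name the recorded top and require $u_{\mathsf{pop}}\StackOrd\StackPushItem{i}$. The honest algorithm would pop precisely this element, and the new top $\ParentOf{u_{\mathsf{pop}}}$ is the next chain element.
\end{itemize}
The end-of-block condition (the remaining top is $\NoParent$ or fails the pop test) matches the exit of the honest while-loop. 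The block-boundary sharing of tops chains the steps together. Hence, block by block, the claimed pops are exactly the honest pops, so $\Trace_{\mathsf{stk}}$ is the honest trace.

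The main obstacle is the uniqueness of parent-map entries. A lookup only guarantees that the queried pair is \emph{some} row of $\ParentTab$. If an item had two rows, the chain could fork. This is resolved by the structure of $\ParentTab$: it has exactly one row per push, keyed by that push's item, and items are distinct insertions. Each key therefore has a unique parent, and the push-query lookup pins that row to the true pre-push top. A subtler point is that a pop might name an item already popped earlier, or never pushed. Such an item cannot be on the chain, because the invariant shows the recorded top is always the real top. Removal-once is additionally enforced by $\mathsf{StackAccounting}$.

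Finally, $\mathsf{StackAccounting}$ pins the final stack. The pops are now the honest pops, so the multiset identity \eqref{eq:stk-accounting} forces $\FinStack$, as a multiset, to equal the pushed items minus the popped ones. These are exactly the remaining items.

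If $\FinStack$ is also required to be ordered, the appendix's private-lookup realization, which lists $\FinStack$ from the top with its parent edges, gives the same chain argument. It recovers the top-to-bottom order from the final top. The empty-sequence case is vacuous.
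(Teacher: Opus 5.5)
Your argument is essentially the paper's. Both directions are an induction whose invariant is that the stack, read top to bottom, is the $\ParentOf{\cdot}$-chain starting at the recorded top; the end-of-block condition matches the exit of the pop loop, and $\mathsf{StackAccounting}$ then pins $\FinStack$. One step needs to be made explicit: your invariant alone does not exclude a recorded pop on an empty stack (recorded top $\NoParent$). The paper closes this with the pop query $(\PoppedItem{i}{j},\StackTop{i}{j})\in\ParentTab$, which forces $\PoppedItem{i}{j}$ to be a pushed item (every first component of $\ParentTab$ is one), hence $\PoppedItem{i}{j}\neq\NoParent$ and the stack is nonempty at every reported pop.
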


\begin{proof}
If $\StackPushLen=0$, $\mathsf{StackTransition}$ is vacuous and $\mathsf{StackAccounting}$ holds exactly when $\FinStack$ is empty. This is precisely the empty monotone-stack execution. Assume $\StackPushLen>0$ below.

\noindent\textbf{Necessity.} Execute the monotone-stack algorithm and let $S_{i,j}$ be the stack after the $j$-th event of $\Trace_{\mathsf{stk},i}$, with $S_{i,0}$ denoting the stack before processing $\StackPushItem{i}$. Set $\StackTop{i}{j}=\mathsf{top}(S_{i,j})$, using $\NoParent$ for an empty stack. For every pushed item $u$, set $\ParentOf{u}$ to the top immediately before its push. A nonempty parent is therefore already on the stack when $u$ is pushed. A pop of $u_{\mathsf{pop}}=\PoppedItem{i}{j}$ then gives $u_{\mathsf{pop}}=\StackTop{i}{j-1}$, $u_{\mathsf{pop}}\StackOrd\StackPushItem{i}$, and $\StackTop{i}{j}=\ParentOf{u_{\mathsf{pop}}}$. After the last pop, $\ParentOf{\StackPushItem{i}}=\StackTop{i}{\NumPops{i}}$, this top is $\NoParent$ or fails the pop condition, and the push sets $\StackTop{i}{\NumPops{i}+1}=\StackPushItem{i}$. The initial and boundary equalities are $\StackTop{1}{0}=\NoParent$ and $\StackTop{i+1}{0}=\StackTop{i}{\NumPops{i}+1}$. Hence $\mathsf{StackTransition}$ holds. Each insertion instance is popped once or remains in the final stack, so $\{\StackPushItem{i}\}_{i\in[\StackPushLen]}=_{\mathsf{ms}}\{\PoppedItem{i}{j}\}_{i\in[\StackPushLen],\,j\in[\NumPops{i}]}\uplus\FinStack$, establishing $\mathsf{StackAccounting}$.

\noindent\textbf{Sufficiency.} Assume Eq.~\eqref{eq:stk-condition}. Process the verified events in trace order and let $\widehat S_{i,j}$ denote the reconstructed stack after event $j$ of $\Trace_{\mathsf{stk},i}$, with $\widehat S_{i,0}$ denoting the stack before processing $\StackPushItem{i}$. We maintain the invariant that $\StackTop{i}{j}$ is the first item of $\widehat S_{i,j}$, or $\NoParent$ when it is empty, and that $\ParentOf{u}$ is the next item below every $u$ in $\widehat S_{i,j}$, or $\NoParent$ for its bottom item. The base case follows from $\StackTop{1}{0}=\NoParent$. For a pop, the lookup $(\PoppedItem{i}{j},\StackTop{i}{j})\in\ParentTab$ makes $\PoppedItem{i}{j}$ a real item{, since every first component of $\ParentTab$ is a pushed item}. Hence the stack is nonempty, and $\PoppedItem{i}{j}=\StackTop{i}{j-1}$ removes its first item $u_{\mathsf{pop}}$, and $\StackTop{i}{j}=\ParentOf{u_{\mathsf{pop}}}$ is the first item of the remaining suffix. For a push of $u=\StackPushItem{i}$, $\ParentOf{u}=\StackTop{i}{\NumPops{i}}$ links $u$ to the current top, and $\StackTop{i}{\NumPops{i}+1}=u$ prepends it to the stack. Since each $u$ is a new insertion instance, this push cannot create a parent cycle. The equality $\StackTop{i+1}{0}=\StackTop{i}{\NumPops{i}+1}$ carries this invariant to the next pushed item.

Every reported pop therefore removes the actual top and satisfies the pop condition. The final pre-push top is $\NoParent$ or fails that condition, so no required pop is omitted. Thus each $\Trace_{\mathsf{stk},i}$, and hence $\Trace_{\mathsf{stk}}$, equals the monotone-stack execution of $\StackPushSeq$. Finally, $\mathsf{StackAccounting}$ identifies $\FinStack$ with exactly the insertion instances not popped by this execution.

\end{proof}

\subsubsection{Priority-queue Checker Correctness}
\label{app:proof-pq-equiv-checker}

\noindent\textbf{Parent-map property.} {The circuit needs no encoding of the value $\infty$: for a final item, $\InPos{r}<\infty$ holds trivially, and by Eq.~\eqref{eq:pq-final-order} no item's latest preceding lower-priority item is final, so no parent payload is $\infty$.} When the monotone stack scans $\OutSeq$ under $\StackOrdPQ$, its map $\ParentMapPQ$ satisfies $\ParentOf{\OutItem{r}}=\OutItem{\PLIndex{r}}$, the latest preceding lower-priority item, or $\NoParent$ if no such index exists. Suppose $\PLIndex{r}>0$. This item cannot have been popped before step $r$, because the later item that popped it would itself be a later lower-priority predecessor of $r$, contradicting the maximality of $\PLIndex{r}$. Every item strictly between $\PLIndex{r}$ and $r$ that remains on the stack has higher priority than $\OutItem{r}$ and is popped while processing $r$, leaving $\OutItem{\PLIndex{r}}$ on top. If $\PLIndex{r}=0$, the same argument pops every remaining predecessor and leaves the stack empty.

\begin{proof}[Proof of Theorem~\ref{thm:pq-equiv-checker}]
\noindent\textbf{Necessity.} Assume a valid execution and list its remaining items in decreasing priority as $\FinalPQ$. Form $\OutSeq$ accordingly. For each $r\in[\OutLen]$, set $\InPos{r}$ to the position at which $\OutItem{r}$ was pushed. Every pushed insertion instance is either popped once or remains in the final queue, so Eq.~\eqref{eq:pq-in-multiset} holds. The order of $\FinalPQ$ gives Eq.~\eqref{eq:pq-final-order}.

Fix $r\in[\OutLen]$. Since an item must be inserted before it is removed, $\InPos{r}<\OutPos{r}$. If $\PLIndex{r}=0$, then $\OutPos{0}=0<\InPos{r}$. Suppose $\PLIndex{r}>0$. We have $\PLIndex{r}<r$ and $\OutItem{r}\HigherPrio\OutItem{\PLIndex{r}}$. The item $\OutItem{\PLIndex{r}}$ cannot be a final survivor. Otherwise $\OutItem{r}$ would occur later in the final suffix, contradicting its decreasing-priority order. Hence $\OutPos{\PLIndex{r}}<\infty$. If $\InPos{r}<\OutPos{\PLIndex{r}}$, then $\PLIndex{r}<r$ also gives $\OutPos{\PLIndex{r}}<\OutPos{r}$, so $\OutItem{r}$ is live when the queue pops the lower-priority $\OutItem{\PLIndex{r}}$, contradicting priority-queue semantics. Equality $\InPos{r}=\OutPos{\PLIndex{r}}$ is impossible because the assigned operation positions are distinct. Therefore $\OutPos{\PLIndex{r}}<\InPos{r}<\OutPos{r}$, proving Eq.~\eqref{eq:pq-prevlower-main}.

\noindent\textbf{Sufficiency.} Assume Eqs.~\eqref{eq:pq-in-multiset}--\eqref{eq:pq-final-order}. By the definition in Sec.~\ref{sec:zk-priority-queue}, ordering the projection entries by position and discarding their labels gives $\Trace_{\mathsf{pq}}$. Eq.~\eqref{eq:pq-in-multiset} associates each item in $\OutSeq$ with exactly one push. For each pop index $j\in[\PopLen]$, Eq.~\eqref{eq:pq-prevlower-main} gives $\InPos{j}<\OutPos{j}=\PopTime{j}$, so $\OutItem{j}=\PopItem{j}$ is live when reported popped.

Suppose that a higher-priority item $\OutItem{r}$ is also live at $\PopTime{j}$. Then $\InPos{r}<\OutPos{j}<\OutPos{r}$. The removal positions are increasing along the pop prefix and equal $\infty$ on the final suffix, so $\OutPos{j}<\OutPos{r}$ implies $j<r$. Thus $j$ is an eligible index in the definition of $\PLIndex{r}$, and maximality of $\PLIndex{r}$ gives $\PLIndex{r}\geq j$. Monotonicity of removal positions then gives $\OutPos{\PLIndex{r}}\geq\OutPos{j}$. Eq.~\eqref{eq:pq-prevlower-main} instead requires $\OutPos{\PLIndex{r}}<\InPos{r}<\OutPos{j}$, a contradiction. Hence every pop removes the highest-priority live item.

After all operations, Eq.~\eqref{eq:pq-in-multiset} makes $\FinalPQ$ exactly the unpopped items, and Eq.~\eqref{eq:pq-final-order} orders them by decreasing priority. Therefore the trace and the claimed final sequence satisfy the theorem.
\end{proof}

To apply this theorem to the circuit, observe that every witness satisfying Eq.~\eqref{eq:pq-condition} fixes one operation order through $\mathsf{TraceConsistency}$. $\mathsf{QueueAccounting}$ and $\mathsf{FinalSequenceOrder}$ give Eqs.~\eqref{eq:pq-in-multiset} and~\eqref{eq:pq-final-order}. Theorem~\ref{thm:stack-checker} and the parent-map property above make $\ParentOf{\OutItem{r}}=\OutItem{\PLIndex{r}}$, so the removal-position lookup and local inequalities give Eq.~\eqref{eq:pq-prevlower-main}. Theorem~\ref{thm:pq-equiv-checker} then gives a valid priority-queue trace and its exact decreasing-priority final sequence. Conversely, its necessity construction and the honest monotone-stack execution supply a satisfying witness for every such trace fitting the declared capacities.

\subsubsection{\texorpdfstring{Level-$0$ \textsf{efSearch} Constraint Correctness}{Level-0 efSearch Constraint Correctness}}
\label{app:proof-efsearch}

\begin{theorem}[Level-$0$ \textsf{efSearch} constraint correctness]
\label{thm:efsearch-proof}
Fix a graph, a score table and an entry node. With exact multiset equalities and lookups into this graph, the constraints in Appendix~\ref{app:wc-instantiation} satisfy:
\begin{enumerate}[label=(\roman*),itemsep=0pt,topsep=1pt]
\item \uline{Completeness.} Every supported execution of Alg.~\ref{alg:hnsw-efsearch} whose top-$k$ output contains a node $u$ with $\mathsf{mid}(u)=\model$ supplies a satisfying assignment with that output.
\item \uline{Soundness.} Every satisfying assignment describes an execution with these inputs. Its claimed set is the execution's top-$k$ output and contains a node $u$ with $\mathsf{mid}(u)=\model$.
\end{enumerate}
\end{theorem}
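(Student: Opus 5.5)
For completeness, I will build the witness directly from an honest run of Alg.~\ref{alg:hnsw-efsearch}.

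\emph{Witness construction.} Lay out $\WorkUpdTrace$ exactly as in Eq.~\eqref{eq:efsearch-fixed-positions}. Row $1$ holds $\EntryNode$. Round block $a$ holds the padded level-$0$ neighbor row of the $a$-th popped candidate. Blocks after the last executed round are padding. Set $\mathsf{En}_i$ by Eq.~\eqref{eq:first-visit-stable-filter}, so it coincides with the visited-set test $u\notin V$.

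\emph{Sentinel simulation of $\WorkQ$.} Record the $\WorkQ$ behaviour through the sentinel representation. Each conditional insertion of Alg.~\ref{alg:hnsw-efsearch} becomes a push followed by a pop of the $\kappa_W$-minimum. Three cases arise:
\begin{itemize}
\item while $|\WorkQ|<\EfCap$, the pop evicts a sentinel;
\item if $u>\WorkQ.\mathsf{Worst}()$, it evicts the worst real node;
\item otherwise it evicts $u$ itself.
\end{itemize}
In every case the set of real nodes matches the algorithm's $\WorkQ$, and $\AdmitBit{i}=1$ exactly when the algorithm pushes $u$ into $\CandQ$.

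\emph{Queue and round checks.} Both queue traces are then valid, so the necessity direction of Theorem~\ref{thm:pq-equiv-checker}, together with its circuit-level converse stated after its proof, supplies the inlined checker witnesses. For Eq.~\eqref{eq:efsearch-round-threshold}, a round that reaches its neighbor loop did not break, so $\mathsf{cur}\ge\WorkQ.\mathsf{Worst}()$ at round start. The first admission in a full round evicts precisely that round-start worst node. Eq.~\eqref{eq:efsearch-terminal} follows from the loop exit condition. The one subtlety is the threshold break: the final pop is performed by the algorithm but omitted in the trace, which leaves that candidate in $C_{\mathsf{fin}}$ with key below $\min W_{\mathsf{fin}}$. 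The final membership check holds by hypothesis.

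For soundness, I will argue by induction on rounds that a satisfying assignment reconstructs a run of Alg.~\ref{alg:hnsw-efsearch}, possibly followed by an inert suffix.

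\emph{Validity of the two queues.} By the sufficiency direction of Theorem~\ref{thm:pq-equiv-checker}, applied through the circuit argument following its proof, both $\CandQ$ and $\WorkQ$ are valid priority queues under $\kappa_C$ and $\kappa_W$. Their final sequences are exact.

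\emph{Row contents and visited set.} Node-order authentication makes each row block the true neighbor list of the popped candidate. The first-visit filter makes $\mathsf{En}$ the visited-set predicate.

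\emph{Capacity invariant.} The invariant I carry is: after each enabled row, the real nodes in $\WorkQ$ are the best $\min(\EfCap,\#\text{visited})$ visited nodes under $\mathsf{key}$, and $\CandQ$ holds exactly the admitted, not-yet-popped nodes. It follows from push-then-pop-min semantics with $\EfCap$ sentinels ranked below all reals. Once all sentinels are gone, the capacity stays at $\EfCap$. An eviction differs from $n_i$ iff $|\WorkQ|<\EfCap$ or $n_i$ beats the worst node, which is exactly the algorithm's admission test. Eq.~\eqref{eq:efsearch-direct-projection} then matches the $\CandQ$ pushes.

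\emph{Main obstacle: the stopping rule.} The genuine algorithm may break at a round the trace continues past, so the trace can contain rounds the algorithm never executes. I handle this in two steps.
\begin{enumerate}
\item Consider a round in which the popped candidate is below the full-queue threshold. Any admission would evict the round-start worst node, and Eq.~\eqref{eq:efsearch-round-threshold} would then be violated. Hence the round has no admissions, leaves both queues unchanged except for the pop, and pushes nothing.
\item Every later candidate is no better than this popped maximum, while the threshold cannot decrease. By induction, all subsequent rounds are likewise inert.
\end{enumerate}
Truncating at the first below-threshold pop gives a genuine execution with the same $W_{\mathsf{fin}}$. If no such round exists, Eq.~\eqref{eq:efsearch-terminal} shows the trace ended at a legitimate stop: either $\CandQ$ is empty, or the next pop would trigger the break.

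\emph{Output.} Finally, the $\WorkQ$ final-sequence order places the sentinels in a prefix and the real nodes in ascending $\mathsf{key}$. Eq.~\eqref{eq:efsearch-topk-output} therefore returns the $k$ best retained nodes, which is the algorithm's output. The membership constraint then gives a $u$ in this output with $\mathsf{mid}(u)=\model$.
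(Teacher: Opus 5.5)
Your proposal is correct and follows the paper's proof essentially step for step. The shared ingredients are the sentinel push--pop-min capacity invariant ($W_j=\mathsf{Top}_{\EfCap}(W_{j-1}\cup\{u_j\})$, equivalent to your ``best $\min(\EfCap,\#\text{visited})$'' form), the direct reuse of admitted rows as $\CandQ$ pushes, the inert below-threshold suffix that is deleted to recover a genuine stopped execution, and the pre-pop reading of Eq.~\eqref{eq:efsearch-terminal} for completeness. One small slip: not breaking gives $\mathsf{cur}\geq\WorkQ.\mathsf{Worst}()$ only when $\WorkQ$ is full. In a non-full round the first admission evicts a sentinel, so Eq.~\eqref{eq:efsearch-round-threshold} is vacuous there, which your ``full round'' qualifier already implicitly uses.
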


\begin{proof}
\noindent\textbf{(i) Queue-trace validity.} The circuit $\cef$ enforces both checks in Eq.~\eqref{eq:efsearch-queue-checkers} on its queue variables and the node keys from its fixed score table. The priority-queue argument above therefore gives valid queue traces and exact final sequences under the two instantiated priority orders. It remains to show that the sentinel representation of $\WorkQ$ implements its capacity bound. For a real-node set $\mathcal A$, let $\mathsf{Top}_{\EfCap}(\mathcal A)$ be its $\EfCap$ largest-key members, or all of $\mathcal A$ if $|\mathcal A|<\EfCap$. For each $j\in[|\WorkPushTrace|]$, let $\widehat{u}_j$ be the $j$th item of $\WorkPushTrace$, let $u_j$ be its node component, and let $W_j$ be the real-node projection after the first $j$ logical updates. Let $\mathsf{row}(j)$ be the unique enabled row projecting to $\widehat{u}_j$. Then
\begin{equation}
\begin{aligned}
W_j&=\mathsf{Top}_{\EfCap}(W_{j-1}\cup\{u_j\}),\\[-2pt]
\AdmitBit{\mathsf{row}(j)}=1&\Longleftrightarrow u_j\in W_j.
\end{aligned}
\label{eq:workq-capacity-equivalence}
\end{equation}
We prove Eq.~\eqref{eq:workq-capacity-equivalence} by induction on $j$. The base case is the sentinel-only initialization, whose real-node projection is empty. If $|W_{j-1}|<\EfCap$, the minimum after insertion is a sentinel, so the new node remains. If $|W_{j-1}|=\EfCap$, the update removes the minimum of $W_{j-1}\cup\{u_j\}$. The inserted node is removed exactly when it is outside $\mathsf{Top}_{\EfCap}(W_{j-1}\cup\{u_j\})$. This proves the capacity invariant and admission equivalence in Eq.~\eqref{eq:workq-capacity-equivalence}, matching the conditional update in Alg.~\ref{alg:hnsw-efsearch}.

\noindent\textbf{(ii) Cross-queue consistency.} First, committed graph lookups bind the node sequence in $\WorkUpdTrace$ to the entry node followed by the level-$0$ neighbor lists exposed in $\CandQ$ pop order. Second, Eq.~\eqref{eq:first-visit-stable-filter} makes $\WorkPushTrace$ exactly the first-visit item sequence. Every enabled row uses its reserved $\tau_i$ as the node's persistent tie-breaking position. Equation~\eqref{eq:workq-fixed-positions} makes that row act on the $\WorkQ$ contents produced by its enabled predecessors.

Third, let $\CandPushLen:=|\CandPushTrace|$ and let $i_1<\cdots<i_{\CandPushLen}$ be the admitted rows of $\WorkUpdTrace$. Since Eq.~\eqref{eq:efsearch-direct-projection} directly reuses these rows as the candidate insertion projection,
\begin{equation}
(\CandPushTime{j},\widehat{\CandPushItem{j}})
=\bigl(\tau_{i_j},(n_{i_j},\score_{i_j},\tau_{i_j})\bigr)
\qquad(j\in[\CandPushLen]).
\label{eq:cand-work-sequence-equality}
\end{equation}
Thus $\CandQ$ receives exactly the admitted first visits in their original order and with the same reserved positions. Equation~\eqref{eq:efsearch-fixed-positions} places every round-$a$ push after the pop that exposed its authenticated neighbor row and, when another certified round follows, before the next certified candidate pop. Starting from the entry push at position $1$, induction on $a$ therefore makes the candidate queue at the start of every round before the output-preserving suffix equal the queue obtained from the preceding rounds of Alg.~\ref{alg:hnsw-efsearch}. The stopping argument below handles any later certified suffix. Together, the authenticated neighbor openings, Eq.~\eqref{eq:first-visit-stable-filter}, Eq.~\eqref{eq:cand-work-sequence-equality}, and the assigned operation positions for both queues reproduce the cross-queue operations and stable first-encounter priorities of every nonterminal algorithm round.

\noindent\textbf{(iii) Stopping rule and output.} Index the certified expansion rounds by $a$. Let $C_a$ and $W_a$ be the candidate and retained real-node sets at the start of round $a$, so $\CandPopItem{a}=\max C_a$, and, when $|W_a|=\EfCap$, let $\theta_a=\min_{u\in W_a}\mathsf{key}(u)$. If round $a$ admits an update, every row before its first admitted row is rejected and leaves $W_a$ unchanged. By Eq.~\eqref{eq:workq-capacity-equivalence}, that first admission evicts a real item if and only if $|W_a|=\EfCap$. When real, the eviction has key $\theta_a$. Equation~\eqref{eq:efsearch-round-threshold} therefore enforces $\mathsf{key}(\CandPopItem{a})\geq\theta_a$ exactly when required. Hence, if $\mathsf{key}(\CandPopItem{a})<\theta_a$, the round admits no update: $W_{a+1}=W_a$ and no new candidate is pushed. Therefore $C_{a+1}=C_a\setminus\{\CandPopItem{a}\}$. Repeating this argument gives, for every subsequent nonempty round $b\geq a$,
\begin{equation}
\begin{aligned}
W_{b+1}&=W_a,\\[-2pt]
C_{b+1}&=C_b\setminus\{\CandPopItem{b}\},\\[-2pt]
C_{b+1}\ne\emptyset&\Rightarrow
\max_{u\in C_{b+1}}\mathsf{key}(u)\leq\mathsf{key}(\CandPopItem{b})<\theta_a.
\end{aligned}
\end{equation}
Thus the suffix contains no admitted update and leaves $\WorkQ$ unchanged. Deleting it leaves the pre-pop configuration of round $a$. Appending the algorithm's $\mathsf{PopMax}$ of $\CandPopItem{a}$ and taking its stopping branch yields a valid execution with the same final $\WorkQ$. Equation~\eqref{eq:efsearch-terminal} is the equivalent pre-pop form of the algorithm's stopping rule. If $C_{\mathsf{fin}}$ is nonempty, its largest-key item is the next candidate popped and immediately satisfies the stopping test.

\noindent\textbf{Completeness.} From an honest supported execution, form $\WorkUpdTrace$ from the authenticated neighbor rows, assign its bits and positions as above, and prepend the $\EfCap$ sentinels. Parts (i)--(iii) show that these witnesses satisfy both queue relations, the cross-queue links, and the stopping checks, using empty terminal queues on exhaustion or the equivalent pre-pop queues on early stopping. Equation~\eqref{eq:efsearch-topk-output} and the model-membership constraint bind the claimed output, so the honest execution supplies a satisfying assignment.

\noindent\textbf{Soundness.} Consider a satisfying assignment to these constraints. Parts (i)--(iii) reconstruct an execution of Alg.~\ref{alg:hnsw-efsearch} with the same final $\WorkQ$, after deleting any below-threshold no-admission suffix. Equation~\eqref{eq:efsearch-topk-output} binds its returned top-$k$ set, and the membership constraint binds one returned row's model identifier to $\model$.
\end{proof}

\subsection{Cryptographic Security}
\label{app:cryptographic-security}

{This section proves Theorems~\ref{thm:control-plane},~\ref{thm:data-plane} and~\ref{thm:verirouter-security} using the algorithmic results of Appendix~\ref{app:algorithmic-correctness}. All adversaries are probabilistic polynomial-time (PPT) in the security parameter $\lambda$, and a function of $\lambda$ is negligible if it is eventually smaller than $1/\mathrm{poly}(\lambda)$ for every polynomial. We model $H_{\mathsf{zk}}$ as a random oracle~\cite{bellare1993random}, as is standard for arguments made non-interactive by the Fiat--Shamir transform~\cite{fiat1986prove}.

\subsubsection{Security Preliminaries}
\label{app:security-extraction}

An assignment satisfying the subproof constraints describes a correct execution if it uses the endorsed index and banks and its multiset equalities hold exactly. Appendix~\ref{app:algorithmic-correctness} shows this for model selection, and the proof of Theorem~\ref{thm:rewrite-proof} shows it for rewriting. Soundness therefore needs three facts about an accepted root proof: (i)~satisfying witnesses of all its subproofs can be extracted; (ii)~these witnesses open only endorsed rows; and (iii)~the multiset equalities that the model-selection subproofs check with randomized tests hold exactly. The paragraphs below establish them in turn.

\noindent\textbf{Extraction through recursion.}
An argument is knowledge sound if, for every PPT prover that outputs a statement $x$ and a proof $\pi$ accepted for a circuit $\mathsf C$, a PPT extractor with access to the prover outputs a witness $w$ with $\mathsf C(x;w)=1$, except with negligible probability. Our construction requires only this property, which holds in the random-oracle model for Fiat--Shamir-compiled FRI-based arguments such as Plonky2~\cite{block2023fsfri}. Each model-selection or rewrite proof is a tree of constant depth whose parent circuits verify their children against fixed verifier data. Applying knowledge soundness to the root and then to every extracted child proof yields witnesses of all subproofs. These witnesses satisfy their circuits and agree with the values on which the parents constrain them. Constant depth keeps the composed extractor polynomial-time and its failure probability negligible; this composition uses knowledge soundness with auxiliary input~\cite{bitansky2012recursive}. Since a parent circuit evaluates $H_{\mathsf{zk}}$ itself when it verifies a child proof, we assume, as for every recursive argument, that knowledge soundness also holds for these recursive verifier circuits. We call the result the \emph{extracted witness} of an accepted proof.

\noindent\textbf{Binding to the endorsed context.}
The semantic index $\indexset$ and the rewrite banks bound by $\ctx$ are the values whose commitment openings the auditor reviewed (Appendix~\ref{app:discussion}). Given these openings as auxiliary input, an efficient adversary that opens a context commitment to a different digest breaks binding, and one that opens a different row at a position of an endorsed Merkle tree produces a hash collision. Hence, except with negligible probability, every row that an extracted witness opens against $\ctx$ is a row of the endorsed table, and by the one-row-per-key invariant of Appendix~\ref{app:discussion} it is the endorsed row of its key. In particular, an authenticated message digest equals $m(g)$ for some $g\in\gapset$, and an authenticated fragment digest computed from extracted bytes identifies those bytes as a fragment of $\fragset$, because $b(\cdot)$ is collision resistant.

\noindent\textbf{Randomized multiset checks.}
The model-selection subproof circuits check every $\MSetEq$, including those that realize the private lookups of Appendix~\ref{app:committed-lookup}, with the standard randomized grand-product test~\cite{bayer2012efficient,gabizon2019plonk}. Let $\mathbb F$ be the circuit field. Consider a check $j$ between two sequences of at most $n_j$ tuples in $\mathbb F^{w_j}$. For a challenge pair $(\alpha,\gamma)\in\mathbb F^2$, the circuit maps each tuple $t$ to $\phi_{\alpha,\gamma}(t)=\gamma+\sum_{k=0}^{w_j-1}\alpha^kt_k$ and requires the products over the two sequences to be equal. It repeats this test for $r$ challenge pairs, obtained by hashing a domain-separated transcript that fixes both compared sequences, either directly or through values that the circuit determines from it. Let $\mathcal J$ be the set of these checks in the subproofs of one model-selection proof, let $\delta=\max_{j\in\mathcal J}n_j\max\{1,w_j-1\}<|\mathbb F|$, and let $Q$ be the number of random-oracle queries made by the adversary and the extractor. The number of repetitions $r$ is chosen such that $(\delta/|\mathbb F|)^r\leq2^{-\lambda}$. The rewrite subproofs perform no such check.

\begin{lemma}[Soundness of multiset checks]
\label{lem:multiset-error}
Except with probability at most $(Q+|\mathcal J|)(\delta/|\mathbb F|)^r\leq(Q+|\mathcal J|)\cdot2^{-\lambda}$ plus the probability of a random-oracle collision, which is negligible, every multiset equality checked in the extracted subproof witnesses of an accepted model-selection proof holds exactly.
\end{lemma}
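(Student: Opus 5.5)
The plan is to prove Lemma~\ref{lem:multiset-error} in two layers: an algebraic bound for a single challenge pair, and a union bound over the random-oracle queries that could yield challenges. For one check $j$, compare sequences $\mathbf s,\mathbf s'$ of at most $n_j$ tuples in $\mathbb F^{w_j}$. Write $f_{\mathbf s}(X,Y)=\prod_{t\in\mathbf s}\bigl(Y+\sum_k X^k t_k\bigr)$, and define $f_{\mathbf s'}$ likewise. If $\mathbf s\ne_{\mathsf{ms}}\mathbf s'$, then $f_{\mathbf s}\ne f_{\mathbf s'}$ as polynomials. The reason is that $\mathbb F[X,Y]$ is a UFD, each factor is irreducible because it is linear in $Y$, and distinct tuples give distinct, non-associate factors since the map $t\mapsto\sum_kX^kt_k$ is injective. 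Unique factorization then recovers the multiset from the product.

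Next, bound the probability that a random pair $(\alpha,\gamma)$ is a root of the difference. View the difference as a polynomial in $Y$ of degree at most $n_j$ whose coefficients are polynomials in $X$ of degree at most $n_j(w_j-1)$. Some coefficient is nonzero. By Schwartz--Zippel, first over $\alpha$ and then over $\gamma$, one challenge pair passes with probability at most $n_j\max\{1,w_j-1\}/|\mathbb F|\le\delta/|\mathbb F|$. The $\max$ absorbs the case $w_j=1$, and I would tune this bookkeeping so the degree count matches the paper's $\delta$. The $r$ challenge pairs come from independent random-oracle outputs on domain-separated inputs. So for a fixed transcript that fixes an unequal pair of sequences, all $r$ tests pass with probability at most $(\delta/|\mathbb F|)^r$.

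The main obstacle is the adaptivity of Fiat--Shamir: the adversary may try many transcripts and keep one whose challenges happen to work. I would handle this by noting that the challenges are random-oracle outputs on a transcript that fixes both sequences. That transcript fixes them directly, or through circuit-determined values such as commitments or Merkle roots, and these bind the values unless a random-oracle collision occurs. Consequently, each oracle query made by the adversary or the extractor fixes at most one candidate pair of sequences before its answer is revealed. Call a query \emph{bad} if it fixes an unequal pair and its answer passes all $r$ tests; each query is bad with probability at most $(\delta/|\mathbb F|)^r$.

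An accepted proof with a false equality in an extracted witness must use challenges coming from some bad query. That query is either among the $Q$ queries or among the at most $|\mathcal J|$ challenge derivations that the verifier and circuit evaluate themselves. A union bound therefore gives total failure probability at most $(Q+|\mathcal J|)(\delta/|\mathbb F|)^r$, plus the probability of a random-oracle collision that would unbind the fixed sequences. The choice of $r$ then yields the stated $2^{-\lambda}$ factor.

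The extracted witnesses are consistent with the transcript by the extraction-through-recursion paragraph, so the sequences in the witness are exactly the ones the challenges were derived from, and the lemma follows.
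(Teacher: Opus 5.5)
Your proposal follows the paper's proof essentially step for step: unique factorization shows the two grand-product polynomials differ, Schwartz--Zippel bounds a single challenge pair, the $r$ random-oracle pairs are independent, and a union bound runs over the $Q+|\mathcal J|$ transcripts plus a random-oracle collision term. The one slip is the degree count you already flagged: bounding an arbitrary nonzero $X$-coefficient by $n_j(w_j-1)$ and applying Schwartz--Zippel sequentially gives $n_jw_j/|\mathbb F|$, which can exceed $\delta/|\mathbb F|$, so instead apply the total-degree form directly as the paper does (each factor $Y+\sum_kX^kt_k$ has total degree $\max\{1,w_j-1\}$, so the difference has total degree at most $n_j\max\{1,w_j-1\}\le\delta$), or equivalently take the leading $Y$-coefficient in your sequential argument.
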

\begin{proof}
Fix a transcript whose two sequences $L$ and $R$ differ as multisets. Each factor $\phi_{\alpha,\gamma}(t)$ is monic of degree one in $\gamma$, and distinct tuples give distinct factors. By unique factorization, $\prod_{t\in L}\phi_{\alpha,\gamma}(t)-\prod_{t\in R}\phi_{\alpha,\gamma}(t)$ is a nonzero polynomial of total degree at most $\delta$, so by the Schwartz--Zippel lemma~\cite{schwartz1980fast,zippel1979probabilistic} it vanishes at a uniform challenge pair with probability at most $\delta/|\mathbb F|$. The $r$ pairs are uniform and independent random-oracle outputs, so all $r$ tests pass with probability at most $(\delta/|\mathbb F|)^r$. Without a collision, the sequences compared in an accepted witness are those fixed by its transcript. Each such transcript is either queried by the adversary or the extractor, or is evaluated for the first time when the extracted witness is checked, which adds at most $|\mathcal J|$ transcripts. A union bound over these transcripts gives the bound, which is negligible because $Q$ and $|\mathcal J|$ are polynomial in $\lambda$.
\end{proof}

\subsubsection{Proof of Control-plane Integrity}
\label{app:proof-control-plane}

\begin{proof}[Proof of Theorem~\ref{thm:control-plane}]
\noindent\textbf{Completeness.}
Consider first a fresh-selection turn. For an honest supported execution, form $\overline{\dtable}$ by sorting the nodes scored by either traversal by identifier and padding to $\ScoreTableCap$. Every active row opens against $\RefCommit$ and satisfies $\score_u=\score(e_q,v_u)$, so the score-opening subproof accepts with the digest $h_{\dtable}$ of Eq.~\eqref{eq:flat-score-hash}. Given the same table, the distinct nodes consumed by each traversal and the remaining active rows satisfy Eq.~\eqref{eq:subproof-score-table-consistency}. The honest upper-layer trace satisfies the authenticated transitions of Alg.~\ref{alg:hnsw-upper} and outputs its level-$0$ endpoint. By Theorem~\ref{thm:efsearch-proof}, the honest level-$0$ trace satisfies $\cef$, and by Eq.~\eqref{eq:efsearch-topk-output} its top-$k$ set is $F^{\mathsf{topk}}_{\WorkQ}=\mathsf{SS}(e_q,\indexset)$. Eligibility supplies a node $u_{\mathsf{sel}}$ with
\begin{equation}
u_{\mathsf{sel}}\in F^{\mathsf{topk}}_{\WorkQ}
\quad\land\quad
\mathsf{mid}(u_{\mathsf{sel}})=\model,
\label{eq:selected-node-witness}
\end{equation}
which the level-$0$ subproof checks. All three subproofs therefore accept, and completeness of the argument gives an accepting $\pi_{\mathsf{sel}}$. On a pinned turn, an honest serving event has $\model=\model_{\mathsf{pin}}$, so the pinned-model check accepts.

\noindent\textbf{Soundness.}
Consider an accepted fresh-selection turn, and let $w$ be the extracted witness of $\pi_{\mathsf{sel}}$. By Lemma~\ref{lem:multiset-error}, every multiset equality in $w$ holds exactly. By binding to the endorsed context, every reference or graph row opened in $w$ is the endorsed row of its key in $\indexset$. The three subproof witnesses hash to the same $h_{\dtable}$, so by collision resistance they contain the same table $\dtable$. The score-opening witness then gives, for every active row of $\dtable$,
\begin{equation}
(u,v_u,\mathsf{mid}(u))\in\rtable
\quad\land\quad
\score_u=\score(e_q,v_u),
\label{eq:authenticated-score-row}
\end{equation}
where $e_q$ is the query in $x_{\mathsf{sel}}$, which the client derives as $e_q=\encoder(\clientreq)$. In each traversal, Eq.~\eqref{eq:subproof-score-table-consistency} binds every distinct consumed node to its active row of $\dtable$, and the repeated-use checks extend this binding to every use of that node. Both traversals therefore use only the scores in Eq.~\eqref{eq:authenticated-score-row}.

By induction over the upper-layer steps, the stable-maximum and transition conditions of Sec.~\ref{sec:method-control-upper} make the upper-layer witness the execution of Alg.~\ref{alg:hnsw-upper} from $(u_{\mathsf{ent}},L_{\max})$ fixed by $\mathsf{pp}_{\mathsf{sel}}$, with output $\EntryNode$. By Theorem~\ref{thm:efsearch-proof}, the level-$0$ witness is an execution of Alg.~\ref{alg:hnsw-efsearch} from $\EntryNode$, and Eq.~\eqref{eq:efsearch-topk-output} gives $F^{\mathsf{topk}}_{\WorkQ}=\mathsf{SS}(e_q,\indexset)$. The model-membership constraint gives Eq.~\eqref{eq:selected-node-witness}, so
\[
\model\in\{\mathsf{mid}(u):u\in\mathsf{SS}(e_q,\indexset)\}
=\Elig(\clientreq,\indexset),
\]
and $\Sel_{\mathsf{fresh}}(\model;\clientreq,\indexset)$ holds. On a pinned turn, acceptance requires $\model=\model_{\mathsf{pin}}$ for the client-held model of the initial accepted fresh-selection turn, so $\Sel_{\mathsf{pinned}}(\model;\clientreq,\indexset)$ holds.

This reasoning fails only if extraction fails, a multiset check passes on unequal multisets, binding is broken, or a hash collision occurs. Each event has negligible probability, the second by Lemma~\ref{lem:multiset-error}, so by a union bound the claim holds except with negligible probability.
\end{proof}

\subsubsection{Serving-certificate Correctness}

\begin{theorem}[Serving-certificate correctness]
\label{thm:serving-cert}
Under the trust and key-registration model of Secs.~\ref{sec:threat} and~\ref{sec:method-initial}, an EUF-CMA signature scheme, and collision resistance of $H_{\mathsf{sig}}$, the serving certificate satisfies:
\begin{enumerate}[label=(\roman*),itemsep=0pt,topsep=1pt]
\item \uline{Completeness.} Every certificate honestly generated for a serving event is accepted for its matching unused $\rid$.
\item \uline{Soundness.} Except with negligible probability, every accepted certificate for $(\rid,\ReqCommit,Y,\model)$ authenticates an event identified by $\rid$ for which there exist $\receivedreq$ and $\rho_{\mathsf{srv}}$ satisfying
\[
Y=\model(\receivedreq)
\quad\land\quad
\ReqCommit=\Commit(\ReqHash(\receivedreq);\rho_{\mathsf{srv}}).
\]
\end{enumerate}
\end{theorem}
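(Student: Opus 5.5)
Completeness follows by unwinding the honest protocol. For an honest serving event, the router sends $(\rid,\model,\receivedreq,\ReqRand)$ to $\provider_\model$. The provider recomputes $\ReqCommit=\Commit(\ReqHash(\receivedreq);\ReqRand)$, sets $Y=\model(\receivedreq)$ and $h_Y=H_{\mathsf{sig}}(Y)$, and signs $d_{\mathsf{srv}}=H_{\mathsf{sig}}(\rid\|\model\|\ReqCommit\|h_Y)$ under $sk_\model$. The client then runs four checks, each of which passes:
\begin{enumerate}
\item it looks up $pk_\model=\mathsf{Reg}[\model]$, which is the key registered for $sk_\model$ because the registry is authenticated;
\item it recomputes $h_Y$ from the same $Y$;
\item it recomputes the same $d_{\mathsf{srv}}$, since all inputs to $H_{\mathsf{sig}}$ use fixed injective encodings;
\item it verifies the signature, which succeeds by correctness of the signature scheme.
\end{enumerate}
The $\rid$ check also passes, since $\rid$ is the client-issued and still unused identifier.

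For soundness, I would argue by reduction to EUF-CMA. Suppose the client accepts $(\rid,\ReqCommit,Y,\model)$. Let $d^\ast=H_{\mathsf{sig}}(\rid\|\model\|\ReqCommit\|H_{\mathsf{sig}}(Y))$. Acceptance gives $\mathsf{Verify}_{pk_\model}(d^\ast,\sigma_\model)=1$, and $pk_\model$ is the authentic key of the honest provider $\provider_\model$. Because $\provider_\model$ is honest and does not collude with $\router$, the only signatures it produces are those on digests $d_{\mathsf{srv}}$ computed from serving events it actually processed, as in Eq.~\eqref{eq:serving-certificate}. Each such event involves a received tuple $(\rid',\model,R',\rho')$ over the authenticated router--provider channel, and the provider evaluates $Y'=\model(R')$ and $c'=\Commit(\ReqHash(R');\rho')$.

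There are three cases:
\begin{itemize}
\item \textbf{Fresh digest.} If $d^\ast$ was never signed by $\provider_\model$, then an adversary given $pk_\model$ and a signing oracle can simulate the router, clients and the other providers, and output $(d^\ast,\sigma_\model)$. This is a forgery, which happens with negligible probability.
\item \textbf{Digest collision.} If $d^\ast=d'$ for some signed event but $(\rid,\model,\ReqCommit,H_{\mathsf{sig}}(Y))\neq(\rid',\model,c',H_{\mathsf{sig}}(Y'))$, then, since the encoding is injective, this is a collision in $H_{\mathsf{sig}}$.
\item \textbf{Matching inputs.} Otherwise the inputs match, and a second application of collision resistance to $H_{\mathsf{sig}}(Y)=H_{\mathsf{sig}}(Y')$ gives $Y=Y'$.
\end{itemize}
In the last case we obtain $\rid=\rid'$, $Y=\model(R')$ and $\ReqCommit=\Commit(\ReqHash(R');\rho')$. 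Setting $\receivedreq:=R'$ and $\rho_{\mathsf{srv}}:=\rho'$ gives the claim, and the event is identified by $\rid$.

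A union bound over the forgery and the two collision events gives a negligible failure probability. I expect the main obstacle to be modeling the honest provider as the signing oracle. The reduction must simulate every other party with all secret material except $sk_\model$ while querying the oracle exactly on the digests an honest $\provider_\model$ would sign. This needs the registry to be authenticated, so that $pk_\model$ is really $\provider_\model$'s key, and it needs non-collusion, so that the router never obtains signatures on arbitrary digests. I would make explicit that the oracle queries are precisely the honest $d_{\mathsf{srv}}$ values, and that the honest provider's own $\ReqCommit$ recomputation is what guarantees the signed commitment opens to the request it actually served.
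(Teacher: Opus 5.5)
Your proposal is correct and takes essentially the same approach as the paper. Completeness follows by unwinding the honest provider's computation and applying signature correctness, and soundness splits into an EUF-CMA forgery (a digest the honest provider never signed) or a collision of $H_{\mathsf{sig}}$ under the injective encoding; the only difference is that you treat the outer-digest collision and the inner collision $H_{\mathsf{sig}}(Y)=H_{\mathsf{sig}}(Y')$ separately, whereas the paper folds them into a single collision step.
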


\begin{proof}
\noindent\textbf{Completeness.} For an honest event, $\provider_\model$ computes Eq.~\eqref{eq:serving-certificate}. The client recomputes $h_Y$ and $d_{\mathsf{srv}}$, obtains $pk_\model=\mathsf{Reg}[\model]$, and signature correctness gives $\mathsf{Verify}_{pk_\model}(d_{\mathsf{srv}},\sigma_\model)=1$ for the matching $\rid$.

\noindent\textbf{Soundness.} An accepted certificate carries a valid signature under $pk_\model=\mathsf{Reg}[\model]$ on $d_{\mathsf{srv}}=H_{\mathsf{sig}}(\rid\|\model\|\ReqCommit\|h_Y)$ with $h_Y=H_{\mathsf{sig}}(Y)$. The provider of $\model$ is honest: it signs only in a serving event in which it receives some $(\rid',\model,R,\rho)$, computes $c'=\Commit(\ReqHash(R);\rho)$ and $Y'=\model(R)$, and signs $H_{\mathsf{sig}}(\rid'\|\model\|c'\|H_{\mathsf{sig}}(Y'))$. If the provider never signed $d_{\mathsf{srv}}$, the certificate is an EUF-CMA forgery. Otherwise $d_{\mathsf{srv}}$ is the digest of such an event, and since the encoding is injective, $(\rid,\ReqCommit,Y)\neq(\rid',c',Y')$ would give a collision of $H_{\mathsf{sig}}$. Thus the event identified by $\rid$ satisfies the claim with $\receivedreq=R$ and $\rho_{\mathsf{srv}}=\rho$. Forgeries and collisions have negligible probability, so the claim holds except with negligible probability.
\end{proof}

\subsubsection{Rewrite-proof Correctness}

\begin{theorem}[Rewrite-proof correctness]
\label{thm:rewrite-proof}
Under collision resistance of $H_{\mathsf{zk}}$, computational binding of the commitments, and completeness and knowledge soundness of the argument system, the rewrite proof satisfies:
\begin{enumerate}[label=(\roman*),itemsep=0pt,topsep=1pt]
\item \uline{Completeness.} For every endorsed rewrite within the public bounds, an honestly generated $\pi_{\mathsf{rw}}$ is accepted.
\item \uline{Soundness.} For every PPT adversary, there exists a PPT extractor with access to the adversary and the endorsed banks such that, except with negligible probability, whenever $\pi_{\mathsf{rw}}$ is accepted for $x_{\mathsf{rw}}$, the extractor outputs a request $R_{\mathsf{rw}}\in\trans(\clientreq)$ and a randomizer $\rho_{\mathsf{rw}}$ with
\[
\ReqCommit=\Commit(\ReqHash(R_{\mathsf{rw}});\rho_{\mathsf{rw}}).
\]
\end{enumerate}
\end{theorem}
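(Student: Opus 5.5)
Completeness and soundness are argued separately, following the structure already used for Theorem~\ref{thm:control-plane}.

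\textbf{Completeness.} Take an endorsed rewrite $\receivedreq\in\trans(\clientreq)$ within the public bounds. Choose a consecutive partition of $[N_{\mathsf{cmsg}}]$ into at most $N_{\mathsf{leaf}}$ nonempty ranges $I_a$. For each active slot, the honest prover builds three objects:
\begin{itemize}[leftmargin=1.25em,itemsep=0pt,topsep=1pt]
\item the layout witness $\alpha_a$ from the decomposition in Eqs.~\eqref{eq:gformat}--\eqref{eq:fformat};
\item the digest sequence $\mu^{(a)}$ in the prescribed order;
\item the Merkle openings of the selected message digests $m(g)$ and fragment digests $b(f)$ in the endorsed digest tables.
\end{itemize}
It then sets $s^{(a)}=\ReqFold(s^{(a-1)};\mu^{(a)})$ starting from $s^{(0)}=\mathsf{IV}$, and fills the disabled slots with no-ops. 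By construction of $\ReqHash$ in Eq.~\eqref{eq:structured-request-hash}, concatenating the $\mu^{(a)}$ yields $\mu_{\receivedreq}$. The chained fold therefore equals $\ReqFold(\mathsf{IV};\mu_{\receivedreq})$, and the summed count is $|\receivedreq|$. Hence $d_{\mathsf{rw}}=\ReqHash(\receivedreq)$, and $\ReqCommit=\Commit(d_{\mathsf{rw}};\ReqRand)$ holds. Every leaf relation Eq.~\eqref{eq:rewrite-leaf-relation} and the aggregation relation are satisfied. Completeness of the argument system, applied leaf by leaf and then to the tree, gives an accepted $\pi_{\mathsf{rw}}$.

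\textbf{Soundness.} The extractor runs recursive extraction from Appendix~\ref{app:security-extraction} on the constant-depth aggregation tree. This yields an aggregation witness ($\ReqRand$, the fold boundaries, the counts, and the ranges) and leaf witnesses $(\mu^{(a)},\alpha_a,\mathcal O_a)$ consistent with the parents' constraints. No randomized multiset checks are involved, so Lemma~\ref{lem:multiset-error} is not needed. Binding to the endorsed context gives the membership facts. Each authenticated message digest equals $m(g)$ for some $g\in\gapset$. Each fragment's recomputed $b(f)$ lies in the fragment table, and collision resistance of $b(\cdot)$ identifies its bytes as an element of $\fragset$.

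The extractor then assembles $R_{\mathsf{rw}}$ as follows. It concatenates, in leaf order, the selected bank messages $g$ and the wrapped messages $\mathsf{wrap}(x_i)$ built from the extracted fragment bytes and the public $x_i$. $\mathsf{DigestLayout}$ and the consecutive-partition check ensure that this sequence has exactly the shape of Eqs.~\eqref{eq:gformat}--\eqref{eq:fformat}:
\begin{itemize}[leftmargin=1.25em,itemsep=0pt,topsep=1pt]
\item client roles and contents are preserved in order;
\item gap sizes are at most $\MsgBound$ and fragment counts at most $\FragBound$;
\item $G_{N_{\mathsf{cmsg}}}$ appears only in the last range.
\end{itemize}
Hence $R_{\mathsf{rw}}\in\trans(\clientreq)$. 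The chained folds and the count sum give $d_{\mathsf{rw}}=H_{\mathsf{zk}}(\mathsf{tag}_{\mathsf{req\text{-}final}}\|N_{R}\|\ReqFold(\mathsf{IV};\mu_{R_{\mathsf{rw}}}))=\ReqHash(R_{\mathsf{rw}})$. Outputting $\rho_{\mathsf{rw}}=\ReqRand$ then gives the required commitment equation.

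\textbf{Main obstacle.} The delicate step is justifying that the message digest $\mu$ placed in the fold really is $m(y)$ of a concrete message. For bank messages the circuit never sees bytes, only the authenticated digest. To recover bytes, the extractor must invert $m$ on endorsed entries, which it can do because it has access to the endorsed banks: it looks up the preimage $g$. Uniqueness of that preimage uses collision resistance of $m$ from Lemma~\ref{lem:reqhash-cr}.

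A second subtlety is the disabled slots, which contain duplicate proofs. These must be shown to contribute nothing. This follows because the aggregation circuit forces unchanged boundaries and zero counts for those slots, regardless of what the duplicate proves.

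The total failure probability is then a union bound over three events, each negligible: extraction failure, binding failure or a Merkle collision, and an $H_{\mathsf{zk}}$ collision.
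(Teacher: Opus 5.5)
Your proposal is correct and follows the paper's proof essentially step for step. Completeness comes from the honest witness plus argument completeness. Soundness runs recursive extraction, uses binding to the endorsed banks, replaces each authenticated digest by its bank preimage, applies $\mathsf{DigestLayout}$ to get $R_{\mathsf{rw}}\in\trans(\clientreq)$, and uses the chained folds and count sum to get $d_{\mathsf{rw}}=\ReqHash(R_{\mathsf{rw}})$ with $\rho_{\mathsf{rw}}=\ReqRand$. The uniqueness of the bank preimage that you invoke is harmless but unnecessary, since any preimage has the same digest and hence yields the same $\ReqHash(R_{\mathsf{rw}})$.
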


\begin{proof}
\noindent\textbf{Completeness.} Fix $\receivedreq\in\trans(\clientreq)$ within the public bounds. Membership supplies the gaps, fragments and bank openings. Partition $[N_{\mathsf{cmsg}}]$ into consecutive ranges over a nonempty prefix of the leaf slots, assign to each active leaf the digest subsequence $\mu^{(a)}$ of its range in the order of Eq.~\eqref{eq:gformat}, and set $N^{\mathsf{dig}}_a=|\mu^{(a)}|$ and $s^{(a)}=\ReqFold(\mathsf{IV};\mu^{(1)}\mathbin\Vert\cdots\mathbin\Vert\mu^{(a)})$. Every leaf relation then holds, $N^{\mathsf{dig}}_{\mathsf{rw}}=|\receivedreq|$, and Eq.~\eqref{eq:structured-request-hash} gives $d_{\mathsf{rw}}=\ReqHash(\receivedreq)$. The aggregation relation therefore holds for $\ReqCommit=\Commit(d_{\mathsf{rw}};\ReqRand)$, and completeness of the argument yields an accepting $\pi_{\mathsf{rw}}$.

\noindent\textbf{Soundness.} Let $w$ be the extracted witness of an accepted $\pi_{\mathsf{rw}}$. The aggregation constraints make the active leaf ranges partition $[N_{\mathsf{cmsg}}]$ in order, give each active leaf the messages $\clientreq|_{I_a}$ of the public $\clientreq$, chain the leaf fold values from $s^{(0)}=\mathsf{IV}$, and set $N^{\mathsf{dig}}_{\mathsf{rw}}=\sum_aN^{\mathsf{dig}}_a$. By binding to the endorsed context, every authenticated message digest in $w$ is $m(g)$ for some $g\in\gapset$, and every authenticated fragment is a fragment of $\fragset$. Replace each message digest by its message and concatenate the messages of the active leaves in range order. By $\mathsf{DigestLayout}$, the result $R_{\mathsf{rw}}$ has the form of Eqs.~\eqref{eq:gformat} and~\eqref{eq:fformat}, with at most $\MsgBound$ bank messages per gap, at most $\FragBound$ bank fragments on each side of every client message, and the trailing gap exactly once. Hence $R_{\mathsf{rw}}\in\trans(\clientreq)$.

The concatenated leaf sequences $\mu^{(1)}\mathbin\Vert\cdots\mathbin\Vert\mu^{(N_{\mathsf{leaf}})}$ are exactly $\mu_{R_{\mathsf{rw}}}$, so the chained folds give $s^{(N_{\mathsf{leaf}})}=\ReqFold(\mathsf{IV};\mu_{R_{\mathsf{rw}}})$ and $N^{\mathsf{dig}}_{\mathsf{rw}}=|R_{\mathsf{rw}}|$. The root equations then give $d_{\mathsf{rw}}=\ReqHash(R_{\mathsf{rw}})$ and $\ReqCommit=\Commit(\ReqHash(R_{\mathsf{rw}});\rho_{\mathsf{rw}})$ with $\rho_{\mathsf{rw}}=\ReqRand$ from $w$. This reasoning fails only if extraction fails, binding is broken, or a hash collision occurs. Each event has negligible probability, so by a union bound the claim holds except with negligible probability.
\end{proof}

\subsubsection{Proof of Data-plane Integrity}
\label{app:proof-data-plane}

\begin{proof}[Proof of Theorem~\ref{thm:data-plane}]
\noindent\textbf{Completeness.} For an honest endorsed rewrite, the router samples fresh $\ReqRand$ and sets $\ReqCommit=\Commit(\ReqHash(\receivedreq);\ReqRand)$. The honest provider processes $\receivedreq$ and constructs Eq.~\eqref{eq:serving-certificate}, which is accepted by Theorem~\ref{thm:serving-cert}. Since the same $\receivedreq$ lies in $\trans(\clientreq)$, Theorem~\ref{thm:rewrite-proof} gives an accepted $\pi_{\mathsf{rw}}$ under the same $\ReqCommit$.

\noindent\textbf{Soundness.} Suppose the client accepts $\mathsf{SC}$ and $\pi_{\mathsf{rw}}$ under the same $\ReqCommit$. By Theorem~\ref{thm:serving-cert}, the event identified by $\rid$ has a request $R_{\mathsf{srv}}$, the request actually received by $\model$, and a randomizer $\rho_{\mathsf{srv}}$ with $Y=\model(R_{\mathsf{srv}})$. By Theorem~\ref{thm:rewrite-proof}, the extractor outputs $R_{\mathsf{rw}}\in\trans(\clientreq)$ and $\rho_{\mathsf{rw}}$. Both pairs open $\ReqCommit$:
\[
\begin{aligned}
\ReqCommit&=\Commit(\ReqHash(R_{\mathsf{srv}});\rho_{\mathsf{srv}})\\
&=\Commit(\ReqHash(R_{\mathsf{rw}});\rho_{\mathsf{rw}}).
\end{aligned}
\]
If $\ReqHash(R_{\mathsf{srv}})\neq\ReqHash(R_{\mathsf{rw}})$, these are two openings of $\ReqCommit$ to different digests, which breaks binding. If the digests agree but $R_{\mathsf{srv}}\neq R_{\mathsf{rw}}$, they form a collision of $\ReqHash$, which by Lemma~\ref{lem:reqhash-cr} yields a collision of $H_{\mathsf{zk}}$. Otherwise $R_{\mathsf{srv}}=R_{\mathsf{rw}}\in\trans(\clientreq)$ and $Y=\model(R_{\mathsf{srv}})$. This reasoning fails only in the failure events of Theorems~\ref{thm:serving-cert} and~\ref{thm:rewrite-proof}, a binding violation, or a collision. Each event has negligible probability, so by a union bound the claim holds except with negligible probability.
\end{proof}

\subsubsection{Proof of End-to-end Security}
\label{sec:method-e2e}

\begin{proof}[Proof of Theorem~\ref{thm:verirouter-security}]
\noindent\textbf{Completeness.}
Consider a supported honest event. On a fresh-selection turn, Theorems~\ref{thm:control-plane} and~\ref{thm:data-plane} give an accepting $\pi_{\mathsf{sel}}$ and an accepted $(\pi_{\mathsf{rw}},\mathsf{SC})$. The protocol in Fig.~\ref{fig:zk-verirouter-e2e-protocol} uses the same $\model$ in $\pi_{\mathsf{sel}}$ and $\mathsf{Reg}[\model]$, the same $\ReqCommit$ in $\pi_{\mathsf{rw}}$ and $\mathsf{SC}$, and the client-issued $\rid$ in the certificate, so every client check accepts. On a pinned turn, the honest router reuses $\model_{\mathsf{pin}}$, so the pinned-model check accepts, and Theorem~\ref{thm:data-plane} supplies an accepted rewrite proof and serving certificate.

\noindent\textbf{Soundness.}
Suppose the client accepts a turn. On a fresh-selection turn, Theorem~\ref{thm:control-plane} gives the first conjunct of Eq.~\eqref{eq:verification-target}; on a pinned turn, acceptance requires $\model=\model_{\mathsf{pin}}$, which is that conjunct. Theorem~\ref{thm:data-plane} gives one $\receivedreq$ satisfying the other two conjuncts. The client verifies all statements with the same $\model$, obtains $pk_\model=\mathsf{Reg}[\model]$, and checks the certificate against the current $\rid$, so the three conjuncts hold for the same serving event. This reasoning fails only in the failure events of Theorems~\ref{thm:control-plane} and~\ref{thm:data-plane}. Each event has negligible probability, so by a union bound the claim holds except with negligible probability.

\noindent\textbf{Confidentiality.}
Assume adaptive multi-theorem zero knowledge for the root proofs. The simulator receives the public data: the context $\ctx$, the event data $(\rid,\clientreq,\model,Y)$, the selection mode, $\ReqCommit$, and the proof shapes determined by the public capacities. It replaces the fresh-turn selection proof and then the rewrite proof with simulated proofs for their true public statements, keeping the commitments and the serving certificate. Zero knowledge makes each replacement computationally indistinguishable. A pinned turn has no new $\pi_{\mathsf{sel}}$. The score-table digest $h_{\dtable}$ and the model-request message count $N^{\mathsf{dig}}_{\mathsf{rw}}$ are witness values of recursive aggregation, so the simulation also hides them.

Computational hiding of the commitments in Appendix~\ref{app:committed-lookup} hides the semantic index and rewrite banks, and fresh $\ReqRand$ makes $\ReqCommit$ hide $\ReqHash(\receivedreq)$. Apart from the public response $Y$, the serving certificate depends on $\receivedreq$ only through $\ReqCommit$, since it signs $H_{\mathsf{sig}}(\rid\|\model\|\ReqCommit\|h_Y)$. Hence the returned cryptographic evidence reveals no additional information about the router's private data beyond the public data and the fact that the accepted relations hold. This analysis concerns the returned evidence under the accepted $\ctx$; it does not cover inference from deployment-level choices among public capacity configurations or from system-level observables such as proof-generation latency.
\end{proof}
}

\subsection{Endorsement and Confidentiality Boundaries}
\label{app:discussion}
\noindent\textbf{Policy endorsement and context evolution.}
\toolname proves conformance to the specifications committed in the public context $\ctx$. Determining whether these specifications define an acceptable policy, and how they are audited, is outside our scope. One simple realization uses an independent auditor with access to the private semantic index and rewrite banks to validate policy-specific requirements, such as eligible-model assignments and the absence of disallowed sponsored content~\cite{schnabl2025attestable}. The auditor also verifies that the reviewed specifications and their commitment openings reproduce the commitments in $\ctx$, then endorses the context by signing $H_{\mathsf{sig}}(\ctx)$. This review checks the reference table's dimensions and 16-bit coordinate bounds, that node identifiers are unique, that each node and layer has exactly one graph row, and that every graph source and neighbor belongs to the reference table. It also approves the circuit profile and verifier keys used by the client. Such signed endorsements can be recorded in an append-only transparency log, following established designs in certificate, key, and software transparency~\cite{laurie2014certificate}. When the semantic index, rewrite banks, or retrieval content changes, a new context can be audited, endorsed, and appended to the log. Clients can verify the auditor's signature and log inclusion, while consistency proofs bind successive log checkpoints to an append-only history. This mechanism remains orthogonal to \toolname. The external auditor determines which contexts are endorsed, while \toolname efficiently proves that each serving event conforms to the context accepted by the client.

\noindent\textbf{Confidentiality boundaries.}
The confidentiality guarantee of \toolname concerns the returned cryptographic evidence under the accepted public context $\ctx$. Zero knowledge ensures that the model-selection and rewrite proofs reveal no additional information about their private witnesses beyond their public statements and the fact that the certified relations hold. Public statements and parameters can nevertheless expose coarse information. In particular, $\mathsf{pp}_{\mathsf{sel}}$ fixes padded capacities such as the score-table and traversal bounds. If a deployment chooses among multiple public capacity buckets, the selected bucket reveals a coarse bound on the corresponding private workload. Proof-generation latency may provide an additional signal when it correlates with the selected capacity. Using a single fixed capacity eliminates leakage due to capacity selection at additional proving cost, while a small set of buckets provides an explicit efficiency--leakage tradeoff. {Similarly, $\mathsf{pp}_{\mathsf{rw}}$ fixes the bank lengths used to range-check lookup indices; padding each bank to a public capacity hides its exact size.}

Beyond such capacity-related leakage, system-level observations may reveal additional information that is outside the cryptographic guarantee. Visible model identities, responses, and adaptively repeated queries may enable inference about the underlying routing policy. Such inference is complementary to \toolname and can be studied or mitigated separately~\cite{tramer2016stealing}. Likewise, \toolname does not hide the client's request from the router. Confidential-request processing can be provided by complementary techniques~\cite{lu2023bumblebee,akimoto2023privformer,chowdhury2025pr,huang2026your}.

\subsection{Fidelity of the Quantized Routing}
\label{app:quantized-fidelity}

\toolname's cryptographic guarantee is with respect to the committed 16-bit affine-quantized semantic index under its public search configuration. We separately evaluate whether choosing that quantized index materially changes the outputs of a floating-point semantic search. We compare floating-point and quantized semantic search result sets under three label spaces: retrieved node identifiers (IDs), model IDs, and model-tier IDs. Following RouteLLM~\cite{ong2025routellm}, a tier ID groups concrete models into a coarser routing tier. For each query and label space, let $A_k$ and $B_k$ be the sets obtained by projecting the floating-point and quantized top-$k$ outputs, respectively. We define overlap as $|A_k\cap B_k|/\min(|A_k|,|B_k|)$ and Jaccard similarity as $|A_k\cap B_k|/|A_k\cup B_k|$, and report their averages over queries. Floating-point (FP) Top-1 Hit@$k$ is the fraction of queries for which the floating-point top-ranked node's label appears among the quantized top-$k$ labels.

Tab.~\ref{tab:rq2-fidelity} shows high agreement between quantized and floating-point semantic search across all three label spaces. Even at the finest node-ID level, overlap remains between 89.11\% and 90.25\%. More importantly for the verification target, the eligible model set is highly stable: model-ID overlap ranges from 91.62\% to 95.78\%, and the model label of the floating-point top-ranked node appears among the quantized top-$k$ labels for $92.16\%$--$96.21\%$ of queries. Agreement is stronger at the tier level, where overlap reaches 95.91\%--99.52\% and FP Top-1 Hit@$k$ reaches 97.78\%--100.00\%. Thus, 16-bit quantization primarily perturbs individual retrieved references while largely preserving the eligible models associated with them.

\input{tables/fidelity}

%% file: tables/fidelity.tex
\begin{table}[H]
\centering
\small
\setlength{\tabcolsep}{4.4pt}
\caption{Routing-quality comparison of quantized and floating-point HNSW. \normalfont We compare result sets after projection to node IDs, model IDs, and tier IDs; certification and deployment both use the quantized index. The results show high agreement across all label spaces.}
\label{tab:rq2-fidelity}
\resizebox{0.85\columnwidth}{!}{%
\begin{tabular}{llcccc}
\toprule
\textbf{Label space} & \textbf{Metric} & \textbf{Top-5} & \textbf{Top-10} & \textbf{Top-20} & \textbf{Top-50} \\
\midrule
\multirow{3}{*}{Node ID} & Overlap & 89.71\% & 90.12\% & 90.25\% & 89.11\% \\
 & Jaccard & 87.95\% & 87.97\% & 87.83\% & 85.10\% \\
 & {FP Top-1 Hit@$k$} & 91.64\% & 91.70\% & 93.21\% & 93.21\% \\
\midrule
\multirow{3}{*}{Model ID} & Overlap & 91.62\% & 92.14\% & 94.80\% & 95.78\% \\
 & Jaccard & 89.70\% & 89.72\% & 92.16\% & 92.74\% \\
 & {FP Top-1 Hit@$k$} & 92.16\% & 93.01\% & 95.69\% & 96.21\% \\
\midrule
\multirow{3}{*}{Tier ID} & Overlap & 95.91\% & 98.17\% & 98.42\% & 99.52\% \\
 & Jaccard & 94.17\% & 97.37\% & 97.22\% & 97.93\% \\
 & {FP Top-1 Hit@$k$} & 97.78\% & 99.74\% & 99.93\% & 100.00\% \\
\bottomrule
\end{tabular}%
}
\end{table}